\documentclass[10pt,a4paper]{article} 

\usepackage{jheppub} 
\usepackage{natbib}

\usepackage[utf8]{inputenc}
\usepackage[english]{babel}
\usepackage{microtype} 
\usepackage{tikzsymbols}
\usetikzlibrary{external}
\usepackage{svg}
\svgpath{{figures/}}
\usepackage{amsmath,amssymb,amsfonts,amsthm,mathtools,bm}

\newtheorem{remark}{Remark}
\newtheorem{theorem}{Theorem}
\newtheorem*{theorem*}{Theorem}

\newtheorem{lemma}{Lemma}

\allowdisplaybreaks 

\usepackage{graphicx}  
\usepackage{xcolor}    
\usepackage{float}     
\usepackage{booktabs}
\usepackage{longtable}
\usepackage{enumitem}
\usepackage{tikz}
\usepackage{pgfplots}
\usepackage{svg}
\usepackage{subcaption}
\usepackage[dvipsnames]{xcolor}
\usepackage[ruled,vlined]{algorithm2e}
\usepackage[noend]{algpseudocode}

\usepackage{setspace}
\title{A Federated Many-to-One Hopfield model for associative Neural Networks}

\author[a,1,2]{Andrea Alessandrelli}%
\author[a]{, Fabrizio Durante}%
\author[b, 2]{, Andrea Ladiana}%
\author[c, 2]{, Andrea Lepre}%

\affiliation[a]{Dipartimento di Matematica e Fisica, Università del Salento, Italy}
\affiliation[b]{Dipartimento di Scienze di Base e Applicazioni all'Ingegneria, Sapienza Università di Roma, Italy}
\affiliation[c]{Dipartimento di Matematica, Sapienza Università di Roma, Italy}
\affiliation[1]{Istituto Nazionale di Fisica Nucleare (INFN), Sezione di Lecce, Italy}
\affiliation[2]{Istituto Nazionale di Alta Matematica Francesco Severi (INdAM), Roma, Italy}

\abstract{
Federated learning enables collaborative training without sharing raw data, but struggles under client heterogeneity and streaming distribution shifts, where drift and novel data can impair convergence and cause forgetting. We propose a federated associative-memory framework that learns shared archetypes in heterogeneous, continual settings, where client data are independent but not necessarily balanced. Each client encodes its experience as a low-rank Hebbian operator, sent to a central server for aggregation and factorization into global archetypes. This approach preserves privacy, avoids centralized replay buffers, and is robust to small, noisy, or evolving datasets.
We cast aggregation as a low-rank-plus-noise spectral inference problem, deriving theoretical thresholds for detectability and retrieval robustness. An entropy-based controller balances stability and plasticity in streaming regimes. Experiments with heterogeneous clients, drift, and novelty show improved global archetype reconstruction and associative retrieval, supporting the spectral view of federated consolidation.
}

\keywords{Federated Learning; Hetero-associative Memory; Continual Learning; Stability–Plasticity; Random Matrix Theory; Spectral Learning; Archetypal Representation}

\begin{document}

\maketitle


\section{Introduction}
\label{sec:introduction}

Federated Learning (FL) enables collaborative learning across multiple data owners without centralizing raw data, by iterating local updates and server-side aggregation.
This paradigm is attractive in privacy- and governance-constrained domains, but it is notoriously challenged by \emph{statistical heterogeneity}: client data can be reasonably modeled as independent across clients but, within a client, they are often scarce and unbalanced.
Such heterogeneity may induce biased or high-variance updates, slows convergence, and can yield unstable global models \citep{mcmahan2017fedavg,kairouz2021advances,hsu2019measuring,li2020fedprox}.
These issues are exacerbated in streaming or open-world settings, where new patterns appear over time and forgetting becomes a primary failure mode.

A complementary line of work studies \emph{memory systems} and continual adaptation.
In neuroscience-inspired accounts such as Complementary Learning Systems (CLS), fast episodic storage and slow integrative learning jointly support generalization under distribution shift \citep{o2014complementary,kumaran2016learning}.
In machine learning, continual learning methods seek to mitigate catastrophic forgetting by regulating the stability--plasticity trade-off (e.g.\ via regularization or replay), yet most approaches assume centralized access to data or to curated buffers.
In FL, the stability--plasticity dilemma becomes \emph{distributed}: even under independence, heterogeneous client marginals and local streaming dynamics interact with drift and novelty, while privacy constraints limit global rehearsal.

In this context, \emph{associative memories} provide a principled abstraction for storing and retrieving prototypical patterns from partial or noisy cues.
Classical Hopfield-type models define content-addressable retrieval as energy minimization, with well-studied capacity and robustness properties \citep{hopfield1982Neural,Amit1985SpinGlass,personnaz1985information}.
Beyond their original formulation, modern Hopfield networks have been linked to attention-like updates and large-capacity retrieval rules, renewing interest in associative mechanisms as building blocks for representation learning \citep{ramsauer2020hopfield}.
However, deploying associative memories in federated regimes raises new questions: which information should be stored locally, what should be aggregated, and how can a server infer \emph{global archetypes} from heterogeneous (yet independent) local memories without accessing raw samples?

This work addresses these questions by proposing a \emph{federated many-to-one associative memory} that learns and maintains a shared set of archetypes under heterogeneous, streaming (i.n.i.d.) regimes.
Each client compresses its local experience into a low-rank Hebbian operator, while the server aggregates these operators and factorizes the resulting global memory to infer archetypes.
Our approach is motivated by two observations.
First, storing \emph{operators} instead of data enables privacy-preserving communication of sufficient statistics that remain meaningful even when local datasets are small or drifting.
Second, archetype discovery can be posed as a spectral inference problem: the aggregated operator concentrates around a low-rank signal plus noise under independence assumptions, so random-matrix tools can predict when global archetypes become detectable and stably retrievable \citep{baik2005phase,benaych2011eigenvalues}.

We summarize our main contributions as follows:
\begin{itemize}
    \item \textbf{Model.} We introduce a federated associative-memory framework where clients transmit low-rank Hebbian operators and the server reconstructs a global archetype set via a scalable factorization procedure (Sec.~\ref{sec:setting}).
    \item \textbf{Theoretical findings.} We provide a random-matrix characterization of the aggregation-and-factorization step, highlighting detectability thresholds and retrieval robustness as functions of heterogeneity, client quality, and sample size. Moreover, we propose an entropy-based controller to balance stability and plasticity in streaming conditions, regulating when to incorporate novel patterns versus consolidating existing archetypes (Sec.~\ref{sec:theory}).
    \item \textbf{Numerical findings.} We evaluate the pipeline under client drift and novelty, demonstrating improved global retrieval and archetype reconstruction on structured datasets (Sec.~\ref{sec:numerics}).
\end{itemize}

The remainder of the paper is organized as follows:
we formalize the model, the federated setting and we describe the implementation and reconstruction pipeline in Sec.~\ref{sec:setting};
we develop the theoretical analysis in Sec.~\ref{sec:theory};
and we present experiments and ablations in Sec.~\ref{sec:numerics}.

\section{Problem setting and notation}
\label{sec:setting}

We now introduce the problem setting and notation, and formalize the federated framework studied in this work. Our goal is to exploit federated interactions among multiple associative memories to reconstruct latent \emph{archetypes} from noisy observations, in the realistic regime where data are partitioned across clients according to heterogeneous (non-i.i.d.) distributions. Before presenting the federated pipeline, we recall the basic building blocks---Hopfield-type associative memories and their heteroassociative multilayer extension---and we specify the generative model used throughout.

\subsection{Associative and heteroassociative neural networks}
Central to our method are Hopfield-like associative memories. We briefly recall the classical formulation, which will serve as a local learner at each client, and then introduce the multilayer associative memory (LAM) architecture used by the server.

\paragraph{Hopfield associative memory (unsupervised).}
Consider a system of $N$ binary neurons with configuration $\boldsymbol{\sigma}=(\sigma_1,\ldots,\sigma_N)\in\{-1,+1\}^N$. Neurons interact pairwise through a synaptic matrix $\bm J^{\mathrm{Hebb}}=\{J^{\mathrm{Hebb}}_{ij}\}_{i,j=1}^N$, and the energy (Hamiltonian) of a configuration is
\begin{equation}
\label{eq:hamiltonian}
\mathcal{H}(\boldsymbol{\sigma}\mid \boldsymbol{\xi})
= - \sum_{i,j=1}^N J^{\mathrm{Hebb}}_{ij}\,\sigma_i\sigma_j .
\end{equation}
Low-energy states correspond to internally consistent configurations, and standard Hopfield dynamics tends to drive the system toward local minima of~\eqref{eq:hamiltonian}.

The couplings are designed to store a set of $K\ge 2$ binary patterns $\{\boldsymbol{\xi}^\mu\}_{\mu=1}^K$, with $\boldsymbol{\xi}^\mu=(\xi_1^\mu,\ldots,\xi_N^\mu)\in\{-1,+1\}^N$. Under Hebb's prescription,
\begin{equation}
\label{eq:hebb}
J^{\mathrm{Hebb}}_{ij}
=\frac{1}{N}\sum_{\mu=1}^K \xi_i^\mu \xi_j^\mu.
\end{equation}
Throughout, we assume the stored patterns are i.i.d.\ Rademacher random variables, i.e.
\begin{equation}
\label{eq:rademacher}
\mathbb{P}(\xi_i^\mu=+1)=\mathbb{P}(\xi_i^\mu=-1)=\frac{1}{2},
\qquad i=1,\ldots,N,\ \mu=1,\ldots,K .
\end{equation}
In appropriate regimes of $K$ versus $N$, each $\boldsymbol{\xi}^\mu$ becomes (meta)stable in the energy landscape. Consequently, initializing the system from a sufficiently close corrupted version of $\boldsymbol{\xi}^\mu$ typically leads the dynamics to converge back to the corresponding prototype. This retrieval property is the hallmark of Hopfield networks as canonical models of associative memory (see Fig.~\ref{fig:hopfield-toy-model} for an illustrative sketch).

\begin{figure}[t!]
    \centering
    \includegraphics[width=8cm]{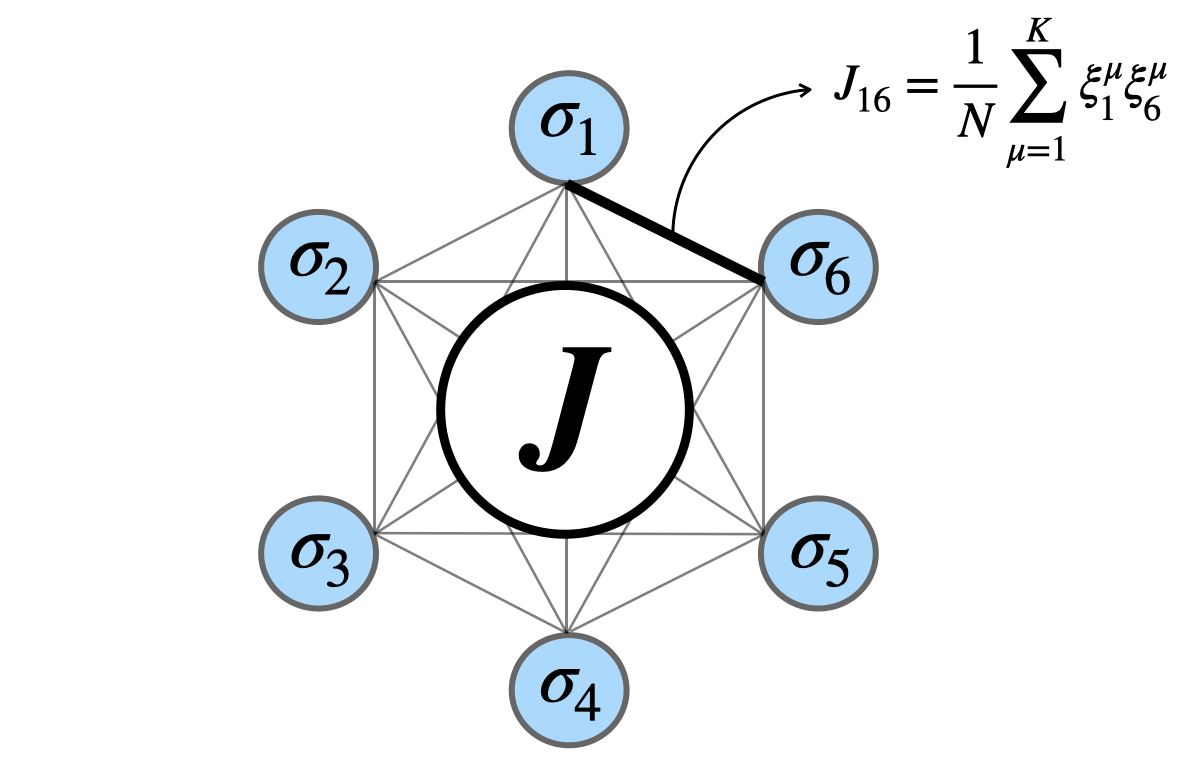}
    \caption{Illustration of a toy Hopfield model with six neurons. It is a fully connected graph where all the edges are determined by the Hebbian rule encoded in the $\bm{J}$ matrix. In particular the figure shows the connection between $\sigma_1$ and $\sigma_6$, which is given, as in Eq.~\eqref{eq:hebb}, by $N^{-1}\sum_{\mu=1}^K\xi^\mu_1\xi^\mu_6$.}
    \label{fig:hopfield-toy-model}
\end{figure}

\paragraph{Archetypes, examples, and client-side learning.}
In the federated setting, we refer to the latent patterns $\{\boldsymbol{\xi}^\mu\}_{\mu=1}^K$ as \emph{archetypes}, namely prototypical data-generating anchors. Each client observes only \emph{examples}, i.e.\ noisy and partial realizations of the archetypes, and aims to learn from its local (unlabeled) dataset.

The examples are supposed to be generated by a signal-plus-noise scenario, that is by selecting an archetype index $\mu$ and, then, by adding a corrupted realization $\boldsymbol{\eta}\in\{-1,+1\}^N$ via independent bit flips:
\begin{equation}
\label{eq:gen}
\mathbb{P}\!\left(\eta_i=\xi^\mu_i \mid \mu\right)=\tfrac{1+r}{2},
\qquad
\mathbb{P}\!\left(\eta_i=-\xi^\mu_i \mid \mu\right)=\tfrac{1-r}{2},
\end{equation}
where $r\in(0,1]$ quantifies the \emph{dataset quality}. As $r\to 1$, examples coincide with their generating archetype; as $r\to 0$, examples become essentially uninformative (nearly unpredictable from the archetypes).

Furthermore,  we consider $L$ clients. Client $c$ stores $M_c$ examples and locally builds a Hopfield-like synaptic matrix using the Hebbian rule
\begin{equation}
\label{eq:hebb_examples}
\big(J_c\big)_{ij}
=\frac{1}{N\,M_c}\sum_{a=1}^{M_c}\big(\eta_{c}\big)^a_i\,\big(\eta_{c}\big)^a_j,
\qquad c=1,\ldots,L,
\end{equation}
where the total dataset size is $M=\sum_{c=1}^L M_c$. Even though the data are assumed to be mutually independent, their random allocation across clients implies that a given client may not have access to all archetypes, but only to a strict subset of them.

Clients do not communicate directly. Coordination is mediated solely by a central server responsible for aggregating synaptic-level information and reconstructing the underlying archetypes. To this end, the server is equipped with a heteroassociative generalization of Hopfield networks, namely the \emph{$L$-layer Associative Memory} (LAM) model introduced in~\cite{agliari2025networks}. A key feature of LAM is that it can operate directly at the synaptic level, enabling reconstruction from interaction matrices alone~\cite{Agliari2025MultiChannel}.

\begin{figure}[t!]
    \centering
    \includegraphics[width=14cm]{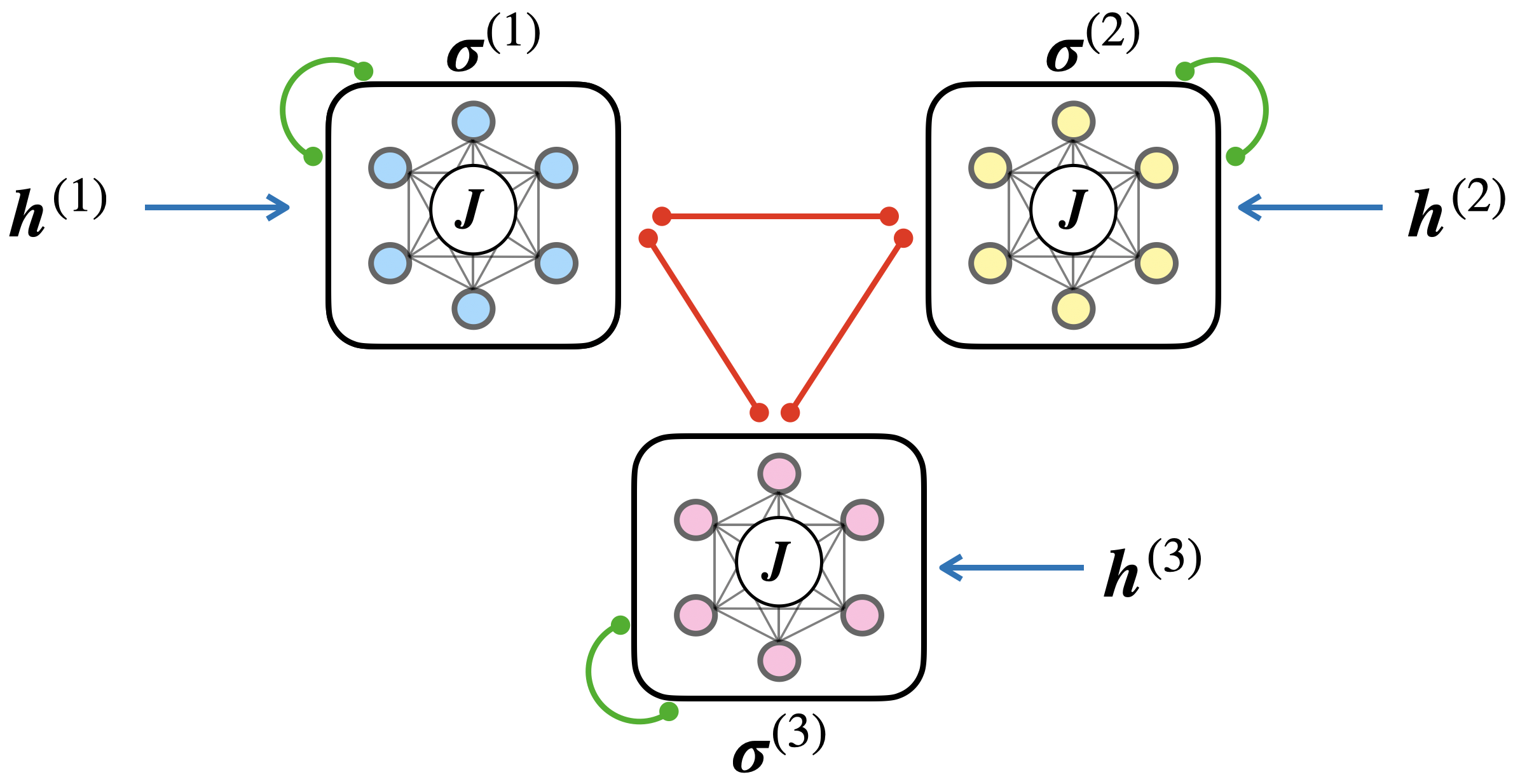}
    \caption{Schematic illustration of the model for $L=3$ layers. Each layer $a\in \{1,2,3\}$ is a Hopfield network (with state $\boldsymbol{\sigma}^{(a)}$) and all layers share the same synaptic coupling matrix. The three contributions to the Hamiltonian in~\eqref{eq:lam_hamiltonian} are highlighted: imitative intra-layer interactions (green self-loop), anti-imitative inter-layer interactions (red links), and the coupling to an external field $\boldsymbol{h}^{(a)}$ (blue arrow).}
    \label{fig:tam-arch}
\end{figure}

\paragraph{LAM: an $L$-layer associative memory.}
We now briefly recall the LAM model, in order to fix notation and emphasize the mechanism that will be exploited by the server.

Consider $L$ layers, each composed of $N$ binary neurons. The configuration of layer $a$ is denoted by $\boldsymbol{\sigma}^{(a)}=(\sigma^{(a)}_1,\ldots,\sigma^{(a)}_N)\in\{-1,+1\}^N$ for $a=1,\ldots,L$. Layers interact through a shared synaptic matrix $\bm J^{\mathrm{Hebb}}$ and through a coupling matrix $\bm g=\{g_{ab}\}_{a,b=1}^L\in\mathbb{R}^{L\times L}$ controlling the intensity and polarity (imitative vs.\ anti-imitative) of intra- and inter-layer interactions. The Hamiltonian takes the form
\begin{equation}
\label{eq:lam_hamiltonian}
\mathcal{H}(\boldsymbol{\sigma}\mid \bm g,\bm J,H,\bm h)
=
- \sum_{a,b=1}^L g_{ab}\sum_{i,j=1}^N \sigma_i^{(a)}\, J^{\mathrm{Hebb}}_{ij}\, \sigma_j^{(b)}
- H \sum_{a=1}^{L}\sum_{i=1}^{N} h_i^{(a)}\,\sigma_i^{(a)}.
\end{equation}
The last term accounts for a layer-dependent external field $\bm h^{(a)}$, modulated by a global strength $H$, which can be used to drive the dynamics.

The sign of $g_{ab}$ determines whether layers tend to align or anti-align: if $g_{ab}>0$ (resp.\ $g_{ab}<0$), configurations that maximize (resp.\ minimize) the overlap $\boldsymbol{\sigma}^{(a)}\!\cdot\!\boldsymbol{\sigma}^{(b)}$ are energetically favored. In the following we focus on the simple yet expressive choice
\begin{equation}
\label{eq:g_struct}
g_{ab}=
\begin{cases}
1, & a=b,\\
-\lambda, & a\neq b,
\end{cases}
\end{equation}
where $\lambda\in\big[0,(L-1)^{-1}\big)$ guarantees that $\bm g$ is positive definite, so that $\bm g$ is an equicorrelation matrix. This structure yields imitative Hopfield-like interactions within each layer, promoting coherent retrieval states, while inducing anti-imitative couplings across distinct layers, discouraging configurations in which all layers retrieve the same pattern (see Fig.~\ref{fig:tam-arch}).

This bias matches our target task: starting from synaptic-level information---namely, an interaction matrix assumed to be Hebbian---the model promotes a \emph{factorization} of the stored content across layers, ultimately enabling reconstruction of the underlying archetypes.

\subsubsection{Factorization procedure}
\label{subsec:factorization_procedure}

We now introduce the factorization procedure embedded in the LAM dynamics, which constitutes the core mechanism enabling the reconstruction of latent patterns from an observed synaptic matrix.

The LAM model~\eqref{eq:lam_hamiltonian}, for a given realization of hidden patterns $\{\bm\xi^\mu\}_{\mu=1}^K$, can disentangle mixed configurations such as classical spurious states. For instance, when $L=3$ one may consider the mixture $\bm x \;=\; \operatorname{sgn}\!\Big(\sum_{\ell=1}^{3}\bm\xi^\ell\Big)$, where $\operatorname{sgn}(\cdot)$ acts componentwise. Feeding $\bm x$ as input to each layer, the dynamics can relax to
\[
(\bm\sigma^{(1)},\bm\sigma^{(2)},\bm\sigma^{(3)})=(\bm\xi^1,\bm\xi^2,\bm\xi^3)
\quad\text{(up to permutations)},
\]
thus recovering the constituents of the mixture~\cite{agliari2025networks}.

In~\cite{Agliari2025MultiChannel}, it is further shown that a closely related architecture---obtained by modifying the heteroassociative component---can tackle more challenging reconstruction tasks. Specifically, consider the Hamiltonian
\begin{equation}
\label{eq:HamHam_quad_only_J}
\begin{aligned}
\widetilde{\mathcal{H}}\!\left(\bm\sigma;\lambda,H,\bm J^{\mathrm{Hebb}},\bm h\right)
&= - \sum_{a=1}^{L}\sum_{i,j=1}^{N} J^{\mathrm{Hebb}}_{ij}\,\sigma^{(a)}_i \sigma^{(a)}_j
\\
&\quad + \frac{\lambda}{N}\sum_{\substack{a,b=1\\ a\neq b}}^{L}\sum_{i,j,k,l=1}^{N}
J^{\mathrm{Hebb}}_{ij}\,J^{\mathrm{Hebb}}_{kl}\,\sigma^{(a)}_{i}\sigma^{(b)}_{j}\sigma^{(a)}_{k}\sigma^{(b)}_{l}
\\
&\quad - H \sum_{a=1}^{L}\sum_{i=1}^{N} h^{(a)}_i \sigma^{(a)}_i \,.
\end{aligned}
\end{equation}

Exploiting the mean-field structure of the model, the Hamiltonian \eqref{eq:HamHam_quad_only_J} can be conveniently recast as $\mathcal E_{N,\bm\Xi}(\bm{\sigma})=
- \sum_{a=1}^{L}\sum_{i=1}^{N} \hat{h}_i^{(a)}\sigma_i^{(a)} $
with the effective field $\hat {\bm h}^{(a)}$ acting on neurons in the layer $a$ being the sum of three contributions:
\begin{equation}
\label{eq:net_field}
\hat{\bm h}^{a}(\bm\sigma)
=
\bm h^{a\to a}(\bm\sigma)
+
\sum_{\substack{b=1 \\ b\neq a}}^L\bm h^{b\to a}(\bm\sigma)
+ 
H\,\bm h^{(a)},
\end{equation}
respectively, the intra-module (auto-associative), inter-module (anti-imitative) and external fields. The definitions of the first two contributions come directly from the expression \eqref{eq:HamHam_quad_only_J}, that is
\begin{eqnarray}
    \bm h^{a\to a}(\bm\sigma) &=& \bm J^{\mathrm{Hebb}}\cdot \bm\sigma^{(a)},\label{eq:intra}\\
    \bm h^{b\to a}(\bm\sigma)&=&
-\frac{\lambda}{N}
\big(\bm J^{\mathrm{Hebb}}\cdot \bm\sigma^{(b)}\big)\big((\bm\sigma^{(b)} )^T\bm J^{\mathrm{Hebb}}\bm\sigma^{(a)}\big).\label{eq:inter}
\end{eqnarray}
\par\medskip
We now set up a dynamics that evolves configurations toward lower-energy states. Allowing for stochastic noise, controlled by the inverse temperature $\beta\in\mathbb{R}_+$, the neuronal configuration $\bm\sigma(t)$ at (discrete) time $t$ is updated synchronously according to
\begin{equation}
\label{eq:update_MC_iclr}
\bm\sigma^{(a)}(t{+}1)
=
\operatorname{sign}\Big[\tanh\Big(\beta\hat{\bm h}^{(a)}(\bm\sigma(t))\Big) + \bm u^{(a)}(t)\Big],
\end{equation}
where $\hat{\bm h}^{(a)}$ is the $a$-th layer effective field, computed at each time-step $t$ according to Eqs. (\ref{eq:net_field}-\ref{eq:inter}), and 
$\bm u^{(a)}(t)\underset{i.i.d}{\sim} \mathcal{U}([-1,1]^N)$ for all $a=1,\dots,L$.

From now on, we assume that the Hebbian kernel $\bm J^{\mathrm{Hebb}}$ is known (or can be reliably recovered from data), while its factorization in terms of the underlying patterns $\{\bm\xi^\mu\}$ is \emph{unknown}. Our goal is to reconstruct the hidden patterns using only $\bm J^{\mathrm{Hebb}}$ (and, when available, a set of mixed inputs).

\paragraph{Candidate generation from mixed inputs.}
In~\cite{Agliari2025MultiChannel}, the model~\eqref{eq:HamHam_quad_only_J} is studied in a setting where one has access to the synaptic matrix and to $m$ mixed inputs of the form
\begin{equation}
\label{eq:mixtures}
\bm x^\alpha \;=\; \operatorname{sgn}\!\left[
f\!\left(\sum_{\mu=1}^{K} c_\mu^\alpha\,\bm\xi^\mu\right)\right],
\qquad
\alpha=1,\dots,m,
\end{equation}
where $f(\cdot)$ is a componentwise nonlinearity (possibly the identity) and the coefficients $\{c_\mu^\alpha\}$ depend on the specific scenario.\footnote{Equivalently, $x_i^\alpha=\operatorname{sgn}\big(f(\sum_{\mu=1}^K c_\mu^\alpha \xi_i^\mu)\big)$ defines a nonlinear random mapping from the $K$-dimensional feature vector $(\xi_i^1,\dots,\xi_i^K)$ to $m$ outputs; one may interpret it as a perceptron-like transformation, with the site index $i$ labeling datapoints.}

For each $\alpha$, we clamp the external fields to the same input across layers, i.e.\ $\bm h^{(a)}=\bm x^\alpha$ for all $a=1,\dots,L$, and let the dynamics \eqref{eq:update_MC_iclr} evolve to stationarity. Collecting the final configurations across all layers and repeating over the $m$ mixtures yields $Lm$ candidates $\{\bar{\bm\sigma}^{\,\ell}\}_{\ell=1}^{Lm}$. Unlike approaches enforcing $L\ge K$ in a single run, here one can trade layers for repetitions: it suffices to choose $m$ such that $Lm\gtrsim K$ (and, in practice, to promote coverage of distinct sources). 
However two main issues may arise: $i)$ \emph{duplicate} candidates, i.e.\ configurations with large mutual overlap; $ii)$  \emph{spurious equilibria}, i.e.\ stable states unrelated to the desired patterns.

Therefore, since not every run necessarily converges to a disentangled solution, we introduce an acceptance criterion that can be evaluated without access to the ground truth.

\paragraph{Acceptance criterion.}
The acceptance test combines two filters addressing (i) duplicates and (ii) spurious equilibria.
\begin{itemize}
\item[(i)] \textbf{De-duplication via mutual overlap.}
We compute the pairwise overlap
\begin{equation}
\label{eq:overlap_crit}
q_{\ell k} \;=\; \frac{1}{N}\sum_{i=1}^{N}\bar\sigma^{\,\ell}_i \bar\sigma^{\,k}_i,
\end{equation}
and discard duplicates whenever $q_{\ell k}>\delta$, a conservative threshold for random patterns.
A value of $\delta = 0.5$ indicates that two candidates are strongly correlated and therefore likely represent the same recovered pattern (up to noise), since for independent random $\pm1$ patterns the overlap concentrates near zero as $N$ grows. Thus, $0.5$ provides a safe margin to merge near-identical reconstructions while avoiding accidental merges between distinct patterns.\footnote{For two configurations that differ on a fraction $p$ of spins, one has $q \approx 1-2p$; hence $q>0.5$ corresponds to $p<0.25$. In contrast, overlaps between independent random patterns are typically $O(N^{-1/2})$, so values above $0.5$ are overwhelmingly unlikely unless the solutions are duplicates.}.

\item[(ii)] \textbf{Spectral filter via a pseudo-inverse kernel.}
Let $\bm C\in\mathbb{R}^{K\times K}$ be the pattern correlation matrix, $C_{\mu\nu}=\frac{1}{N}\sum_{i=1}^{N}\xi_i^\mu\xi_i^\nu$.
We define the pseudo-inverse kernel~\citep{personnaz1985information,kanter1987associative} as
\begin{equation}
\label{eq:KS_matrix}
J^{KS}_{ij}
\;=\;
\frac{1}{N}\sum_{\mu,\nu=1}^{K}\xi_i^\mu\,C^{-1}_{\mu\nu}\,\xi_j^\nu.
\end{equation}
It can be shown \citep{personnaz1985information,kanter1987associative} that the true patterns are eigenvectors of $\bm J^{KS}$ with eigenvalue $1$, namely
\[
\sum_{j=1}^{N}J^{KS}_{ij}\xi_j^\mu=\xi_i^\mu,
\qquad
\frac{1}{N}\sum_{i,j=1}^{N}\xi_i^\mu J^{KS}_{ij}\xi_j^\mu=1.
\]
Consequently, a theoretically ideal, yet infeasible, acceptance criterion would necessitate
\begin{equation}
\label{eq:ideal_test}
\frac{1}{N}\,(\bar{\bm\sigma}^{\,\ell})^\top\bm J^{KS}\bar{\bm\sigma}^{\,\ell}\;=\;1.
\end{equation}
Although $\bm J^{KS}$ depends on the unknown patterns, it can be approximated directly from the observed Hebbian kernel $\bm J^{\mathrm{Hebb}}$ via iterative unlearning~\citep{fachechi2019dreaming}:
\begin{equation}
\label{eq:iter_JKS}
\bm J_{k+1}
\;=\;
\bm J_k + \frac{\epsilon}{1+\epsilon k}\,\big(\bm J_k - \bm J_k^2\big),
\qquad
\bm J_0=\bm J^{\mathrm{Hebb}},
\quad
\epsilon<\left[\left(1+\sqrt{\operatorname{Tr}\bm J^{\text{Hebb}}/N}\right)^2-1\right]^{-1}.
\end{equation}
We denote by $\widehat{\bm J}^{KS}$ the corresponding unique fixed point (or its numerical approximation).\\
Since exact equality in~\eqref{eq:ideal_test} is rarely achieved in simulations---due to finite fractions of flipped bits in $\bar{\bm\sigma}^{\,\ell}$ and to the approximate nature of $\widehat{\bm J}^{KS}$---we accept a candidate whenever
\begin{equation}
\label{eq:JKS_crit}
\frac{1}{N}\,(\bar{\bm\sigma}^{\,\ell})^\top\,\widehat{\bm J}^{KS}\,\bar{\bm\sigma}^{\,\ell} \;>\; \tau
\end{equation}
where $\tau$ is a threshold whose value can be selected using random matrix theory (see App.~\ref{app:threshold} for a deeper discussion).
\end{itemize}

Overall, we retain $\bar{\bm\sigma}^{\,\ell}$ if and only if
\begin{equation}
\label{eq:acc_crit}
\text{(i)}\;\; \max_{k\neq \ell} q_{\ell k} < 0.5
\qquad\text{and}\qquad
\text{(ii)}\;\; \frac{1}{N}\,(\bar{\bm\sigma}^{\,\ell})^\top\,\widehat{\bm J}^{KS}\,\bar{\bm\sigma}^{\,\ell} > \tau.
\end{equation}
The accepted, distinct candidates form the set $\{\bm\xi_R^\ell\}_{\ell=1}^{\hat{K}}$, where $\hat{K}$ denotes the number of reconstructed patterns.

\paragraph{Generating initial configurations from $\bm J^{\mathrm{Hebb}}$ alone.}
In our target setting, the only accessible object is the synaptic coupling matrix $\bm J^{\mathrm{Hebb}}$, whereas the LAM dynamics also requires input configurations resembling spurious mixtures, i.e.\ sign combinations of the original patterns. We therefore need a mechanism to generate suitable initial states directly from $\bm J^{\mathrm{Hebb}}$.

To this end, we exploit the spectral structure of the pseudo-inverse coupling $\bm J^{KS}$ defined in~\eqref{eq:KS_matrix}. The eigenspace associated with eigenvalue $1$ is $K$-dimensional and is spanned by linear combinations of the true patterns. Moreover, we can approximate $\bm J^{KS}$ starting from $\bm J^{\mathrm{Hebb}}$ via the unlearning iteration~\eqref{eq:iter_JKS}. These observations lead to the following fully unsupervised factorization pipeline:

\begin{enumerate}
\item \textbf{Approximate the pseudo-inverse kernel.}
Starting from $\bm J^{\mathrm{Hebb}}$, iterate
\[
\bm J_{k+1} \;=\; \bm J_k + \frac{\epsilon}{1+\epsilon k}\,(\bm J_k - \bm J_k^2),
\qquad \bm J_0=\bm J^{\mathrm{Hebb}},
\]
until convergence to $\widehat{\bm J}^{KS}$.

\item \textbf{Spectral filtering.}
Compute the eigendecomposition of $\widehat{\bm J}^{KS}$ and retain eigenvectors associated with eigenvalues larger than a threshold (here $\tau$, consistently with~\eqref{eq:JKS_crit}). This yields:
\begin{itemize}
\item an estimate of the effective number of stored patterns,
\[
\hat{K} \;=\; \#\{\lambda_i > \tau\},
\]
used as a proxy for the true load $K$;
\item a set of $\hat{K}$ orthogonalized linear combinations of the true patterns,
\begin{equation}
\label{eq:lin_comb_fatt}
\tilde{x}_i^\delta \;=\; \sum_{\mu=1}^{K}\tilde{c}_\mu^\delta\,\xi_i^\mu,
\qquad
\delta = 1,\dots,\hat{K},
\end{equation}
spanning the same subspace as $\{\bm\xi^\mu\}$.
\end{itemize}

\item \textbf{Generate synthetic spurious mixtures.}
Construct nonlinear mixtures to be used as LAM inputs:
\begin{equation}
\label{eq:lin_comb_fatt_spurious}
x_i^\alpha \;=\; \operatorname{sgn}\!\left(\sum_{\delta=1}^{\hat{K}} c_\delta^\alpha\,\tilde{x}_i^\delta\right),
\qquad
\alpha = 1,\dots,m,
\end{equation}
where $c_\delta^\alpha$ are random coefficients (e.g.\ Gaussian or binary).

\item \textbf{Run LAM and accept reconstructions.}
Feed $\{\bm x^\alpha\}_{\alpha=1}^m$ to the LAM, run the dynamics, and apply the acceptance criterion~\eqref{eq:acc_crit} to extract $\{\bm\xi_R^\ell\}_{\ell=1}^{\hat{K}}$.
\end{enumerate}
The above procedure is summarized in Algorithms~\ref{algo:reconstruction_gen}--\ref{algo:reconstruction_pruning} of App.~\ref{app:pseudocode}.

In summary, this procedure enables an unsupervised factorization of a Hebbian kernel into its underlying patterns, without requiring external samples or side information. The key idea is to leverage the spectral geometry of the pseudo-inverse estimate to generate self-consistent synthetic mixtures, which are then disentangled by the LAM dynamics into the fundamental components.
\\
From now on, we drop the superscript ``Hebb'' and denote the Hebbian synaptic matrix simply by $\bm J$.

\subsection{Dataset partition and round definition}
\label{subsec:dataset_partition}

To fully specify the federated learning setting, we consider $L$ clients and a federation lasting for $T$ communication rounds. To model robustness under strong heterogeneity, we define $K_c\subseteq\{1,\dots,K\}$ as the \emph{effective support} of archetypes observed by client $c$ over the considered horizon, assuming that the federation as a whole covers the entire latent population: $\bigcup_{c=1}^{L}K_{c} \;=\; \{1, \dots, K\}$\footnote{This union represents the complete latent space. The actual presence of specific archetypes at round $t$ is governed by the time-varying mixtures $\pi_{t,c}(\mu)$, allowing for scenarios where classes emerge mid-training.}.

\paragraph{Local batches and class mixtures.}
At round $t\in\{1,\dots,T\}$, each client $c\in\{1,\dots,L\}$ receives a local batch of $M_c^{t}$ examples, $\big\{\big(\bm\eta_c^t\big)^{m}\big\}_{m=1}^{M_c^{t}}$, drawn according to the generative model~\eqref{eq:gen}. We denote by $\bm\pi_{t,c}$ the corresponding class-mixing distribution, i.e.\ $\pi_{t,c}(\mu)$ is the probability that an example observed by client $c$ at round $t$ is generated from archetype $\mu$. Consistent with the definition of $K_c$, this mixture satisfies:
\begin{equation}
\pi_{t,c}(\mu)=0
\qquad
\forall\, t=1,\dots,T,\ \forall\, \mu\notin K_c.
\end{equation}

Consequently, different clients observe different subsets $K_c$ of archetypes and operate under distinct noise levels $r_c$, so that local datasets reflect only specific facets of the global archetypal population. This induces a rigorous statistically non-i.i.d.\ federated setting \cite{kairouz2021advances,hsu2019measuring}.

\noindent
It is important to clarify that the condition $\pi_{t,c}(\mu) = 0$ for $\mu \notin K_c$ describes the \emph{statistical realization} of the local datasets rather than a structural limitation. We do not assume that client $c$ is fundamentally incapable of observing archetype $\mu$ (e.g., due to sensor blindness), but rather that the local generative process is characterized by high heterogeneity and sparsity, effectively rendering specific classes absent from the local sampling budget within the training rounds. This perspective aligns with realistic non-i.i.d. federated scenarios, such as small local datasets or regional prevalence variations, and naturally accommodates the emergence of novel archetypes as the mixture distributions evolve over time.

\paragraph{Round-level aggregate mixture and sample budget.}
We define the global round-level mixture as the client-wise average
\begin{equation}
\label{eq:global_mixture_round}
\pi_t(\mu)
\;:=\;
\frac{1}{L}\sum_{c=1}^L \pi_{t,c}(\mu),
\end{equation}
which summarizes the class-mixing distribution at the level of the entire federation during round $t$ (in particular, whether archetype $\mu$ is present or absent at the server at that round).

We also introduce the total federation sample budget, i.e.\ the total volume of data processed across all clients and rounds:
\begin{equation}
\label{eq:Mtot}
M_{\mathrm{tot}}
\;=\;
\sum_{t=1}^{T}\sum_{c=1}^{L} M_c^t.
\end{equation}

\paragraph{Exposure and coverage.} We conclude by introducing two quantities that are not directly used by the algorithm, but strongly affect the dynamics, as we will see in Sec.~\ref{sec:theory}. The first is the \emph{federated exposure} of archetype $\mu$ at round $t$:
\begin{equation}
\label{eq:exposure}
e_\mu(t)
\;=\;
\frac{1}{L}\sum_{c=1}^L\frac{1}{M_c^t}\sum_{m=1}^{M_c^t}
\mathbf{1}\!\left\{\big(\bm\eta_c^t\big)^{m}\ \text{was generated from archetype }\mu\right\}.
\end{equation}
In words, $e_\mu(t)$ is the empirical realization of the global mixture probability $\pi_t(\mu)$; it measures, on average across clients, the fraction of local examples at round $t$ that originate from archetype $\boldsymbol{\xi}^\mu$. By the law of large numbers, $e_\mu(t) \to \pi_t(\mu)$ as the batch sizes $M_c^t \to \infty$.

The second quantity is the \emph{coverage} up to round $t$:
\begin{equation}
\label{eq:coverage}
\mathcal{C}(t)
\;=\;
\Big\{\mu\in\{1,\dots,K\}:\ \sum_{s=1}^{t} e_\mu(s)>0\Big\},
\end{equation}
namely the set of all archetypes that have been observed at least once (globally) up to round $t$.

In the next subsection, we describe how the federated dynamics unfolds across rounds, the components involved, and how we assess the quality of the model's output.

\section{Main model architecture}
\label{subsec:model-pipeline}

We now describe how the previously described models for associative memories can be embedded into a federated setting. The overall workflow, sketched in Fig.~\ref{fig:pipeline}, is organized in four steps and repeats over $T$ communication rounds. Throughout, raw examples remain strictly local to the clients; only aggregated statistics (synaptic operators) are exchanged.

\paragraph{Pipeline overview.}
At each round $t$, clients build (or update) a local synaptic matrix from their current batch, upload it to the server, and receive back a server-side reconstruction encoded as a synaptic operator. Clients then fuse this global information with their new local evidence through a convex combination, and the process iterates.

\begin{itemize}
    \item[$(i)$] \textbf{Initialization (local operator at $t=0$).}
    At the initial round, each client $c$ computes an unsupervised synaptic matrix from its first batch of $M_c^{0}$ examples:
    \begin{equation}
    \label{eq:init_local_hebb}
        \big(J^{(0)}_c\big)_{ij}
        \;=\;
        \frac{1}{N\,M_c^{0}}
        \sum_{a=1}^{M_c^{0}}
        \big(\eta^{(0)}_c\big)_i^{a}\big(\eta^{(0)}_c\big)_j^{a},
        \qquad c=1,\ldots,L.
    \end{equation}
    Each client then transmits $\bm J_c^{(0)}$ to the server.

    \item[$(ii)$] \textbf{Server aggregation and factorization.}
    The server aggregates the received operators (here, by an equally-weighted average) to obtain a global synaptic matrix
    \begin{equation}
    \label{eq:server_agg}
        \big(J^{(t)}_s\big)_{ij}
        \;=\;
        \frac{1}{L}\sum_{c=1}^{L}\big(J^{(t)}_c\big)_{ij}.
    \end{equation}
    The server then runs the unsupervised factorization procedure (Sec.~\ref{subsec:factorization_procedure}), using $\bm J_s^{(t)}$ as input to the LAM-based reconstruction mechanism, and produces a set of $\hat{K}$ reconstructed patterns
    \[
    \bm J_s^{(t)}
    \;\longrightarrow\;
    \mathrm{LAM}
    \;\longrightarrow\;
    \Big\{\hat{\bm\xi}^{(t)}_{s,\mu}\Big\}_{\mu=1}^{\hat{K}},
    \]
    where $\hat{K}$ is the effective number of retrieved archetypes identified by the spectral filtering and acceptance steps. In general, $\hat{K}$ may differ from the true number $K$ (typically $\hat{K}\le K$ in difficult regimes).

    \item[$(iii)$] \textbf{Broadcast and client fusion.}
    The reconstructed patterns are broadcast back to clients in operator form, via the server synaptic matrix
    \begin{equation}
    \label{eq:server_reencoded}
        \big(\hat{J}^{(t)}_s\big)_{ij}
        \;=\;
        \frac{1}{N}\sum_{\mu=1}^{\hat{K}}
        \big(\hat{\xi}^{(t)}_{s,\mu}\big)_i\,
        \big(\hat{\xi}^{(t)}_{s,\mu}\big)_j.
    \end{equation}
    Upon receiving $\hat{\bm J}^{(t)}_s$, each client combines it with the local operator inferred from its \emph{new} batch at round $t+1$ via a convex blend:
\begin{equation}
\label{eq:convex_comb}
\big(J^{(t+1)}_c\big)_{ij}
\;=\;
w_c(t)\,
\underbrace{\frac{1}{N\,M_c^{t+1}}\sum_{a=1}^{M_c^{t+1}}
\big(\eta^{(t+1)}_c\big)_i^{a}\big(\eta^{(t+1)}_c\big)_j^{a}}_{\mathclap{\text{current local evidence}}}
\;+\;
\big(1-w_c(t)\big)
\underbrace{\big(\hat{J}^{(t)}_s\big)_{ij}}_{\mathclap{\text{server reconstruction}}}
\end{equation}
    where $w_c(t)\in[0,1].$

    \item[$(iv)$] \textbf{Iteration.}
    Steps $(ii)$--$(iii)$ are repeated for $t=0,\ldots,T-1$.
\end{itemize}

These steps are illustrated in Figs.~\ref{fig:pipeline}--\ref{fig:LAM_layer}. The pseudo-code is detailed in Algorithm~\ref{algo:Fede_algo} in App.~\ref{app:pseudocode}.

\begin{figure}[t!]
    \centering
    \includegraphics[width=15.5cm]{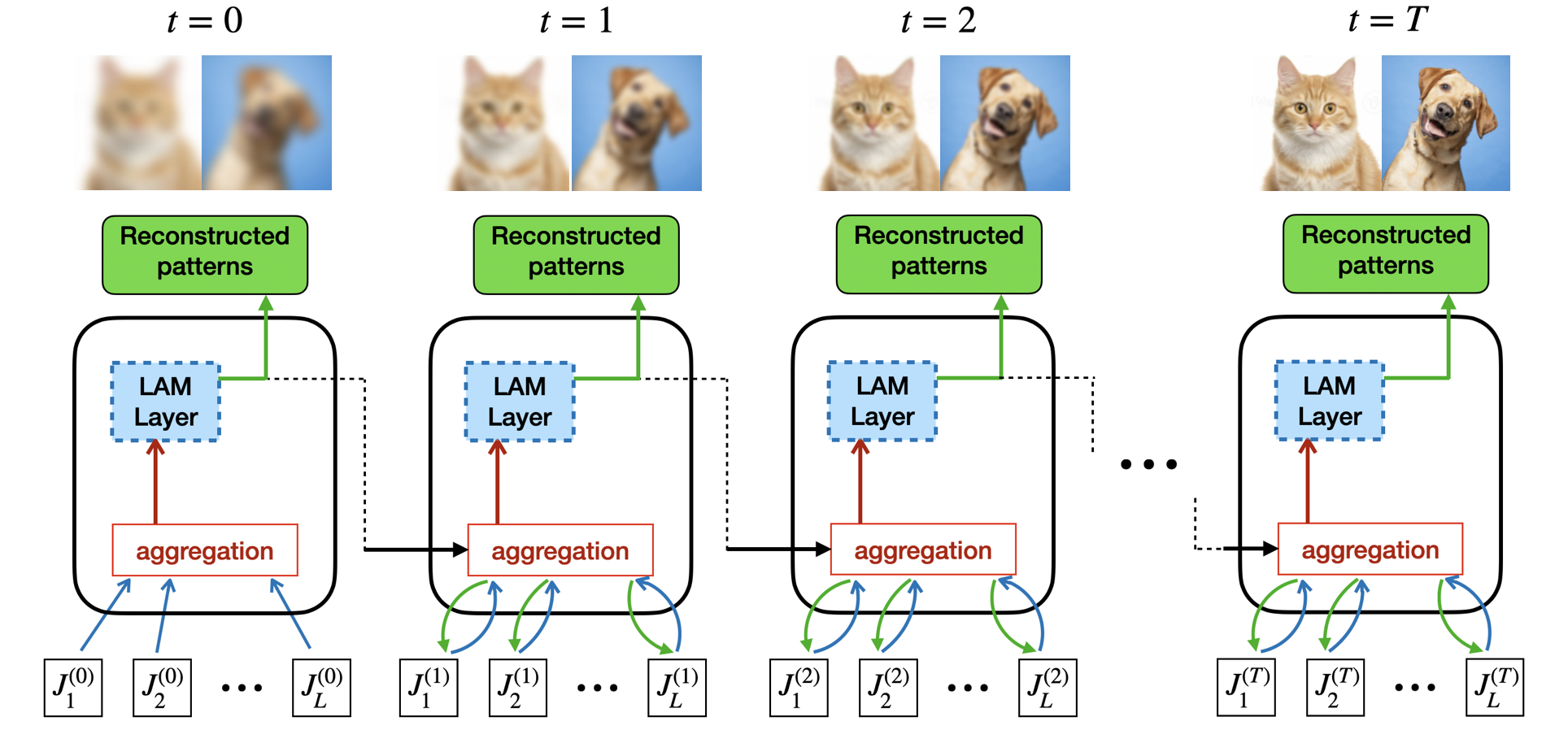}
    \caption{Illustration of the federated pipeline. Each block represents a federation round $t$. At each round, the $L$ client layers provide as input their synaptic matrices, each estimated from the batch of examples available at that round. These matrices are combined in the \textit{aggregation} layer. For $t=0$, the aggregation uses only the information received from the clients. For $t>0$, the aggregation combines both (i) the client information from round $t$ and (ii) the server-side information carried over from the previous round $t-1$. 
    \\
    The aggregated information can either be sent back to the clients as feedback to improve the estimation of the ground-truth synaptic matrix, or forwarded to the \textit{LAM} layer, where pattern reconstruction is performed. At round $t=0$ there is no feedback from the federation to the clients. At each round, we can access both the reconstructed patterns obtained after the \textit{LAM} layer and the updated client synaptic matrices.
    Panel $b)$ shows a zoom-in of the \textit{LAM} layer, where the \textit{pattern reconstruction} is performed at each round. This layer takes as input the output of the \textit{aggregation} layer and first applies an iterative algorithm to estimate $\mathbf{J}^{KS}$. It then generates a sufficiently large set of initial mixing states and feeds them into the LAM model, which collects possible \textit{pattern candidates}. These candidates are passed to the \textit{pruning} layers, which remove duplicates and apply a \textit{spectral criterion} to discard forbidden candidates. The final output of the procedure is the set of $\hat{K}$ reconstructed patterns.}
    \label{fig:pipeline}
\end{figure}

\begin{figure}[t!]
    \centering
    \includegraphics[width=15.5cm]{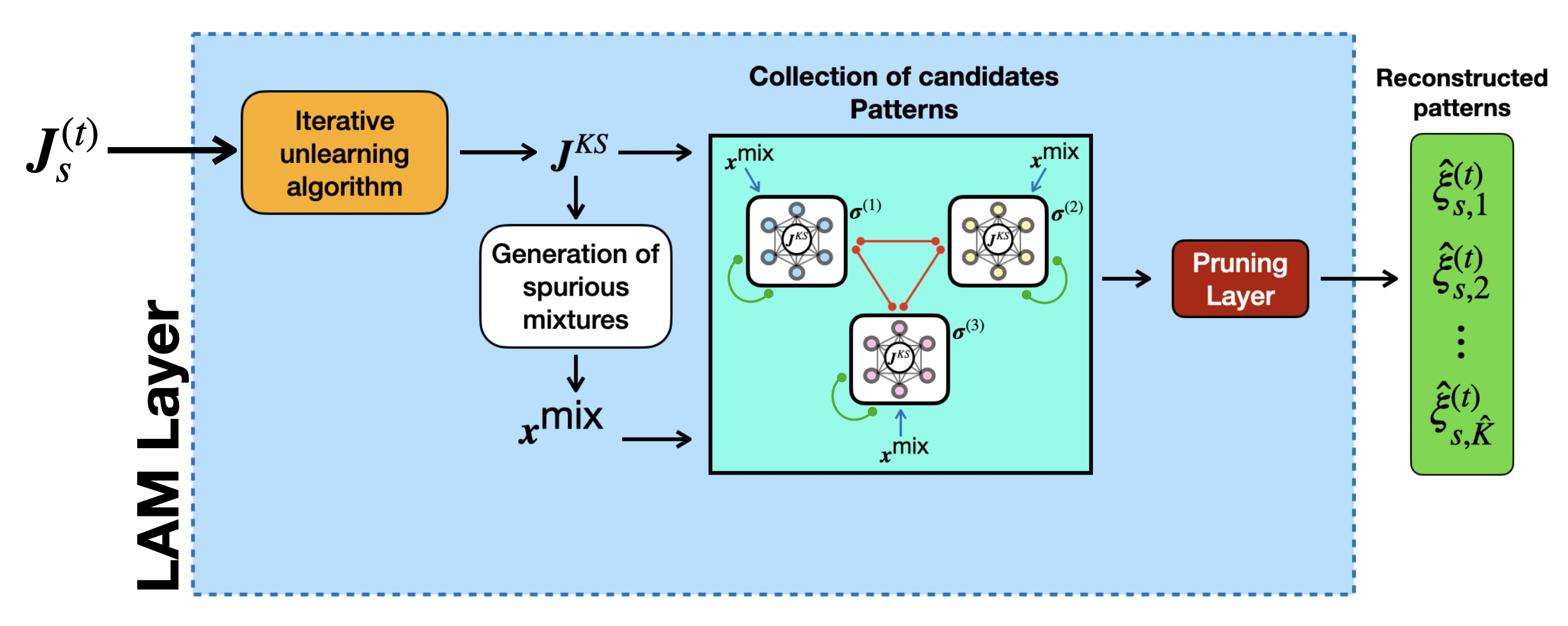}
    \caption{We show a zoom-in of the \textit{LAM} layer, where the \textit{pattern reconstruction} is performed at each round. This layer takes as input the output of the \textit{aggregation} layer and first applies an iterative algorithm to estimate $\mathbf{J}^{KS}$. It then generates a sufficiently large set of initial mixing states and feeds them into the LAM model, which collects possible \textit{pattern candidates}. These candidates are passed to the \textit{pruning} layers, which remove duplicates and apply a \textit{spectral criterion} to discard forbidden candidates. The final output of the procedure is the set of $\hat{K}$ reconstructed patterns.}
    \label{fig:LAM_layer}
\end{figure}

Our pipeline clearly departs from standard federated learning frameworks: clients do not exchange gradients or model parameters, but instead share \emph{synaptic operators} that capture archetypal structural information~\cite{personnaz1986collective}. Consistent with the federated paradigm, raw data remain strictly local and are never transmitted to the server; only aggregated correlation statistics are communicated. This design is not merely a practical convenience. By transmitting operator-level summaries, each client contributes a concise representation of the underlying archetypal geometry, allowing the server to integrate a coherent global structure while fully respecting data locality.

A central ingredient of the procedure is the client-wise convex combination in~\eqref{eq:convex_comb}. The hyper-parameter $w_c(t)$ acts as a \emph{plasticity knob}: small $w_c(t)$ privileges consolidation (lower variance but higher bias toward past information), whereas large $w_c(t)$ privileges plasticity (lower bias to stale information, higher variance due to finite-batch fluctuations). This operator-level mechanism mirrors the complementary learning systems (CLS) hypothesis in cognitive neuroscience---fast traces integrating into a slower long-term store---and provides an interpretable analogue for federated continual learning~\cite{o2014complementary,kumaran2016learning}.

\paragraph{Reconstruction metrics.}
To quantify retrieval performance after each round, we monitor the alignment between the reconstructed server patterns and the ground-truth archetypes. Specifically, we report the \emph{magnetization} of the reconstructed set,
\begin{equation}
\label{eq:mag}
m_t((\hat{\bm\xi}_s^{(t)})\mid\bm\xi^\mu)
\;:=\;
\max_{\nu\in\{1,\dots,\hat{K}\}}
\frac{1}{N}\,
\big|(\hat{\bm\xi}_s^{(t)})^\nu\cdot \bm{\xi}^{\mu}\big|,
\end{equation}
which measures the maximum normalized overlap between any reconstructed archetype and the true pattern set.

In synthetic experiments, where the teacher operator is available, we also consider the normalized Frobenius error between the reconstructed kernel and the ground-truth Hebbian matrix:
\begin{equation}
\label{eq:fro}
\mathrm{FRO}_t(\bm{J})
\;:=\;
\frac{\big\|\bm{J}^{(t)}-\bm{J}^\star\big\|_F}{\|\bm{J}^\star\|_F},
\end{equation}
where $\bm{J}^\star$ denotes the true Hebbian coupling matrix and $\big\|\cdot\big\|_{F}$ is the Frobenius norm.

Now that the main components of our pipeline and the associated evaluation metrics have been introduced, we can move to the presentation of our core results. We will address both the theoretical and algorithmic sides of the proposed framework.

\section{Theoretical findings} \label{sec:theory}

On the theory side, leveraging tools from random matrix theory, we characterize when and why \emph{novelties} become detectable, namely when examples originating from archetypes that have never appeared before a given round start to surface in the federation. In particular, we show that \emph{exposure} is the key quantity governing detectability: as exposure increases, Baik-Ben Arous-Péché(BBP)-type outliers split from the bulk spectrum and the corresponding leading eigenvectors align with archetypal directions, thereby providing reliable seeds for reconstruction.

On the algorithmic side, we introduce a data-driven, entropy-based controller that adapts the network plasticity over rounds. This mechanism allows the system to remain responsive to distributional drift and the arrival of new archetypes, while simultaneously protecting previously consolidated representations. As a by-product, the same controller provides a principled form of \emph{preventive caution} against adverse conditions such as low-quality data batches or the presence of corrupted (infected) clients, by automatically reducing over-reactivity when the incoming information is inconsistent.

Overall, our results substantiate two claims:
\begin{itemize}
    \item[-] \textbf{Exposure governs detectability.} When a class gains sufficient exposure, BBP outliers detach from the bulk and the leading eigenvectors align with archetypal directions, enabling reliable seeding and reconstruction.
    \item[-] \textbf{Entropy-driven stability--plasticity control.} An entropy-based controller yields robust schedule tracking without hand tuning, preserving consolidated attractors while remaining adaptive to drift. The system incorporates new archetypes with low detection latency and minimal disruption, expanding its effective representational subspace precisely when new directions become informative.
\end{itemize}

\subsection{The role of exposure and BBP transitions} 

\label{subsec:bbp}

This section explains when and why archetypal structure becomes \emph{spectrally visible} in the server-side estimator---a prerequisite for the seeding and counting steps in our pipeline. At communication round $t$, the server forms the aggregated Hebbian operator $\bm J_s^{(t)}$ by combining client-side correlators computed on heterogeneous, fragmented batches. Since the server never observes raw data, and since class presence can vary across clients and across rounds, the key question is: \emph{under what conditions does $\bm J_s^{(t)}$ develop eigenvalue/eigenvector signatures that reveal which archetypes are currently present?}

Our answer is spectral. We show that the \emph{population} operator associated with $\bm J_s^{(t)}$ has a spiked--Wishart structure in which each archetype contributes a rank-one spike with strength proportional to its round-level exposure. Moreover, $\bm J_s^{(t)}$ concentrates sharply around this population operator, so the finite-sample spectrum inherits the same bulk/outlier geometry up to controlled fluctuations. This yields explicit detectability thresholds (BBP transitions), predictions for outlier locations, and quantitative eigenvector alignment laws, which together justify the spectral steps used downstream (thresholding to estimate $\hat K(t)$ and using leading eigenvectors as reconstruction seeds).

\vspace{3mm}
\noindent
\paragraph{Spiked decomposition and the meaning of exposure.}
With the data model of Subsec.~\ref{subsec:dataset_partition}, let $\sigma^2$ denote the (per-coordinate) noise variance.
In the Rademacher channel, this is given by the identity $\sigma^2=1-r^2$, where for simplicity we set $r_c = r$ for all $c \in \{1, \dots, L\}$.
To apply standard random matrix results, we analyze the spectrum of the \emph{rescaled} operator $\tilde{\bm{J}}_s^{(t)} := N \bm{J}_s^{(t)}$.
The population counterpart of this rescaled estimator decomposes into an isotropic noise part plus a finite-rank signal aligned with the true archetypes.
Letting $u_\mu := \bm{\xi}^\mu/\sqrt{N}$ denote the normalized archetype vectors, we have:
\begin{equation}\label{eq:spiked-decomp}
\mathbb{E}[\tilde{\bm{J}}_s^{(t)}]\;=\;\sigma^2 \bm{I}\;+\;r^2\sum_{\mu=1}^K \pi_t(\mu)\,u_\mu u_\mu^\top.
\end{equation}
The identity term $\sigma^2\bm I$ captures the isotropic contribution of random bit flips, while each archetype contributes a rank-one projector $u_\mu u_\mu^\top$ weighted by the global mixture mass $\pi_t(\mu)$ and signal intensity $r^2$.
Consequently, strictly identifying the spike strength in the spiked-covariance model yields:
\begin{equation}
\label{eq:spike_strength_def}
\text{Spike strength of }\mu: \quad \theta_\mu^{(t)} = r^2\,\pi_t(\mu) 
\quad\Longrightarrow\quad
\kappa_\mu^{(t)} = \frac{r^2\,\pi_t(\mu)}{\sigma^2} = \frac{r^2\,\pi_t(\mu)}{1-r^2}.
\end{equation}
Intuitively, an archetype is invisible when it has (near-)zero exposure, and it becomes detectable only once its exposure makes the corresponding spike strong enough to separate from the noise bulk.

The remaining task is to turn this intuition into a finite-sample statement: (a) show that $\bm J_s^{(t)}$ stays close to its expectation in operator norm at the relevant scales, and (b) characterize when the spikes of~\eqref{eq:spiked-decomp} produce outliers and aligned eigenvectors in the empirical spectrum.

\vspace{3mm}
\noindent
\paragraph{Non-asymptotic stability of the round operator.}
We first collect concentration bounds ensuring that $\bm J_s^{(t)}$ is a small perturbation of its spiked population counterpart. Throughout, $\|\cdot\|_{\mathrm{op}}$ denotes operator norm, $M_{\mathrm{round}}=L\,M_c$\footnote{While the general framework allows for time-varying and heterogeneous budgets $M_c^t$ per client, for the sake of theoretical clarity we conduct the concentration analysis assuming a representative balanced round where $M_c^t \equiv M_c$. In this simplified setting, $M_{\mathrm{round}} = L M_c$ acts as the effective aggregate sample volume; the results extend naturally to the general case by replacing $M_{\mathrm{round}}$ with the total round-level budget $M_{\mathrm{round}}(t) = \sum_{c=1}^L M_c^t$.}, and $\sigma^2$ is as above.

\begin{theorem}[Concentration]\label{thm:concentration}
\leavevmode\par\smallskip\noindent
Let $(\chi_j)_{j=1}^N$ be independent random variables. For a fixed communication round index $t$, let $\bm{J}_s^{(t)}$ be defined by \eqref{eq:server_agg}. Then, for every deviation level $u>0$:
\begin{enumerate}
\item[\textnormal{(i)}]\label{thm:concentration:subg} \textbf{(Sub-Gaussian case)} 
If $\|\chi_j\|_{\psi_2}\le C$ for all $j$, there exists a constant $c_1>0$ (depending only on $C$) such that
\begin{equation}\label{eq:conc-main}
\mathbb{P}\left(\,\big\|\bm{J}_s^{(t)}-\mathbb{E}[\bm{J}_s^{(t)}]\big\|_{\mathrm{op}}>u\,\right)
\;\le\; 2N\exp\left(-\,c_1\,M_{\mathrm{round}}\cdot \min\left\{\frac{u^2}{\sigma^{4}},\,\frac{u}{\sigma^{2}}\right\}\right),
\end{equation}
where $\sigma^2:=\max_j \mathrm{Var}(\chi_j)$.

\item[\textnormal{(ii)}]\label{thm:concentration:rademacher} \textbf{(Bounded / Rademacher channel)} 
If $\chi_j\in\{\pm1\}$ with $\mathbb{E}[\chi_j]=r$, there exists a universal constant $c_\ast>0$ such that
\begin{equation}\label{eq:conc-rad}
\mathbb{P}\left(\,\big\|\bm{J}_s^{(t)}-\mathbb{E}[\bm{J}_s^{(t)}]\big\|_{\mathrm{op}}>u\,\right)
\;\le\; 2N\exp\left(-\,c_\ast\,M_{\mathrm{round}}\cdot \min\left\{\frac{u^2}{\sup_{\lambda\in[0,1]}\lambda(1-\lambda)},\,u\right\}\right),
\end{equation}
and, in particular,
\begin{equation}
\sup_{\lambda\in[0,1]}\lambda(1-\lambda)= \tfrac14.
\end{equation}

\item[\textnormal{(iii)}]\label{thm:concentration:finiteN} \textbf{(Finite-$N$ refined bound)} 
In the bounded/Rademacher setting of \textnormal{(ii)}, for all $N$ and $M_c$ there exist universal constants $c_3,C>0$ such that
\begin{equation}\label{eq:conc-finiteN}
\mathbb{P}\left(\,\big\|\bm{J}_s^{(t)}-\mathbb{E}[\bm{J}_s^{(t)}]\big\|_{\mathrm{op}}>u\,\right)
\;\le\; 2\exp\left(-\,c_3\,M_{\mathrm{round}}\cdot \min\left\{\frac{u^2}{\sup_{\lambda\in[0,1]}\lambda(1-\lambda)},\,u\right\}\;+\;C\log N\right).
\end{equation}
\end{enumerate}
\end{theorem}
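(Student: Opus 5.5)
We treat $\bm J_s^{(t)}-\mathbb{E}[\bm J_s^{(t)}]$ as a normalized sum of independent, centered, self-adjoint random matrices and apply a matrix Bernstein inequality. By \eqref{eq:server_agg} and \eqref{eq:hebb_examples}, in the balanced round $M_c^t\equiv M_c$,
\[
\bm J_s^{(t)}=\frac{1}{N\,M_{\mathrm{round}}}\sum_{c=1}^{L}\sum_{a=1}^{M_c}\bm\eta_c^a(\bm\eta_c^a)^\top ,
\]
so $\bm J_s^{(t)}-\mathbb{E}[\bm J_s^{(t)}]=\frac{1}{M_{\mathrm{round}}}\sum_{k=1}^{M_{\mathrm{round}}}\bm X_k$. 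Here $\bm X_k:=N^{-1}\big(\bm\eta_k\bm\eta_k^\top-\mathbb{E}\,\bm\eta_k\bm\eta_k^\top\big)$, and the index $k$ runs over all examples of the round. Independence across examples, both within and across clients, makes the $\bm X_k$ independent. Conditionally on the archetype label, the coordinates $\eta_{k,j}=\xi^\mu_j\chi_j$ carry independent multiplicative noise $\chi_j$, which is exactly the hypothesis on $(\chi_j)$. Since different $\bm X_k$ may have different means (their class mixtures $\pi_{t,c}$ differ), we use the non-identically-distributed version of Bernstein, which requires only independence and centering.

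\textbf{Step 1: parameters for (ii) and (iii).} In the Rademacher channel, $\|\bm\eta_k\|_2^2=N$, so $\|N^{-1}\bm\eta_k\bm\eta_k^\top\|_{\mathrm{op}}=1$. Also $\|N^{-1}\mathbb{E}\bm\eta\bm\eta^\top\|_{\mathrm{op}}\le 1$ by \eqref{eq:spiked-decomp}, since $\sigma^2+r^2\sum_\mu\pi_t(\mu)=1$. Hence $\|\bm X_k\|_{\mathrm{op}}\le 2$ almost surely. For the variance, write $\bm P_k=N^{-1}\bm\eta_k\bm\eta_k^\top$, which is a rank-one projector. Then $\mathbb{E}\bm X_k^2=\mathbb{E}\bm P_k-(\mathbb{E}\bm P_k)^2$. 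Since $\mathbb{E}\bm P_k$ has spectrum in $[0,1]$, this matrix has eigenvalues of the form $\lambda(1-\lambda)$ with $\lambda\in[0,1]$. Its operator norm is therefore at most $\sup_{\lambda\in[0,1]}\lambda(1-\lambda)=\tfrac14$, which explains where that constant in the statement comes from. Summing over $k$ gives $\|\sum_k\mathbb{E}\bm X_k^2\|\le M_{\mathrm{round}}/4$. Matrix Bernstein with deviation $M_{\mathrm{round}}u$ then yields
\[
2N\exp\!\Big(-\frac{M_{\mathrm{round}}^2u^2/2}{M_{\mathrm{round}}/4+2M_{\mathrm{round}}u/3}\Big).
\]
The elementary inequality $\frac{a^2}{b+a}\ge\frac12\min\{a^2/b,a\}$ turns this into \eqref{eq:conc-rad}. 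Part (iii) is a rewriting of the same bound, absorbing the dimensional factor as $2N e^{-x}=2e^{-x+\log N}$; the constant $C\ge1$ also covers any dimension-free (intrinsic-dimension) variant if one prefers it.

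\textbf{Step 2: the sub-Gaussian case (i).} Here $\bm X_k$ is no longer bounded, so we use the sub-exponential (moment) version of matrix Bernstein. For unit vectors $v$, the entrywise scale is such that $\langle v,\bm\eta\rangle^2/N$ is sub-exponential with $\psi_1$-norm of order $\sigma^2$, where constants depend on $C$. The plan is to establish the moment condition
\[
\mathbb{E}\bm X_k^p\preceq \tfrac{p!}{2}(c\sigma^2)^{p-2}\,\bm\Sigma_k ,\qquad \|\bm\Sigma_k\|\lesssim\sigma^4 ,
\]
and then apply the corresponding tail bound. That bound gives $\min\{u^2/\sigma^4,u/\sigma^2\}$ after normalization by $M_{\mathrm{round}}$.

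\textbf{Main obstacle.} The expected difficulty is Step 2. The normalization $1/N$ together with a dimension-free variance $\sigma^4$ must be justified, yet $\|\bm\eta\|^2/N$ concentrates around a mean of order $1$, not around $\sigma^2$. We would attempt this by separating the deterministic signal part $\bm\xi^\mu$ from the centered fluctuation $\chi_j-\mathbb{E}\chi_j$. The cross and signal terms are then controlled by the bounded argument of Step 1, and the moment bound is needed only for the pure-noise quadratic term. If that fails, we would simply state (i) with $\sigma^2$ interpreted as the $\psi_2$-scale, which absorbs the order-$1$ mean into the constant $c_1$.
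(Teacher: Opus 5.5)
Your proof of (ii) and (iii) follows the paper's argument. It uses matrix Bernstein on the independent centered increments $\bm X_k=\bm P_k-\mathbb{E}\bm P_k$, with $\|\bm X_k\|_{\mathrm{op}}\le 2$, the identity $\mathbb{E}\bm X_k^2=\mathbb{E}\bm P_k-(\mathbb{E}\bm P_k)^2$ from $\bm P_k^2=\bm P_k$, the bound $\lambda(1-\lambda)\le\frac14$, the elementary $\min$ inequality, and absorption of $2N$ into $C\log N$. One small repair: justify $\|\mathbb{E}\bm P_k\|_{\mathrm{op}}\le 1$ by $\|\mathbb{E}\bm P_k\|_{\mathrm{op}}\le\mathbb{E}\|\bm P_k\|_{\mathrm{op}}=1$, which holds for each client's own mixture $\pi_{t,c}$. \eqref{eq:spiked-decomp} is stated for $N\bm J_s^{(t)}$ and the averaged mixture $\pi_t$, so it is not the right citation.

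\textbf{Part (i).} Here there is a genuine gap, and the obstacle you flag is not a technicality. Take the model $\eta_j=\xi_j\chi_j$ with $\mathbb{E}\chi_j=r\neq0$, which you and item (ii) use. Each centered increment then contains a signal--noise cross term of operator norm of order $r\sigma$, not $\sigma^2$. Your claim that $\langle v,\bm\eta\rangle^2/N$ has $\psi_1$-norm $O(\sigma^2)$ already fails at $v=u_\mu$, where it equals $(N^{-1}\sum_j\chi_j)^2\approx r^2$. With random labels, the pure-signal term also fluctuates at scale $r^2/\sqrt{M_{\mathrm{round}}}$ whatever $\sigma$ is.

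In fact (i) is false as stated, even within its own hypotheses with a universal $C$. Take the Rademacher channel ($\sigma^2=1-r^2$), $K=1$, $N=2$, and write $M=M_{\mathrm{round}}$, $Y_k=\chi_{k1}\chi_{k2}$. Then $\|\bm J_s^{(t)}-\mathbb{E}\bm J_s^{(t)}\|_{\mathrm{op}}=\frac12\bigl|\frac1M\sum_k Y_k-r^2\bigr|$, whose standard deviation is $u:=\frac12\sqrt{(1-r^4)/M}$. At this $u$ the right-hand side of \eqref{eq:conc-main} equals $4\exp\bigl(-c_1\min\{\frac{1+r^2}{4\sigma^2},\frac{\sqrt{M(1+r^2)}}{2\sigma}\}\bigr)$. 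Let $r\to1$ and $M\to\infty$ with $M\sigma^2\to\infty$. The right-hand side then tends to $0$, while by the CLT the left-hand side tends to $\mathbb{P}(|Z|>1)\approx0.32$, where $Z\sim\mathcal N(0,1)$.

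So no $c_1=c_1(C)$ works with $\sigma^2=\max_j\mathrm{Var}(\chi_j)$, and your signal/noise splitting can only reproduce the $r\sigma$ term. Your fallback replaces $\sigma^2$ by a $\psi_2$-scale such as $C^2$, which dominates $\mathbb{E}\chi_j^2=r^2+\sigma^2$. That is the natural repair, but it is a different, weaker statement. Present it as such, and actually verify the moment or truncation condition for the centered $\bm X_k$ rather than only planning it.

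\textbf{The paper's route.} The paper does not supply the missing idea either. Its proof of (i) gets the $\sigma^2$ scaling by rescaling $\bm\eta\mapsto\bm\eta/\sigma$ and reapplying \eqref{eq:concentration-last}. It asserts that $\sigma^{-2}\Theta_t$, with $\Theta_t=\mathbb{E}[\bm J_s^{(t)}]$, has spectrum of order one. That is exactly where your obstacle bites. First, $\lambda_{\max}(\sigma^{-2}\Theta_t)\ge r^2\max_\mu\pi_t(\mu)/\sigma^2$ is unbounded as $\sigma\to0$. Second, \eqref{eq:concentration-last} was derived from $\|\bm\eta\|_2^2\equiv N$, which the rescaled vectors do not satisfy.
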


\vspace{3mm}
\noindent
The bounds \eqref{eq:conc-main}--\eqref{eq:conc-finiteN} are non-asymptotic and scale with the effective round size $M_{\mathrm{round}}$. Operationally, they ensure that the empirical spectrum of $\bm J_s^{(t)}$ cannot deviate significantly from the spectrum of its spiked expectation unless the round is severely under-sampled. In particular, once $M_{\mathrm{round}}$ is large enough for the right-hand side to be small at the scale of interest, the bulk edge and any supercritical outliers predicted by the spiked model remain stable under finite-sample noise. (Proofs and auxiliary ingredients are deferred to the appendix.)

\medskip
\noindent
\paragraph{BBP transitions: when spikes separate and align.}
Given the stability above, we can locate the empirical spectrum relative to the Marchenko--Pastur (MP) bulk induced by the isotropic term, and describe precisely when an exposure-weighted spike produces a detectable outlier. The next theorem formalizes this in the standard spiked-covariance setting under near-orthogonality of the spike directions (which is natural for random $\{\pm1\}$ archetypes).

\vspace{3mm}
\noindent
\begin{theorem}[Detection thresholds for the spiked covariance operator]\label{thm:detection-thresholds}
Let $u_\mu:=\xi^\mu/\sqrt{N}$ and consider the rank--$K$ spiked population covariance
\[
C\;:=\;\sigma^2 I_N + \sum_{\mu=1}^K \theta_\mu\,u_\mu u_\mu^\top,
\qquad 
\kappa_\mu:=\frac{\theta_\mu}{\sigma^2},
\]
with aspect ratio $q:=N/M_{\mathrm{round}}$. Denote by
\(
\lambda_\pm(q):=\sigma^2(1\pm\sqrt q)^2
\)
the Marchenko--Pastur edges (cf.~\eqref{eq:mp-edge}). Let $J$ be the round--$t$ operator, which can be written as
\[
J = \Sigma^{1/2} S_0 \Sigma^{1/2},\qquad
S_0:=\frac{1}{M_{\mathrm{round}}}\sum_{k=1}^{M_{\mathrm{round}}} Z_k Z_k^\top,
\]
with $\Sigma=C$ and whitened rows $Z_k$ satisfying the standing assumptions under which the isotropic MP local law holds (Theorem~\ref{thm:mp-global-law}). Assume that the number of spikes $K$ is fixed (independent of $N$), that the spike directions $u_\mu$ have unit norm, and that their Gram matrix $\Gamma:=U^\top U$ with $U:=[u_1,\dots,u_K]$ satisfies
\[
\varepsilon_{\mathrm{orth}}
:=\max_{\mu\neq\nu}|\langle u_\mu,u_\nu\rangle|\to0
\quad\text{as }N\to\infty.
\]
Assume additionally that the nonzero spike strengths $\{\kappa_\mu\}$ are pairwise distinct. Then, for every fixed $\delta\in(0,1)$ there exists a deterministic sequence $\varepsilon_N(\delta)\downarrow0$ such that, with probability at least $1-\delta$ for all $N$ large enough, the following hold:
\begin{enumerate}[label=(D\arabic*)]
\item \textbf{Bulk confinement.} All but at most $K$ eigenvalues of $J$ lie in
\begin{equation}\label{eq:bulk-band}
\big[\lambda_-(q)-\varepsilon_N(\delta),\ \lambda_+(q)+\varepsilon_N(\delta)\big],
\end{equation}
where
\begin{equation}\label{eq:epsilonN}
\varepsilon_N(\delta)\;=\;C\sqrt{\frac{\log(1/\delta)+\log N}{M_{\mathrm{round}}}}\;+\;C'\,\frac{\log^2 N}{N^{2/3}},
\end{equation}
for universal constants $C,C'>0$ (depending only on $q$ and the tail/regularity parameters).

\begin{figure}[t!]
    \centering
    \includegraphics[width=1.0\linewidth]{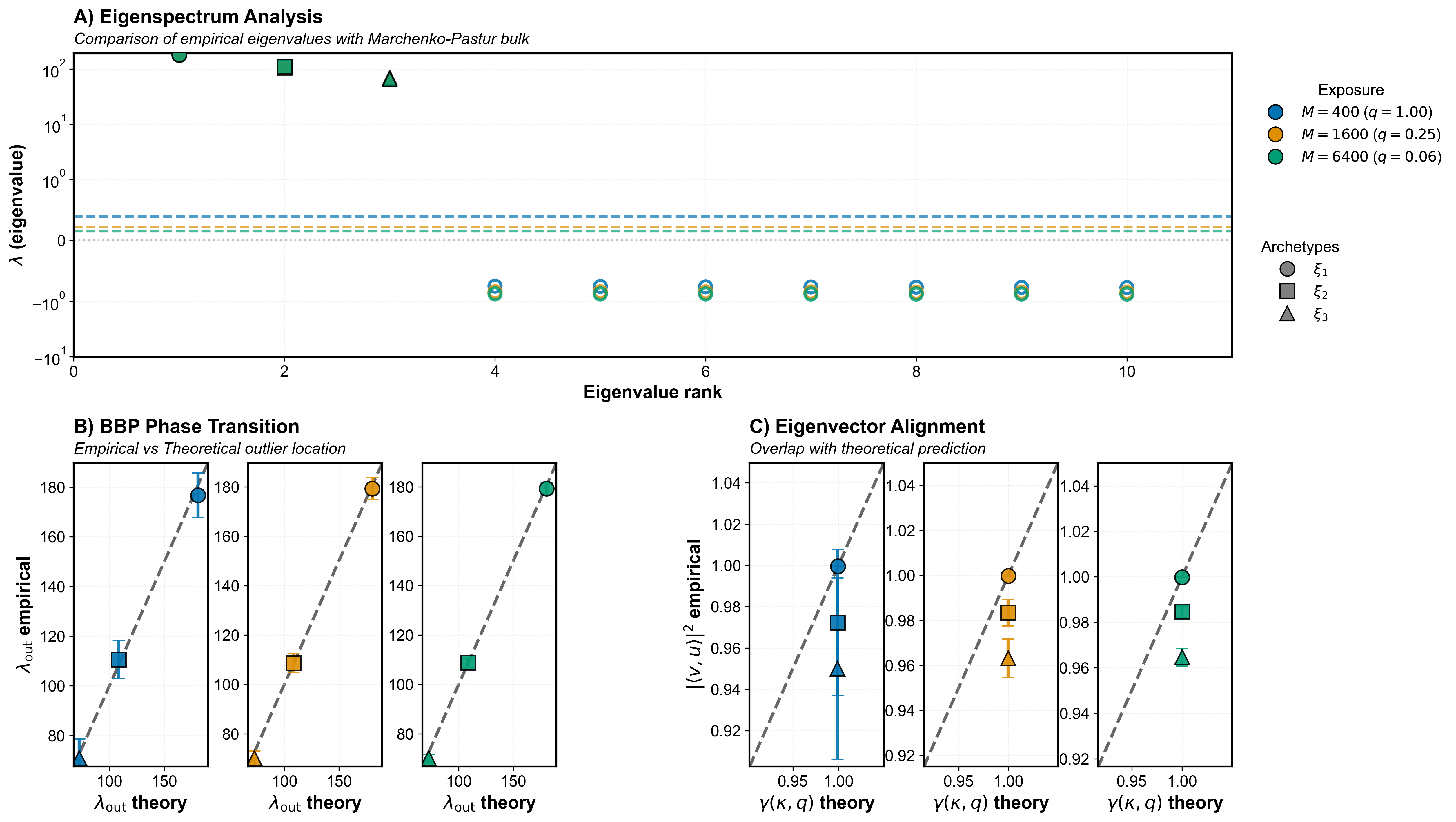}
    \caption{Spectral validation of the BBP framework in federated Hebbian learning. 
    \textit{(Top)} Panel A displays the top 10 eigenvalues of the empirical Hebbian matrix for three representative exposure levels ($M \in \{400, 1600, 6400\}$), with the Marchenko–Pastur upper edge $\lambda_+(q)$ shown as a dashed line for each aspect ratio $q = N/M$. Filled markers denote eigenvalues above the MP threshold (detected spikes corresponding to the three strong archetypes), while empty markers indicate eigenvalues below threshold. The near-absence of empirical spikes for the weak archetypes validates that exposure below the BBP threshold $\kappa < \sqrt{q}$ prevents spectral detection despite their theoretical presence.
    \textit{(Bottom Left)} Panel B compares empirical spike eigenvalues (averaged over 50 trials) against the closed-form BBP prediction $\lambda_{\text{out}}(\kappa, q)$ across all three archetypes and exposure levels. The near-diagonal scatter ($R^2 \in [0.955, 0.977]$) confirms that the spiked-Wishart population model, combined with concentration guarantees, accurately predicts empirical outlier locations at finite sample size ($N=400$).
    \textit{(Bottom Right)} Panel C shows eigenvector overlaps $|\langle v, u \rangle|^2$ compared against the theoretical alignment formula $\gamma(\kappa,q)$ from Theorem D4. The median gap between empirical and theoretical values is $0.0163$, with the maximum gap of $0.0486$ concentrated on the weakest archetype ($\xi_3$). This systematic gap for lower-exposure archetypes is consistent with the finite-size correction $O(N^{-1/2})$ predicted by the theory: the formula $\gamma(\kappa,q)$ is asymptotic ($N \to \infty$), and at $N=400$ one observes deviations that scale inversely with exposure strength. The variance of empirical overlaps visibly decreases as $M$ increases (moving left to right), reflecting the convergence of the sample estimator to its population limit.
    }
    \label{fig:bbp-panel}
\end{figure}

\item \textbf{BBP threshold (spike detection).} For each $\mu$ with $\kappa_\mu>0$ there is at most one eigenvalue of $J$ that can be asymptotically attributed to the spike $\kappa_\mu$. In particular:
\begin{itemize}
\item If $\kappa_\mu>\sqrt q$, then there exists a unique outlier eigenvalue $\lambda_\mu(J)>\lambda_+(q)$ that converges to the population-dependent value in (D3). 
\item If $\kappa_\mu\le\sqrt q$, no eigenvalue separates from the bulk edge $\lambda_+(q)$ due to that spike.
\end{itemize}

\item \textbf{Outlier location (decoupled/orthogonal case).} In the idealized case where the spike directions are exactly orthonormal, $\langle u_\mu,u_\nu\rangle=\delta_{\mu\nu}$, the asymptotic location of the outlier associated with a spike of strength $\kappa>0$ is
\begin{equation}\label{eq:outlier-loc}
\lambda_{\mathrm{out}}(\kappa)\;=\;\sigma^2\,(1+\kappa)\Big(1+\frac{q}{\kappa}\Big).
\end{equation}
Moreover, $\lambda_{\mathrm{out}}(\kappa)>\lambda_+(q)$ if and only if $\kappa>\sqrt q$, and at $\kappa=\sqrt q$ the outlier merges with the upper MP edge.

In the nearly-orthogonal case, the non-orthogonality perturbs the population spikes by at most
$O(\varepsilon_{\mathrm{orth}})$ (by standard eigenvalue perturbation of the Gram matrix $\Gamma=U^\top U$),
hence the corresponding sample outliers concentrate around the same BBP locations up to this bias.
More precisely, for each $\mu$ with $\kappa_\mu>\sqrt q$ and for every fixed $\delta\in(0,1)$, there exists
$C_\delta>0$ such that, for all $N$ large enough,
\[
\mathbb{P}\Big(\big|\lambda_\mu(J)-\lambda_{\mathrm{out}}(\kappa_\mu)\big|
\le C_\delta\big(\varepsilon_{\mathrm{orth}}+N^{-1/2}\big)\Big)\ \ge\ 1-\delta.
\]
Equivalently, $\lambda_\mu(J)=\lambda_{\mathrm{out}}(\kappa_\mu)+O(\varepsilon_{\mathrm{orth}})+O_p(N^{-1/2})$,
and in particular $\lambda_\mu(J)\xrightarrow{P}\lambda_{\mathrm{out}}(\kappa_\mu)$ as
$N\to\infty$ and $\varepsilon_{\mathrm{orth}}\to0$. \\
Moreover, choosing $\delta=\delta_N$ summable (e.g. $\delta_N=N^{-D}$ for $D>2$) and applying
Borel--Cantelli yields the almost sure convergence $\lambda_\mu(J)\to\lambda_{\mathrm{out}}(\kappa_\mu)$.

\item \textbf{Eigenvector alignment.} For each $\mu$ with $\kappa_\mu>\sqrt q$, let $v_\mu$ be the (random) unit eigenvector of $J$ associated with the outlier eigenvalue near $\lambda_{\mathrm{out}}(\kappa_\mu)$, with sign chosen so that $\langle v_\mu,u_\mu\rangle\ge0$. Then
\begin{equation}\label{eq:eig-overlap}
\big|\langle v_\mu,u_\mu\rangle\big|^2 \ \xrightarrow{P}\ 
\gamma(\kappa_\mu,q)
:=\frac{1-\dfrac{q}{\kappa_\mu^2}}{1+\dfrac{q}{\kappa_\mu}}
\in(0,1),
\end{equation}
as $N,M_{\mathrm{round}}\to\infty$ with $N/M_{\mathrm{round}}\to q$. In the nearly-orthogonal case the same limit holds, with finite-$N$ deviations of order $O(\varepsilon_{\mathrm{orth}}+N^{-1/2})$.
\end{enumerate}
\end{theorem}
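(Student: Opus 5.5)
We reduce Theorem~\ref{thm:detection-thresholds} to the classical spiked-covariance theory (Baik--Ben Arous--Péché, Paul, Benaych-Georges--Nadakuditi) in three moves. First, we handle the non-orthogonality. Next, we split $J$ into a null part plus a finite-rank perturbation. Finally, we localize outliers through a $K\times K$ determinant equation.

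\textbf{Step 1: orthonormalize the spike basis.} We write $\Sigma=\sigma^2 I_N+U\Theta U^\top$ with $\Theta=\mathrm{diag}(\theta_\mu)$. Since $\Gamma=U^\top U=I_K+E$ with $\|E\|_{\mathrm{op}}\le K\varepsilon_{\mathrm{orth}}$, the nonzero eigenvalues of $U\Theta U^\top$ coincide with those of $\Theta^{1/2}\Gamma\Theta^{1/2}$. By Weyl's inequality these differ from $\theta_\mu$ by $O(\varepsilon_{\mathrm{orth}})$. Because the $\kappa_\mu$ are pairwise distinct (and $K$ is fixed), Davis--Kahan shows that the population eigenvectors $\tilde u_\mu$ are $O(\varepsilon_{\mathrm{orth}})$-close to $u_\mu$. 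We may therefore work with the exactly orthonormal model $\tilde\Sigma=\sigma^2 I+\sum_\mu\tilde\theta_\mu\tilde u_\mu\tilde u_\mu^\top$ and transfer every conclusion back at the end, paying $O(\varepsilon_{\mathrm{orth}})$ in eigenvalues and overlaps. This is exactly the bias term appearing in (D3)--(D4).

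\textbf{Step 2: bulk confinement (D1).} We write $J=\Sigma^{1/2}S_0\Sigma^{1/2}$. The matrix $\Sigma^{1/2}-\sigma I$ has rank at most $K$, so $J$ is a rank-$\le 2K$ perturbation of $\sigma^2 S_0$. For the null matrix $\sigma^2S_0$, the isotropic MP local law (Theorem~\ref{thm:mp-global-law}) places all eigenvalues within $O(\log^2N\,N^{-2/3})$ of $[\lambda_-,\lambda_+]$ with high probability. This supplies the $C'\log^2N/N^{2/3}$ term in~\eqref{eq:epsilonN}. The $\sqrt{(\log(1/\delta)+\log N)/M_{\mathrm{round}}}$ term comes from Theorem~\ref{thm:concentration}(iii) after inverting the tail bound at level $\delta$. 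Cauchy interlacing for a rank-$K$ perturbation then shows that all but at most $K$ eigenvalues stay in the band.

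\textbf{Step 3: outliers and the BBP threshold (D2--D3).} We place $\lambda$ outside the bulk band and use the standard identity: $\lambda$ is an eigenvalue of $J$ iff
\[
\det\Big(I_K-\big(I_K+\tilde\Theta/\sigma^2\big)\,\tilde U^\top \big(I-\lambda^{-1}\sigma^2 S_0\big)^{-1}\tilde U+\ldots\Big)=0 .
\]
Equivalently, we write it through the resolvent $G(z)=(\sigma^2 S_0-z)^{-1}$ restricted to $\tilde U$. The isotropic local law gives $\tilde U^\top G(z)\tilde U=m(z)I_K+O_p(N^{-1/2})$ uniformly on compact sets away from the spectrum, where $m$ is the MP Stieltjes transform. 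The limiting equation therefore decouples into $K$ scalar equations $1+\kappa_\mu\,\psi(\lambda)=0$, with $\psi$ an explicit function of $m$. Solving these yields $\lambda_{\mathrm{out}}(\kappa)=\sigma^2(1+\kappa)(1+q/\kappa)$. A solution above $\lambda_+$ exists iff $\kappa>\sqrt q$: at the edge the function $\psi$ attains the value corresponding to $\kappa=\sqrt q$, and it is monotone beyond it.

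Distinctness of the $\kappa_\mu$ ensures the roots are simple. Rouché's theorem applied to the perturbed determinant then gives exactly one eigenvalue near each supercritical $\lambda_{\mathrm{out}}(\kappa_\mu)$, within $O_p(N^{-1/2})$. No eigenvalue appears for subcritical spikes, since no root exists outside $[\lambda_+,\infty)$. Adding the Step 1 bias gives the stated $C_\delta(\varepsilon_{\mathrm{orth}}+N^{-1/2})$ bound. Borel--Cantelli with $\delta_N=N^{-D}$ upgrades this to almost sure convergence.

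\textbf{Step 4: eigenvector alignment (D4).} We use the contour representation $|\langle v_\mu,u_\mu\rangle|^2=-\frac{1}{2\pi i}\oint_{\Gamma_\mu}\langle u_\mu,(J-z)^{-1}u_\mu\rangle\,dz$. Here $\Gamma_\mu$ is a small circle isolating $\lambda_{\mathrm{out}}(\kappa_\mu)$, which is legitimate by Steps 2--3 and simplicity of the root. Via Woodbury, the integrand is expressed in terms of $\tilde U^\top G\tilde U$. Replacing that quantity by $m(z)I_K$ and computing the residue gives $\gamma(\kappa,q)=(1-q/\kappa^2)/(1+q/\kappa)$. Fluctuations are $O_p(N^{-1/2})$ by the isotropic law, and the Step 1 transfer contributes $O(\varepsilon_{\mathrm{orth}})$.

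\textbf{Main obstacle.} The delicate point is the uniform isotropic control of $\tilde U^\top G(z)\tilde U$ near, but outside, the edge, with the rate $N^{-1/2}$. This holds for Rademacher-type whitened rows only under the standing assumptions of Theorem~\ref{thm:mp-global-law}, and it must be combined with the rank-$K$ multiplicative (rather than additive) structure of $\Sigma^{1/2}S_0\Sigma^{1/2}$. To handle this, we will first try to reduce the multiplicative model to the additive Woodbury form by conjugating with $\Sigma^{1/2}$, then invoke the isotropic law as a black box. We expect the rate bookkeeping to rely on citing that law rather than reproving it.
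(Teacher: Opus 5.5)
Your proposal is correct in outline and has the same architecture as the paper. Both use interlacing against the null matrix $\sigma^2S_0$ for (D1). For (D2)--(D3), both use a $K\times K$ secular determinant in which the isotropic law replaces the compressed resolvent by $m_{\mathrm{MP}}(z)I_K$, followed by root stability.

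The genuine difference is how you handle non-orthogonality. You rewrite the population covariance exactly as $\Sigma=\sigma^2I+\sum_\mu\tilde\theta_\mu\tilde u_\mu\tilde u_\mu^\top$ with orthonormal, deterministic $\tilde u_\mu$. So $J$ \emph{is} an orthonormal spiked model, and $\varepsilon_{\mathrm{orth}}$ enters only deterministically: through $|\tilde\kappa_\mu-\kappa_\mu|$ and $\|\tilde u_\mu-u_\mu\|$ (Weyl and Davis--Kahan, e.g.\ on the $K\times K$ matrix $\Theta^{1/2}\Gamma\Theta^{1/2}$), and through the smoothness of $\lambda_{\mathrm{out}}$ and $\gamma$ on $(\sqrt q,\infty)$.

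The paper instead keeps the raw $u_\mu$. It uses Lemma~\ref{lem:decoupling} to write $U^\top G(z)U=m_{\mathrm{MP}}(z)\Gamma+R(z)$ and treats $\Gamma-I_K$ as a perturbation of the secular matrix. Its coupling matrix $C=\operatorname{diag}(\kappa_\mu/(1+\kappa_\mu))$ comes from $\Sigma^{-1}=\sigma^{-2}(I-UCU^\top)$, which is valid only when $U^\top U=I_K$; in general the coupling is $(\operatorname{diag}(\kappa_\mu)^{-1}+\Gamma)^{-1}$. The paper absorbs this $O(\varepsilon_{\mathrm{orth}})$ discrepancy silently, and your reduction removes it.

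For (D4), the paper cites Paul and Benaych-Georges--Nadakuditi in the orthogonal case and transfers the result by a Davis--Kahan argument. Your Woodbury/residue computation derives $\gamma(\kappa,q)$ from the same isotropic input, so that part becomes self-contained. The paper's route reuses an already-proved lemma with explicit $K$-dependence; yours makes the classical formulas exact and gives a cleaner error budget.

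Your Step 2 needs two repairs, neither fatal.

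First, $J-\sigma^2S_0$ is indefinite of rank up to $2K$, so interlacing only bounds the escaping eigenvalues by $2K$. To get ``at most $K$'', either use the paper's compression $P_WJP_W=\sigma^2P_WS_0P_W$ with $W=\operatorname{span}\{u_\mu\}^\perp$, or use the following identity. $J$ has the same spectrum as $S_0^{1/2}\Sigma S_0^{1/2}=\sigma^2S_0+S_0^{1/2}U\Theta U^\top S_0^{1/2}$, a positive semidefinite rank-$K$ perturbation, which also shows that nothing exits below the bulk.

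Second, the sources you assign to the two terms of~\eqref{eq:epsilonN} are wrong.
\begin{itemize}
\item Theorem~\ref{thm:mp-global-law} is a global law on a domain at fixed distance from the MP support. It gives no $N^{-2/3}$ edge control; that requires edge rigidity from a genuine local law, which the paper cites from Knowles--Yin and Erd\H{o}s--Yau--Yin.
\item Theorem~\ref{thm:concentration} bounds $\|J-\mathbb{E}J\|_{\mathrm{op}}$. For the $O(1)$-normalized $J$ this quantity cannot vanish when $N/M_{\mathrm{round}}\to q>0$, since the bulk spreads over the whole MP interval, so operator-norm concentration around $\Sigma$ cannot produce the band.
\end{itemize}
Rigidity alone already confines the bulk to a narrower band with probability $1-N^{-D}\ge1-\delta$. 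The first term of~\eqref{eq:epsilonN} is therefore just slack, and (D1) still holds.
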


\medskip
\noindent
\paragraph{From theory to an operational detector.}
Theorem~\ref{thm:detection-thresholds} yields a direct, round-wise detection rule: the bulk is confined within a narrow band around the MP support (D1), while any archetype with supercritical signal-to-noise ratio $\kappa_\mu>\sqrt q$ generates an outlier eigenvalue above the upper edge (D2), with predictable location (D3) and nontrivial eigenvector overlap (D4). Accordingly, given a cut $\tau$ (e.g.\ MP/Shuffle/TW) and a finite-$N$ cushion calibrated to the band width in~\eqref{eq:epsilonN}, we define
\begin{equation}\label{eq:keff-op}
\hat{K}(t)\;=\;\#\left\{i:\ \lambda_i\big(\bm{J}^{(t)}_{\mathrm{KS}}\big)>\tau\right\},
\end{equation}
which counts the number of empirically supercritical directions up to the controlled fluctuations. Let $\{v_i\}_{i=1}^{\hat{K}(t)}$ denote the corresponding leading eigenvectors; the alignment law~\eqref{eq:eig-overlap} explains why these vectors provide meaningful seeds, and why their quality improves monotonically as exposure increases.

\medskip
\noindent
\paragraph{Interpretation for federated dynamics.}
Equation~\eqref{eq:spiked-decomp} makes the organizing principle explicit: exposure determines spike strength, spike strength determines spectral separation, and separation determines whether an archetype can be discovered from $\bm J_s^{(t)}$ at that round. In particular, archetypes with negligible exposure cannot generate outliers and remain invisible until their exposure increases; conversely, once exposure crosses the BBP threshold, their eigenvalues detach and their eigenvectors begin to align, enabling reliable reconstruction. Figure~\ref{fig:bbp-panel} provides a numerical validation of the outlier and alignment predictions at finite $N$.

Taken together, these results close the loop between the federated sampling process and the spectral heuristics used in the pipeline: a principled thresholding step yields $\hat K(t)$, while the associated leading eigenspace provides high-quality seeds for the subsequent heteroassociative refinement. In the next sections we exploit this lens to study non-stationary regimes, where $\pi_t$ drifts and new archetypes enter the federation, and to quantify the resulting detection latency and interference effects.

\begin{remark}[Scope of Theoretical Guarantees]
The concentration bounds and spectral thresholds derived in Theorems \ref{thm:concentration} and \ref{thm:detection-thresholds} are established under the assumption of a fixed effective noise level characterizing the aggregated operator. 
While our numerical simulations explore a more complex regime involving fully adaptive, client-specific weights $w_c(t)$ and the presence of pure-noise attackers, the theoretical analysis provides the fundamental detectability guarantees for the \textit{effective} federation. 
Essentially, the theory predicts \textit{when} archetypes become spectrally detectable provided that the adaptive controller successfully suppresses high-variance clients, thereby validating the consistency of the reconstruction limits in the asymptotic regime.
\end{remark}

\subsection{Plasticity and stability: the role of $w_c(t)$}
\label{subsec:plasticity_w}

In this subsection we provide additional details on the construction of an adaptive client-wise weight $w_c(t)$ in the blending rule~\eqref{eq:convex_comb}. Recall that, at round $t$, client $c$ updates its local operator by combining the server reconstruction from the previous round with the correlator computed from the current local batch:
\begin{equation}
\label{eq:w_update_rule}
\bm{J}_c^{(t)} \;=\; \bigl(1-w_c(t)\bigr)\,\bm{A}^{(t-1)} \;+\; w_c(t)\,\tilde{\bm{J}}_c^{(t)},
\end{equation}
where
\begin{equation}
\label{eq:A_def}
\bm{A}^{(t-1)} \;=\; \frac{1}{N}\, \bm{\hat\xi}^{(t-1)}\bigl(\bm{\hat\xi}^{(t-1)}\bigr)^{\top},
\qquad
\tilde{\bm{J}}_c^{(t)} \;=\; \frac{1}{N\,M_{c}^t}\sum_{a=1}^{M_c^t}(\bm{\eta}^{(t)}_c)^a \bigl[(\bm{\eta}_c^{(t)})^a\bigr]^{\top}.
\end{equation}
Here, $\bm A^{(t-1)}$ encodes the archetypal structure reconstructed at the server up to round $t-1$, while $\tilde{\bm J}_c^{(t)}$ captures the statistics of the new batch observed by client $c$ at round $t$.

An effective adaptive rule for $w_c(t)$ should react to \emph{two} distinct phenomena: (i) the quality of the example--to--archetype channel (controlled by $r$), and (ii) genuine distributional change across rounds (e.g.\ the appearance of previously unseen archetypes). Both effects manifest as discrepancies between the sign structure of the ``consolidated'' operator $\bm A^{(t-1)}$ and that of the ``current'' operator $\tilde{\bm J}_c^{(t)}$. We therefore measure their agreement at the level of signs.

To keep notation light, set $\bm A=\bm A^{(t-1)}$ and $\bm B=\tilde{\bm J}_c^{(t)}$, and define the sign matrices
\begin{align*}
  s^{A}_{ij} &= \operatorname{sign}(A_{ij}),\\
  s^{B}_{ij} &= \operatorname{sign}(B_{ij}).
\end{align*}
We then consider the empirical agreement probability over off-diagonal entries,
\begin{equation}
\label{eq:p_sign_agreement}
p \;:=\; \frac{1}{N(N-1)}\sum_{i\neq j}\mathbf{1}\{s^A_{ij} = s^B_{ij}\},
\end{equation}
and map it to a scalar uncertainty score via the binary entropy
\begin{equation}
\label{eq:H_AB}
H_{AB} \;:=\; h_2(p)
\;=\; -p\log_2 p - (1-p)\log_2(1-p).
\end{equation}
By construction, $H_{AB}\in[0,1]$: it is small when $\bm A$ and $\bm B$ are highly sign-consistent, and it approaches $1$ when their signs are nearly uncorrelated.

A crucial point is that $H_{AB}$ cannot be interpreted in absolute terms without accounting for the intrinsic corruption noise in the data. Indeed, $\bm B$ is a Hebbian correlator built from noisy examples, whereas $\bm A$ is a Hebbian correlator built from reconstructed archetypes. Even in the ideal case where the reconstruction is perfect (so that $\bm A$ matches the true archetypal Hebbian matrix), the example channel introduces a nonzero baseline disagreement.

Under the generative model~\eqref{eq:gen}, a simple calculation gives, for $i\neq j$,
\begin{equation}
\label{eq:pairwise_channel}
\mathbb{P}\bigl(\eta_i\eta_j=\xi^\mu_i\xi^\mu_j\bigr) \;=\; \frac{1+r^2}{2}.
\end{equation}
This motivates a \emph{minimum uncertainty} (noise floor) defined as
\begin{equation}
\label{eq:H_min}
H_{\min}(r) \;=\; h_2\!\left(\frac{1+r^2}{2}\right).
\end{equation}
At the opposite extreme, when examples are completely uninformative ($r=0$), one has maximal uncertainty,
\begin{equation}
\label{eq:H_max}
H_{\max}=h_2\!\left(\tfrac12\right)=1.
\end{equation}

We can now define the adaptive weight by normalizing the observed uncertainty above the noise floor:
\begin{equation}
\label{eq:w_entropy_rule}
w_c(t)
\;=\;
\max\bigg(0, \ \frac{H_{AB}-H_{\min}(r)}{1-H_{\min}(r)}\bigg).
\end{equation}
This mapping ensures $w_c(t)\in[0,1]$\footnote{Since finite-size fluctuations may occasionally yield empirical entropies slightly below the theoretical floor, we clip the value to ensuring non-negativity.} 
and it has the intended qualitative behavior:
\begin{itemize}
\item If $H_{AB}\approx H_{\min}(r)$, then the discrepancy between $\bm A$ and $\bm B$ is largely explained by the intrinsic corruption noise, and the rule drives $w_c(t)\approx 0$, favoring stability/consolidation.
\item If $H_{AB}$ is close to $1$, the current batch correlator becomes nearly random in sign relative to $\bm A$, which we interpret as a strong signal of distributional change; the rule then yields $w_c(t)\approx 1$, increasing plasticity.
\end{itemize}

It is worth noting that the numerator $H_{AB}-H_{\min}(r)$ can become small in two qualitatively different ways. First, for a fixed sign-consistency level $H_{AB}$, increasing $H_{\min}(r)$ corresponds to a degraded example channel (smaller $r$), so that the observed entropy is largely explained by corruption noise and $w_c(t)$ is damped. Second, for fixed $r$, the entropy $H_{AB}$ itself may relax toward $H_{\min}(r)$: the round carries little genuinely new information and the controller again reduces plasticity.

Moreover, to prevent abrupt changes of $w_c(t)$ across rounds, we apply an exponential moving average (EMA) to the instantaneous value $w_{\mathrm{current}}$ computed from~\eqref{eq:w_entropy_rule}:
\begin{equation}
\label{eq:w_ema}
w_c(t) \leftarrow \alpha\, w_{c}(t) \;+\; (1-\alpha)\, w_c(t-1),
\qquad \alpha\in[0,1].
\end{equation}
This adaptive behavior is illustrated in the numerical experiments (see Fig.~\ref{fig:adaptive}).

\begin{figure}[t]
\centering
\includegraphics[width=0.95\textwidth]{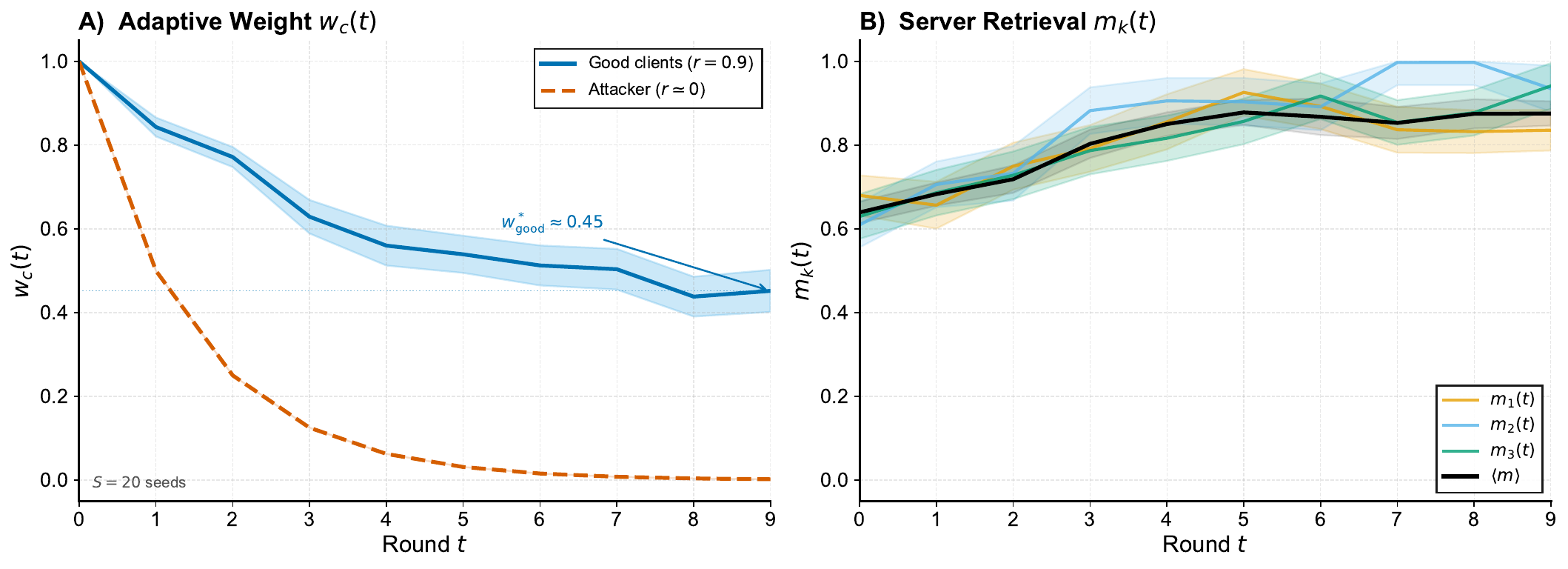}
\caption{%
Per-client adaptive weight dynamics in federated unsupervised learning with adversarial noise.
{(A)}~Temporal evolution of the adaptive weight $w_c(t)$ (Eq.~\ref{eq:convex_comb}) for good clients (blue, $r=0.9$) and one attacker client receiving pure noise (vermillion dashed, $r\simeq0$).
The weight update rule is based on normalized sign-agreement between the client's local Hebbian correlator $J^{(t)}_c$ and the server's reconstruction operator from the previous round (Section~\ref{subsec:plasticity_w}).
Good clients maintain stable non-zero weights ($w_{\mathrm{good}} \approx 0.4$--$0.6$), reflecting high-quality local data.
The attacker's weight rapidly converges to zero ($w_{\mathrm{att}}  \widetilde{\to} 0$ within $\sim$10 rounds), effectively down-weighting noisy contributions in the server aggregation.
Shaded regions represent standard error over $S=20$ independent seeds.
{(B)}~Server-side retrieval quality: per-archetype magnetization $m_k(t)$ (colored lines, $k=1,2,3$) and mean retrieval $\langle m \rangle$ (black line).
The system maintains high reconstruction fidelity ($\langle m \rangle \gtrsim 0.7$) despite the presence of one attacker, demonstrating robustness of the adaptive weighting scheme.
Parameters: $N=1000$ neurons, $K=3$ archetypes, $L=5$ clients (4 good + 1 attacker), $T=10$ rounds, $M=800$ examples/client/round, $\alpha_{\mathrm{EMA}}=0.5$.
}
\label{fig:client_aware_adaptive_weight}
\end{figure}

To validate the per-client adaptive weight mechanism described in Section~\ref{subsec:plasticity_w}, 
we conducted a federated unsupervised learning experiment in which one client (the \emph{attacker}) 
receives pure noise ($r\simeq0$), while the remaining clients observe data drawn from the same latent 
archetypes at high quality ($r=0.9$).
Figure~\ref{fig:client_aware_adaptive_weight}A shows the temporal evolution of the adaptive 
weight $w_c(t)$ (Eq.~\ref{eq:convex_comb}) for both populations.
The weight update rule, based on normalized sign-agreement between the client's local 
Hebbian matrix and the server's reconstruction operator, successfully discriminates signal from noise: 
good clients stabilize at $w_{\mathrm{good}} \approx 0.4$--$0.6$, balancing local evidence with 
server guidance, while the attacker's weight collapses to near-zero within $\sim$10 rounds.

Importantly, this down-weighting of noisy clients preserves server-side reconstruction quality 
(Figure~\ref{fig:client_aware_adaptive_weight}B).
Despite the presence of one attacker among five clients ($20\%$ contamination), 
the mean retrieval $\langle m \rangle$ remains above $0.8$ throughout training, indicating that 
the server's TAM dynamics successfully disentangle the $K=3$ latent 
archetypes from the aggregated, but adaptively weighted—Hebbian correlators.
Per-archetype magnetizations $m_k(t)$ (colored curves) exhibit stable convergence, with minor 
fluctuations attributable to stochastic sampling and the non-convex TAM energy landscape.

This experiment demonstrates that \emph{local} sign-agreement entropy, computed independently 
by each client without access to ground-truth labels, provides a reliable signal for data quality, 
enabling robust federated learning even under adversarial noise injection.
The approach generalizes naturally to heterogeneous client populations with varying $r_c$ 
(partial observations, label noise, distribution shift), offering a principled mechanism for 
\emph{plasticity-stability balance} in decentralized unsupervised settings.

\section{Numerical simulations}
\label{sec:numerics}

We now empirically validate the proposed federated associative-memory pipeline and the theoretical mechanisms developed in Secs.~\ref{subsec:bbp}--\ref{subsec:plasticity_w}. Unless otherwise stated, clients follow the pipeline of Subsec.~\ref{subsec:model-pipeline}: at each round they upload local correlators, the server aggregates them into $\bm J_s^{(t)}$, applies spectral sharpening and LAM-based factorization (Subsec.~\ref{subsec:factorization_procedure}), and broadcasts back the reconstructed operator $\hat{\bm J}^{(t)}_s$ used by clients in the convex fusion update~\eqref{eq:convex_comb}. Reconstruction quality is assessed via the magnetization~\eqref{eq:mag}. 

We consider three progressively more challenging regimes: (i) non-stationary \emph{exposure drift} and the stability--plasticity trade-off controlled by $w_c(t)$; (ii) \emph{novelty emergence}, where new archetypes enter the federation mid-training; and (iii) \emph{structured data environments}, where we evaluate the approach on structured datasets.

\begin{figure}[t!]
    \centering
    \includegraphics[width=1\linewidth]{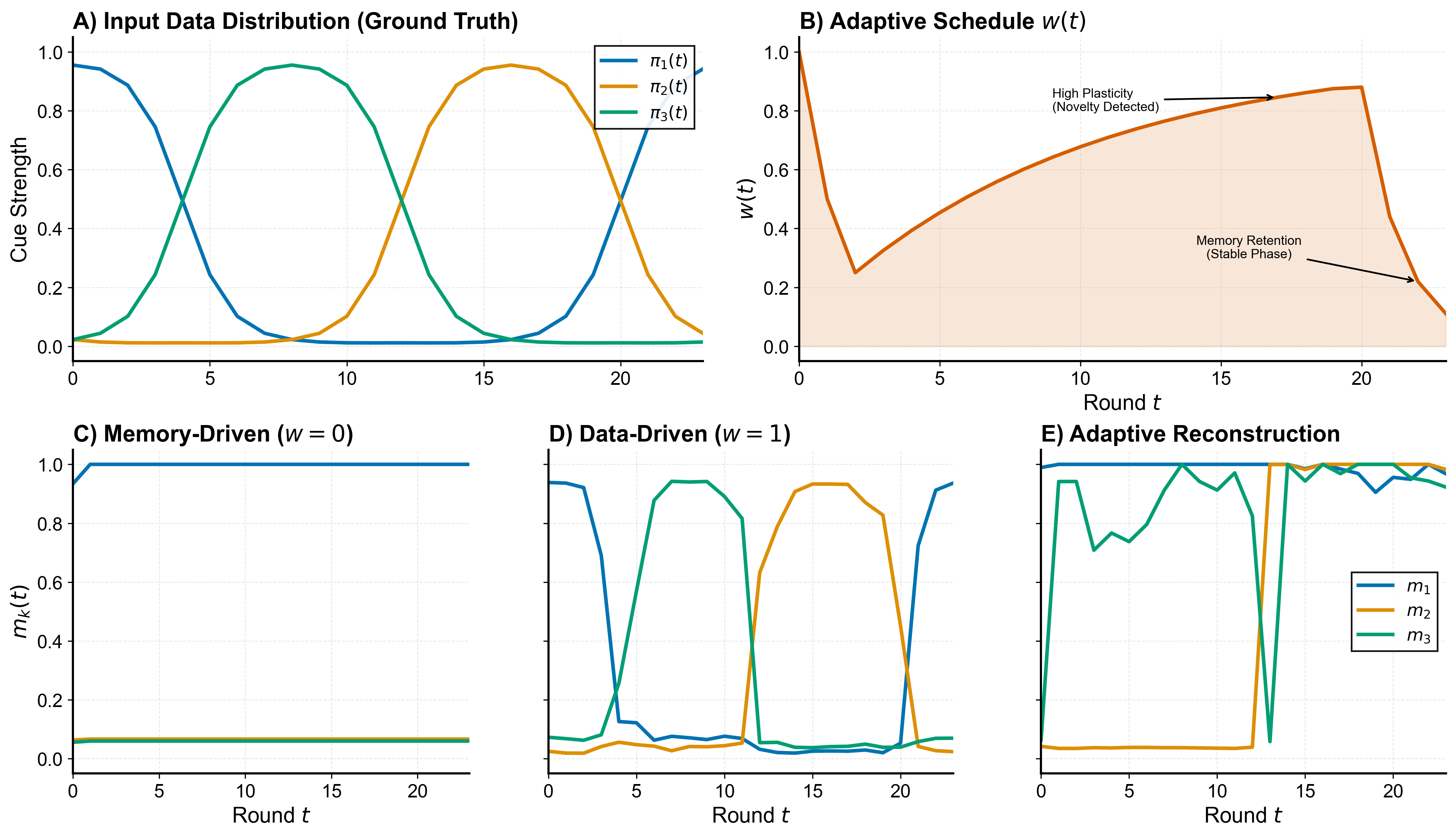}
    \caption{{Adaptive plasticity resolves the stability--plasticity dilemma under exposure drift.}
    $A)$ Ground-truth round-level mixture $\pi_t(\mu)$ (cf.~\eqref{eq:global_mixture_round}) in a sequential schedule where archetypes become dominant one after another. $B)$ Adaptive client weight $w_c(t)$ computed from the entropy controller in~\eqref{eq:w_entropy_rule} and smoothed via~\eqref{eq:w_ema}. Peaks align with mixture transitions (novelty/shift), while plateaus yield small $w_c(t)$ (consolidation).
    $C)$ Memory-driven limit ($w_c(t)\equiv 0$): the operator update ignores new evidence and the system fails to acquire later archetypes ($m_2,m_3\simeq 0$).
    $D)$ Data-driven limit ($w_c(t)\equiv 1$): the update is overly plastic and magnetizations track the instantaneous mixture, leading to rapid forgetting when exposures decrease.
    $E)$ Adaptive rule: newly exposed archetypes are acquired while previously consolidated ones remain retrievable, yielding sustained high magnetization across all modes.}
    \label{fig:adaptive}
\end{figure}


\subsection{Adaptive plasticity under exposure drift}
\label{subsec:drift}

We first study continual learning under \emph{exposure drift}, i.e.\ a time-varying round-level class mixture $\pi_t\in\Delta^{K-1}$ (Subsec.~\ref{subsec:dataset_partition}). In this experiment we use $L=3$ clients and $K=3$ archetypes. At each round, clients upload correlators and the server forms $\bm J_s^{(t)}$; at the population level, $\mathbb{E}[\bm J_s^{(t)}]$ follows the spiked decomposition~\eqref{eq:spiked-decomp} with spike strengths controlled by exposure. Hence, as the mixture $\pi_t$ shifts, the spectral evidence supporting each archetype shifts accordingly: when an archetype gains exposure, its spike becomes supercritical and produces an outlier/eigenvector alignment as described by the BBP theory (Thm.~\ref{thm:detection-thresholds}); when exposure wanes, the same direction becomes weakly supported by fresh data and risks being overwritten by overly plastic updates.

This is precisely the role of the convex fusion update~\eqref{eq:convex_comb}. For clarity, we denote by $\bm A^{(t-1)}$ the operator broadcast by the server at the previous round (constructed from reconstructed archetypes as in~\eqref{eq:server_reencoded}), and by $\tilde{\bm J}_c^{(t)}$ the client correlator from the current batch. The weight $w_c(t)$ trades consolidation versus adaptation (Subsec.~\ref{subsec:plasticity_w}). In Fig.~\ref{fig:adaptive}A, exposures shift in stages, inducing bursts of non-stationarity separated by plateaus.

The two fixed-$w$ extremes illustrate complementary failure modes. In the \emph{memory-driven} regime ($w_c(t)\equiv 0$), early consolidated structure dominates and subsequent exposures fail to reshape the operator, preventing acquisition of later archetypes (Fig.~\ref{fig:adaptive}-C). Conversely, in the \emph{data-driven} regime ($w_c(t)\equiv 1$), the system becomes overly reactive to the current batch: magnetizations quickly decay when an archetype stops being prevalent, producing catastrophic forgetting (Fig.~\ref{fig:adaptive}-D).

The adaptive entropy controller~\eqref{eq:w_entropy_rule} resolves this trade-off. As shown in Fig.~\ref{fig:adaptive}-B, $w_c(t)$ spikes during transitions (large sign inconsistency between consolidated memory and incoming data) and relaxes during stable phases (consistency up to the intrinsic noise floor $H_{\min}(r)$). The resulting dynamics (Fig.~\ref{fig:adaptive}-E) retains previously learned archetypes while still incorporating newly dominant ones. In other words, exposure governs \emph{when} an archetype becomes spectrally learnable (via BBP outliers), while adaptive $w_c(t)$ governs \emph{how strongly} the update should trust current data versus consolidated memory to prevent overwrite.

\medskip
\noindent
We next move from drift in prevalences to genuine \emph{class emergence}, where the representational subspace must expand when new archetypes become exposed.

\subsection{Continual learning with archetype emergence}
\label{subsec:novelty}

We now intensify non-stationarity by letting new archetypes enter the federation mid-training. The system starts with $K_{\mathrm{old}}=3$ archetypes. At round $t_{\mathrm{intro}}=12$, $K_{\mathrm{new}}=3$ additional archetypes are introduced through a four-round ramp in the mixture, after which all $K_{\mathrm{tot}}=K_{\mathrm{old}}+K_{\mathrm{new}}=6$ modes have non-zero exposure. Clients follow the same pipeline without replay; therefore, successful adaptation hinges on engaging plasticity when novel directions become informative, while maintaining stability on previously consolidated attractors.

\begin{figure}[t!]
    \centering
    \includegraphics[width=1\linewidth]{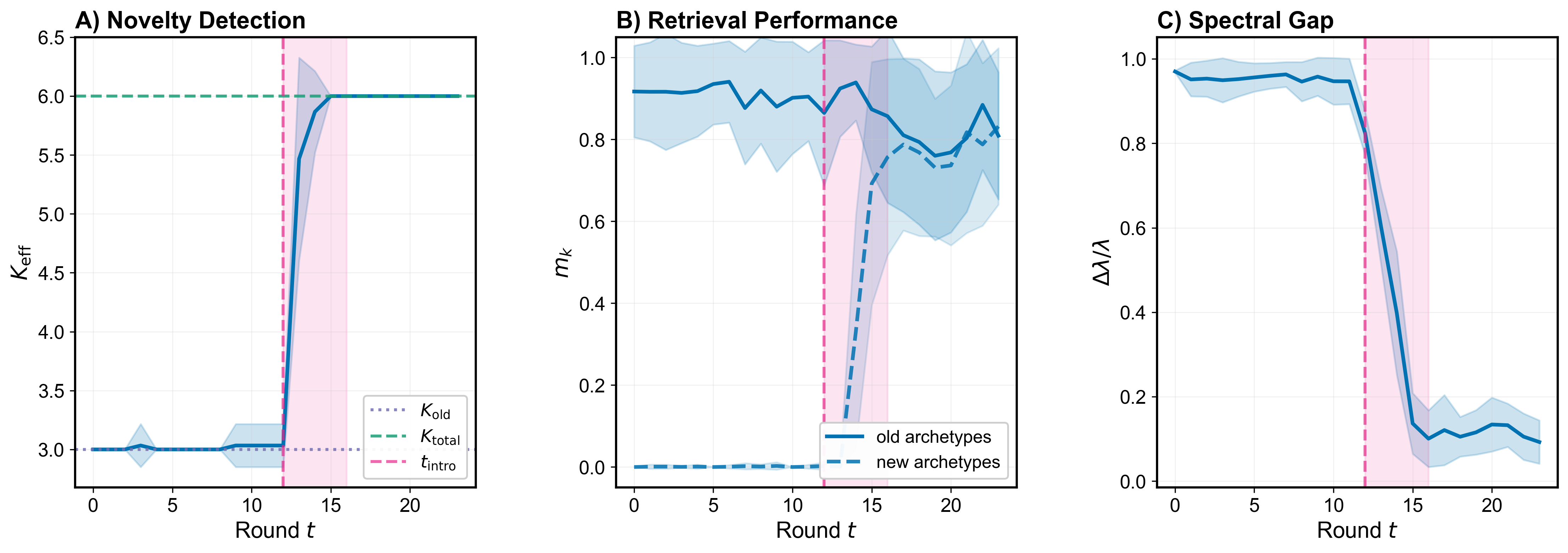}
    \caption{Detecting and integrating novel archetypes in a federated setting.
    Setup: $L=3$ clients, $N=400$, $T=24$, example quality $r=0.8$, and $M_c^t$ fixed across rounds. The federation starts with $K_{\mathrm{old}}=3$ archetypes; $K_{\mathrm{new}}=3$ additional archetypes are introduced at $t_{\mathrm{intro}}=12$ via a four-round ramp (vertical dashed line and shaded interval). Curves are averaged over 30 random seeds (shaded regions: standard deviation).
    \textit{(Left)} Effective dimensionality $\hat K(t)$ estimated by eigenvalue thresholding (cf.~\eqref{eq:keff-op}). The system rapidly transitions from $\hat K\simeq 3$ to $\hat K\simeq 6$ within a few rounds after the introduction, indicating low-latency novelty detection.
    \textit{(Center)} Magnetizations (overlaps~\eqref{eq:mag}) for old archetypes (solid) and newly introduced ones (dashed). New archetypes are acquired during/shortly after the ramp, while old archetypes exhibit only a mild transient dip and recover, consistent with targeted (rather than indiscriminate) plasticity.
    \textit{(Right)} Relative spectral gap at the $K_{\mathrm{old}}$ boundary, $(\lambda_{K_{\mathrm{old}}}-\lambda_{K_{\mathrm{old}}+1})/\lambda_{K_{\mathrm{old}}}$, computed from the (sharpened) server operator. The gap collapses during the ramp, signaling the breakdown of the $K_{\mathrm{old}}$-dimensional hypothesis, and stabilizes once the representation has expanded to $K_{\mathrm{tot}}=6$ directions.}
    \label{fig:novelty_detection}
\end{figure}

Figure~\ref{fig:novelty_detection} summarizes the key dynamics. The effective rank estimate $\hat K(t)$ (left) increases from $3$ to $6$ within a few rounds after $t_{\mathrm{intro}}$, reflecting the exposure-driven BBP mechanism: as the new archetypes gain exposure, their spikes become detectable and contribute additional outliers/eigenvectors (Thm.~\ref{thm:detection-thresholds}). The overlap curves (center) show that newly introduced archetypes are learned quickly, while previously consolidated ones remain retrievable, exhibiting only a transient decrease around the transition. Finally, the spectral-gap diagnostic (right) provides an architecture-agnostic signature of novelty: the gap shrinks when the old subspace ceases to be sufficient and stabilizes again once the representation has expanded.

Taken together, these results support the central claim of the paper: the system expands its representational subspace precisely when new directions become informative (exposure-driven detectability), and it does so with limited interference thanks to the stability--plasticity control implemented by $w_c(t)$.

\subsection{Application to structured datasets}
\label{app:structured}

\begin{figure}[t]
    \centering
    \begin{subfigure}[b]{0.20\textwidth}
        \centering
        \includegraphics[width=\textwidth]{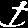}
        \caption{Anchor}
        \label{fig:img1}
    \end{subfigure}
    \hfill 
    \begin{subfigure}[b]{0.20\textwidth}
        \centering
        \includegraphics[width=\textwidth]{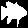}
        \caption{Bass}
        \label{fig:img2}
    \end{subfigure}
    \hfill
    \begin{subfigure}[b]{0.20\textwidth}
        \centering
        \includegraphics[width=\textwidth]{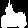}
        \caption{Motorbike}
        \label{fig:img3}
    \end{subfigure}
    
    \caption{The $K=3$ archetype utilized in the structured experiment.}
    \label{fig:tre_immagini}
\end{figure}

The proposed pipeline was validated on \emph{unstructured}, randomly generated archetypes whose entries are independent and uniformly distributed over $\{-1,+1\}$. A natural question is whether the same federated factorization and reconstruction procedure can handle \emph{structured} patterns, where pixels are spatially organized and the resulting archetypes are mutually correlated.

To investigate this question we select $K=3$ binary silhouettes (see Fig. \ref{fig:tre_immagini}) from the Caltech-101 silhouettes dataset, a collection of $28\times 28$ black-and-white object outlines ($N=HW=784$ pixels).

Once the archetypes $\{\bm\xi^\mu\}_{\mu=1}^K$ are obtained, noisy examples are generated through the same channel used for random patterns~\eqref{eq:gen}: each observed sample $\bm\eta$ is produced by independently flipping each bit of a randomly chosen archetype with probability $(1-r)/2$ (see the left column in Fig. \ref{fig:structured_results}). The full federated pipeline of Subsec.~\ref{subsec:model-pipeline} is then executed: $L=3$ clients upload local Hebbian correlators at each round; the server aggregates, applies spectral sharpening and LAM-based factorization, and broadcasts reconstructed operators; clients fuse new local evidence with the broadcast operator via the convex combination rule~\eqref{eq:convex_comb} with fixed weight $w=0.6$.

We run $T=12$ communication rounds for $10$ independent random seeds and measure reconstruction quality through the magnetization~\eqref{eq:mag}. To probe the effect of data quality, we consider two distinct noise regimes: a \emph{moderate}-quality setting with $r=0.6$ (each example has $\approx 20\%$ flipped bits) and a \emph{high}-quality setting with $r=0.8$ ($\approx 10\%$ flipped bits).

Figure~\ref{fig:structured_results} presents the outcome for both noise levels side by side. Each panel displays, from left to right: representative noisy input examples, the evolution of per-archetype magnetizations $m_\mu(t)$ averaged over the $10$ seeds (shaded bands: $\pm 1\sigma$), and the final reconstructed patterns.

In the high-quality regime ($r=0.8$, right panel) all three archetypes are recovered with high fidelity: magnetizations rise quickly above $0.9$ and remain stable throughout the $12$ rounds. The reconstructed images are visually recognizable as the original silhouettes with only minor pixel-level noise.

In the moderate-quality regime ($r=0.6$, left panel) reconstruction is more challenging. The noisier examples reduce the signal-to-noise ratio of each local correlator, slowing convergence and increasing run-to-run variability, as reflected in the wider $\pm\sigma$ bands. Nevertheless, the pipeline still achieves non-trivial magnetizations, confirming that the federated aggregation and spectral sharpening stages provide sufficient denoising to recover structured patterns even when individual examples are substantially corrupted.

\begin{figure}[t!]
    \centering
    \begin{minipage}[t]{0.49\textwidth}
        \centering
        \includegraphics[width=\textwidth]{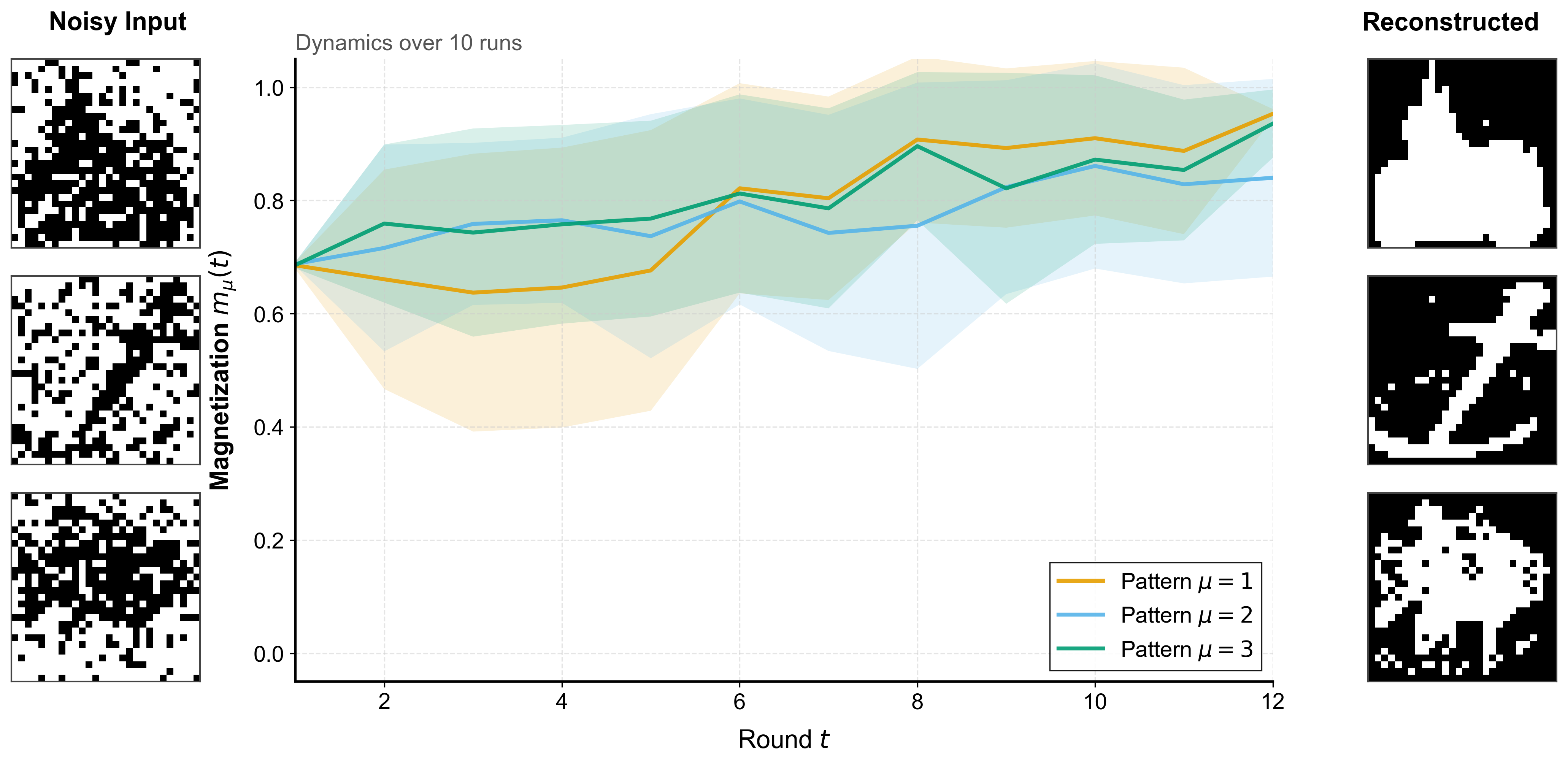}
    \end{minipage}
    \hfill
    \begin{minipage}[t]{0.49\textwidth}
        \centering
        \includegraphics[width=\textwidth]{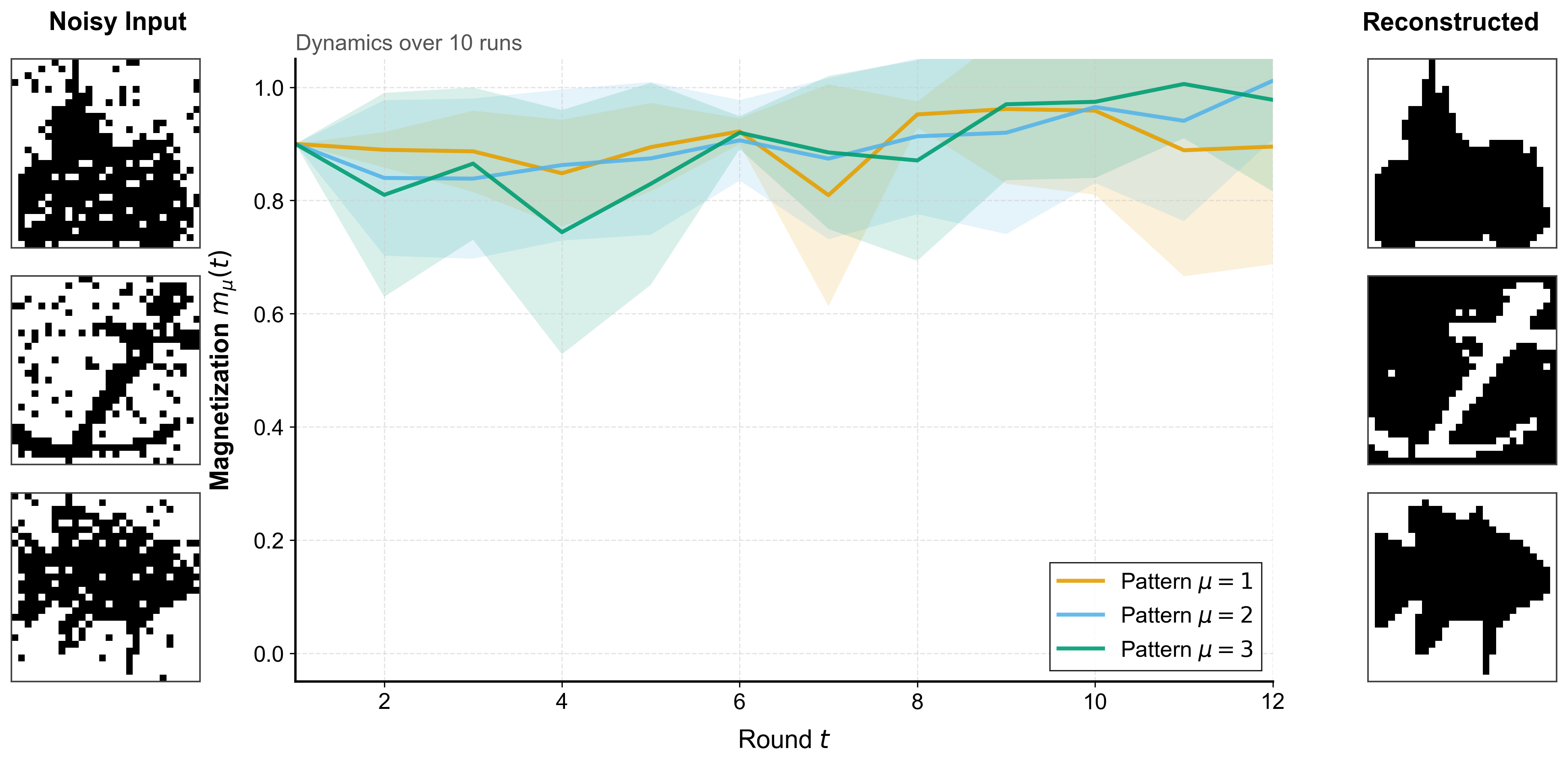}
    \end{minipage}
    \caption{Federated reconstruction of structured archetypes from Caltech-101 silhouettes.
    \textit{Left panel:} moderate-quality examples ($r=0.6$).
    \textit{Right panel:} high-quality examples ($r=0.8$).
    In each panel, the leftmost column shows representative noisy inputs, the central plot tracks the per-archetype magnetization $m_\mu(t)$ (mean $\pm$ std over $10$ seeds; cf.~\eqref{eq:mag}), and the rightmost column displays the final reconstructed archetypes. Despite the spatial correlations inherent in natural silhouettes, the federated pipeline recovers all three patterns; higher example quality leads to faster convergence and tighter confidence bands.
    Parameters: $N=784$, $L=3$, $K=3$, $T=12$, $M_{\mathrm{total}}=4000$, $w=0.6$.}
    \label{fig:structured_results}
\end{figure}

These results confirm that the federated TAM pipeline is not restricted to idealized, uncorrelated binary patterns.  The comparison between $r=0.6$ and $r=0.8$ further illustrates the role of example quality: as predicted by the theoretical analysis, higher $r$ strengthens the signal spikes in the aggregated operator, leading to faster and more reliable archetype recovery.

We note that this experiment primarily validates the pipeline on \emph{structured} data with recognizable spatial structure; it does not address the full complexity of high-dimensional image distributions. Extending the framework to richer, multi-scale image representations remains an interesting direction for future investigation.

\section{Conclusion}
\label{sec:conclusion}

We proposed a federated associative-memory framework for archetype learning under heterogeneous data, client drift, and streaming novelty, in regimes that are naturally modeled as \emph{independent but non-identically distributed} (i.n.i.d.) across clients.
By communicating compact Hebbian operators rather than raw samples, clients contribute privacy-compatible sufficient statistics that the server can aggregate and factorize to reconstruct global archetypes.
On the theoretical side, framing aggregation as a low-rank-plus-noise spectral problem enables random-matrix predictions of detectability and robustness: global archetypes emerge reliably once signal eigen-structure separates from heterogeneity-induced noise, providing a principled view of when federated consolidation is feasible under independence and heterogeneity.

Empirically, our method exhibits improved archetype reconstruction and retrieval stability across heterogeneous clients and drifting regimes, and the proposed entropy-based controller provides a practical mechanism to balance stability and plasticity without centralized replay.

Several directions remain open.
First, extending the reconstruction step beyond linear Hebbian operators to kernelized or deep-feature variants could enlarge the class of archetypes representable while preserving the operator-communication interface.
Second, integrating formal privacy guarantees (e.g.\ differential privacy noise calibrated at the operator level) and analyzing the induced spectral degradation would clarify the privacy--utility frontier.
Third, asynchronous and partial participation regimes may be studied through time-varying random-matrix models, connecting communication constraints to phase transitions in archetype detectability.
We hope that this work stimulates further connections between federated learning, associative memories, and spectral theory for continual, privacy-aware representation learning.

\section*{Acknowledgments}
A.A. acknowledges UniSalento for support via PhD-AI.
\\
A.A. acknowledges BULBUL “Brain-inspired ULtra-Fast \& Ultra-sharp neural networks” for support via post-Lauream research fellowships 
“Statistical mechanics of hetero-associative neural networks” (CUP F85F21006230001, D.D. n. 325/29-09-2025).
\\
A.L., A.L. and F.D. acknowledge Università degli Studi di Bari Aldo Moro for support via post-Lauream research fellowships under the ADAPTIVE‑AI project (CUP F83C24001470001, D.R. n. 123/16-01-2024) and via the Future Artificial Intelligence Research (FAIR) project (project code PE00000013, CUP H97G22000210007, Spoke 6 “Symbiotic AI”), funded by the European Union---NextGenerationEU. F.D. partially acknowledges Future Artificial Intelligence Research (FAIR) project (project code PE00000013, CUP H97G22000210007, Spoke 6 “Symbiotic AI”), funded by the European Union---NextGenerationEU.
\\
A.A., A.L. and A.L. authors are members of the  GNFM group within INdAM which is acknowledged too. F.D. is member of GNAMPA group within INdAM which is acknowledged too.


\begin{thebibliography}{10}

\bibitem{mcmahan2017fedavg}
H.~Brendan McMahan, Eider Moore, Daniel Ramage, Seth Hampson, and Blaise~Ag{\"u}era y~Arcas.
\newblock Communication-efficient learning of deep networks from decentralized data, 2017.

\bibitem{kairouz2021advances}
Peter Kairouz, H~Brendan McMahan, Brendan Avent, Aur{\'e}lien Bellet, Mehdi Bennis, Arjun~Nitin Bhagoji, Kallista Bonawitz, Zachary Charles, Graham Cormode, Rachel Cummings, et~al.
\newblock Advances and open problems in federated learning.
\newblock {\em Foundations and trends{\textregistered} in machine learning}, 14(1--2):1--210, 2021.

\bibitem{hsu2019measuring}
Tzu-Ming~Harry Hsu, Hang Qi, and Matthew Brown.
\newblock Measuring the effects of non-identical data distribution for federated visual classification.
\newblock {\em arXiv preprint arXiv:1909.06335}, 2019.

\bibitem{li2020fedprox}
Tian Li, Anit~Kumar Sahu, Ameet Talwalkar, and Virginia Smith.
\newblock Federated optimization in heterogeneous networks.
\newblock In {\em Proceedings of MLSys}, 2020.

\bibitem{o2014complementary}
Randall~C O’Reilly, Rajan Bhattacharyya, Michael~D Howard, and Nicholas Ketz.
\newblock Complementary learning systems.
\newblock {\em Cognitive science}, 38(6):1229--1248, 2014.

\bibitem{kumaran2016learning}
Dharshan Kumaran, Demis Hassabis, and James~L McClelland.
\newblock What learning systems do intelligent agents need? complementary learning systems theory updated.
\newblock {\em Trends in cognitive sciences}, 20(7):512--534, 2016.

\bibitem{hopfield1982Neural}
John~J Hopfield.
\newblock Neural networks and physical systems with emergent collective computational abilities.
\newblock {\em Proceedings of the National Academy of Sciences}, 79(8):2554--2558, 1982.

\bibitem{Amit1985SpinGlass}
Daniel~J Amit, Hanoch Gutfreund, and Haim Sompolinsky.
\newblock Spin-glass models of neural networks.
\newblock {\em Physical Review A}, 32(2):1007--1018, 1985.

\bibitem{personnaz1985information}
L.~Personnaz, I.~Guyon, and G.~Dreyfus.
\newblock Information storage and retrieval in spin-glass like neural networks.
\newblock {\em Journal de Physique Lettres}, 46:L359--L365, 1985.

\bibitem{ramsauer2020hopfield}
Hubert Ramsauer, Bernhard Sch{\"a}fl, Johannes Lehner, Philipp Seidl, Michael Widrich, Thomas Adler, Lukas Gruber, Markus Holzleitner, Milena Pavlovi{\'c}, Geir~Kjetil Sandve, et~al.
\newblock Hopfield networks is all you need.
\newblock In {\em International Conference on Learning Representations (ICLR)}, 2020.

\bibitem{baik2005phase}
Jinho Baik, G{\'e}rard Ben~Arous, and Sandrine P{\'e}ch{\'e}.
\newblock Phase transition of the largest eigenvalue for nonnull complex sample covariance matrices.
\newblock {\em The Annals of Probability}, 33(5):1643--1697, 2005.

\bibitem{benaych2011eigenvalues}
Florent Benaych-Georges and Raj~Rao Nadakuditi.
\newblock The eigenvalues and eigenvectors of finite, low rank perturbations of large random matrices.
\newblock {\em Advances in Mathematics}, 227(1):494--521, 2011.

\bibitem{agliari2025networks}
Elena Agliari, Andrea Alessandrelli, Adriano Barra, Martino~Salomone Centonze, and Federico Ricci-Tersenghi.
\newblock Networks of neural networks: more is different.
\newblock {\em arXiv preprint arXiv:2501.16789}, 2025.

\bibitem{Agliari2025MultiChannel}
Elena Agliari, Andrea Alessandrelli, Pedro~D Mour{\~a}o, and Alberto Fachechi.
\newblock Multi-channel pattern reconstruction through {L}-directional associative memories.
\newblock {\em arXiv preprint arXiv:2503.06274}, 2025.

\bibitem{kanter1987associative}
Ido Kanter and Haim Sompolinsky.
\newblock Associative recall of memory without errors.
\newblock {\em Physical Review A}, 35(1):380, 1987.

\bibitem{fachechi2019dreaming}
Alberto Fachechi, Elena Agliari, and Adriano Barra.
\newblock Dreaming neural networks: forgetting spurious memories and reinforcing pure ones.
\newblock {\em Neural Networks}, 112:24--40, 2019.

\bibitem{personnaz1986collective}
L~Personnaz, I~Guyon, and G~Dreyfus.
\newblock Collective computational properties of neural networks: New learning mechanisms.
\newblock {\em Physical Review A}, 34(5):4217, 1986.

\bibitem{tropp2012user}
Joel~A Tropp.
\newblock User-friendly tail bounds for sums of random matrices.
\newblock {\em Foundations of computational mathematics}, 12(4):389--434, 2012.

\bibitem{tropp2015introduction}
Joel~A Tropp et~al.
\newblock An introduction to matrix concentration inequalities.
\newblock {\em Foundations and Trends{\textregistered} in Machine Learning}, 8(1-2):1--230, 2015.

\bibitem{bai1993limit}
Zhi-Dong Bai, Yong-Qua Yin, et~al.
\newblock Limit of the smallest eigenvalue of a large dimensional sample covariance matrix.
\newblock {\em Ann. Probab}, 21(3):1275--1294, 1993.

\bibitem{KnowlesYin2017}
Antti Knowles and Jun Yin.
\newblock Anisotropic local laws for random matrices.
\newblock {\em Probability Theory and Related Fields}, 169(1):257--352, 2017.

\bibitem{AndersonGuionnetZeitouni2010}
Greg~W. Anderson, Alice Guionnet, and Ofer Zeitouni.
\newblock {\em An Introduction to Random Matrices}, volume 118 of {\em Cambridge Studies in Advanced Mathematics}.
\newblock Cambridge University Press, Cambridge, 2010.

\bibitem{PasturShcherbina2011}
Leonid Pastur and Mariya Shcherbina.
\newblock {\em Eigenvalue Distribution of Large Random Matrices}, volume 171 of {\em Mathematical Surveys and Monographs}.
\newblock American Mathematical Society, Providence, RI, 2011.

\bibitem{Erdos2017Dynamical}
L{\'a}szl{\'o} Erd{\H{o}}s and Horng-Tzer Yau.
\newblock {\em A dynamical approach to random matrix theory}, volume~28.
\newblock American Mathematical Soc., 2017.

\bibitem{BaiSilverstein2010}
Zhidong Bai and Jack~W. Silverstein.
\newblock {\em Spectral Analysis of Large Dimensional Random Matrices}.
\newblock Springer, New York, 2nd edition, 2010.

\bibitem{ErdosYauYin2012}
L{\'a}szl{\'o} Erd{\H{o}}s, Horng-Tzer Yau, and Jun Yin.
\newblock Rigidity of eigenvalues of generalized wigner matrices.
\newblock {\em Advances in Mathematics}, 229(3):1435--1515, 2012.

\bibitem{Johnstone2001}
Iain~M. Johnstone.
\newblock On the distribution of the largest eigenvalue in principal components analysis.
\newblock {\em Annals of Statistics}, 29(2):295--327, 2001.

\bibitem{BaikSilverstein2006}
Jinho Baik and Jack~W. Silverstein.
\newblock Eigenvalues of large sample covariance matrices of spiked population models.
\newblock {\em Journal of Multivariate Analysis}, 97(6):1382--1408, 2006.

\bibitem{Paul2007}
Debashis Paul.
\newblock Asymptotics of sample eigenstructure for a large dimensional spiked covariance model.
\newblock {\em Statistica Sinica}, 17(4):1617--1642, 2007.

\bibitem{BenaychGeorges2011Eigenvalues}
Florent Benaych-Georges and Raj~Rao Nadakuditi.
\newblock The eigenvalues and eigenvectors of finite, low rank perturbations of large random matrices.
\newblock {\em Advances in Mathematics}, 227(1):494--521, 2011.

\bibitem{BloemendalErdosKnowlesYauYin2014}
Alex Bloemendal, L{\'a}szl{\'o} Erd{\H{o}}s, Antti Knowles, Horng-Tzer Yau, and Jun Yin.
\newblock Isotropic local laws for sample covariance and generalized wigner matrices.
\newblock {\em Electronic Journal of Probability}, 19(33):1--53, 2014.

\bibitem{Davis1970The}
Chandler Davis and William~M Kahan.
\newblock The rotation of eigenvectors by a perturbation.
\newblock {\em SIAM Journal on Numerical Analysis}, 7(1):1--46, 1970.

\end{thebibliography}

\appendix
\label{sec:app}

\section{Algorithms and pseudocode}

\label{app:pseudocode}

\begin{algorithm}[H]
\caption{Pattern reconstruction in LAM: pre-processing and candidate generation\label{algo:reconstruction_gen}}
\SetKw{KwParam}{Parameters:}
\KwParam{Number of layers $L$, neurons per layer $N$; number of nonlinear mixtures $m$; number of parallel dynamic steps $N_p$; inverse thermal noise $\beta$; eigenvalue threshold $\tau_1$; KS tolerance $\Delta_{\min}=10^{-4}$; KS step $\epsilon$ as in \eqref{eq:iter_JKS}.}\\
\KwIn{Hebbian coupling matrix $\bm J^{\text{Hebb}}=\{J^{\text{Hebb}}_{ij}\}_{i,j=1,\dots,N}$.}
\KwOut{Candidate reconstructed patterns $\mathcal{V}=\{\bm\zeta\}$.}
\BlankLine

\textbf{1. Spectral pre-processing and initialization}\\
Initialize KS iteration:\;
$\bm J \leftarrow \bm J^{\text{Hebb}}$, $\Delta \leftarrow 10^{4}$, $k \leftarrow 0$\;
\While{$\Delta \geq \Delta_{\min}$}{
    Update KS-renormalized coupling:
    $\displaystyle 
    \bm J_{\text{new}} \leftarrow 
    \left(1+\frac{\epsilon}{1+\epsilon k}\right)\bm J
    -\left(\frac{\epsilon}{1+\epsilon k}\right)\bm J^{2}$\;
    Compute update magnitude: $\Delta \leftarrow \|\bm J_{\text{new}}-\bm J\|$\;
    Set $\bm J \leftarrow \bm J_{\text{new}}$ and increment $k \leftarrow k+1$\;
}
Set $\widehat{\bm J}^{KS}\leftarrow \bm J$\;
\BlankLine

Compute $\widehat{\bm J}^{KS}$ eigendecomposition\;
Retain the $\hat K$ eigenvectors $\{\tilde{\bm x}^{\delta}\}_{\delta=1}^{\hat K}$ associated with eigenvalues $\lambda_\delta > \tau_1$ (see~\eqref{eq:JKS_crit})\;
\BlankLine

Generate $m$ nonlinear initial configurations:\;
\For{$\alpha \leftarrow 1$ \KwTo $m$}{
    Sample coefficients $c_\delta^\alpha \sim \mathcal{N}(0,1)$ for $\delta=1,\dots,\hat K$\;
    Construct mixture and binarize:
    $\displaystyle 
    \bm x^\alpha \leftarrow \operatorname{sign}\!\Big(\sum_{\delta=1}^{\hat K} c_\delta^\alpha\,\tilde{\bm x}^{\delta}\Big)$\;
}

\BlankLine
\textbf{2. LAM dynamics and candidate collection}\\
Initialize candidate set $\mathcal{V}\leftarrow \emptyset$\;
\For{$\alpha \leftarrow 1$ \KwTo $m$}{
    Initialize all layers/modules of the LAM model with input $\bm x^\alpha$ and set $t \leftarrow 0$\;
    \While{$t < N_p$}{
        Update network state according to \eqref{eq:update_MC_iclr} with thermal noise $\beta^{-1}$\;
        $t \leftarrow t+1$\;
    }
    Extract final-layer configuration(s) as reconstructed candidate(s) $\bm\zeta$\;
    Append candidate(s): $\mathcal{V} \leftarrow \mathcal{V} \cup \{\bm\zeta\}$\;
}

\Return $\mathcal{V}$\;
\end{algorithm}

\begin{algorithm}[H]
\caption{Pattern reconstruction in LAM: pruning and acceptance filtering\label{algo:reconstruction_pruning}}
\SetKw{KwParam}{Parameters:}
\KwParam{Neurons per layer $N$; overlap threshold $\delta$ (default $0.5$); acceptance threshold $\tau_2$.}\\
\KwIn{Candidate set $\mathcal{V}=\{\bm\zeta\}$; KS-renormalized coupling matrix $\widehat{\bm J}^{KS}$.}
\KwOut{Accepted reconstructed patterns $\mathcal{Z}=\{\bm\hat{\xi}\}$.}
\BlankLine

\textbf{1. Duplicate removal via overlap threshold}\\
Initialize pruned set $\widetilde{\mathcal{V}} \leftarrow \emptyset$\;
\ForEach{$\bm\zeta \in \mathcal{V}$}{
    $\texttt{is\_duplicate} \leftarrow \texttt{false}$\;
    \ForEach{$\bm\zeta' \in \widetilde{\mathcal{V}}$}{
        Compute overlap:
        $$\displaystyle q(\bm\zeta,\bm\zeta') \leftarrow \frac{1}{N}\sum_{i=1}^N \zeta_i\,\zeta'_i$$
        \If{$q(\bm\zeta,\bm\zeta') \ge \delta$}{
            $\texttt{is\_duplicate} \leftarrow \texttt{true}$\;
            \textbf{break}\;
        }
    }
    \If{$\neg\,\texttt{is\_duplicate}$}{
        $\widetilde{\mathcal{V}} \leftarrow \widetilde{\mathcal{V}} \cup \{\bm\zeta\}$\;
    }
}
\BlankLine

\textbf{2. Acceptance test via quadratic score}\\
Initialize accepted set $\mathcal{Z} \leftarrow \emptyset$\;
\ForEach{$\bm\zeta \in \widetilde{\mathcal{V}}$}{
    Compute pattern score under $\widehat{\bm J}^{KS}$:
    $$\displaystyle p(\bm\zeta) \leftarrow \frac{1}{N}\,\bm\zeta^{T}\,\widehat{\bm J}^{KS}\,\bm\zeta$$
    \If{$p(\bm\zeta) \ge \tau_2$}{
        $\hat{\bm\xi}\leftarrow \bm\zeta$\;
        $\mathcal{Z} \leftarrow \mathcal{Z} \cup \{\hat{\bm\xi}\}$\;
    }
}

\Return $\mathcal{Z}$\;
\end{algorithm}

\begin{algorithm}[H]
\caption{Federated M2O pipeline\label{algo:Fede_algo}}
\SetKw{KwParam}{Parameters:}
\KwParam{Number of clients $L$; neurons $N$; communication rounds $T$; batch sizes $\{M_c^{t}\}$; blending weights $\{w_c(t)\in[0,1]\}$; server factorization routine $\mathrm{LAM}(\cdot)$ (Sec.~\ref{subsec:factorization_procedure}).}\\
\KwIn{For each client $c$, a stream of \emph{local} batches $\{\bm\eta_c^{(t)}\}_{t=0}^{T}$, with $\bm\eta_c^{(t)}=\{\bm\eta_{c}^{(t),a}\}_{a=1}^{M_c^{t}}$ and $\bm\eta_{c}^{(t),a}\in\{-1,+1\}^{N}$ (never shared).}
\KwOut{Server reconstructed patterns $\{\hat{\bm\xi}_{s,\mu}^{(T)}\}_{\mu=1}^{\hat{K}}$.}
\BlankLine

\textbf{Initialization (client-side, round $t=0$)}\\
\ForEach{$c \leftarrow 1$ \KwTo $L$ \textbf{in parallel}}{
    Compute local synaptic matrix from the first batch:\;
    $$\displaystyle \big(J_c^{(0)}\big)_{ij} \leftarrow \frac{1}{N\,M_c^{0}}
    \sum_{a=1}^{M_c^{0}} \big(\eta_c^{(0),a}\big)_i\,\big(\eta_c^{(0),a}\big)_j$$
    Send $\bm J_c^{(0)}$ to server\;
}
\BlankLine

\For{$t \leftarrow 0$ \KwTo $T-1$}{
    \textbf{Server aggregation}\\
    Receive $\{\bm J_c^{(t)}\}_{c=1}^{L}$ and compute:
    $$\displaystyle \big(J_s^{(t)}\big)_{ij} \leftarrow \frac{1}{L}\sum_{c=1}^{L}\big(J_c^{(t)}\big)_{ij}$$
    \BlankLine

    \textbf{Server factorization and re-encoding}\\
    Using LAM model to reconstruct patterns underlied the $\bm J_s^{(t)}$:$$\displaystyle \{\hat{\bm\xi}^{(t)}_{s,\mu}\}_{\mu=1}^{\hat K} \leftarrow \mathrm{LAM}\!\left(\bm J_s^{(t)}\right)$$
    Compute server synaptic matrix
    $\displaystyle \big(\hat J_s^{(t)}\big)_{ij} \leftarrow \frac{1}{N}\sum_{\mu=1}^{\hat K}
    \big(\hat\xi^{(t)}_{s,\mu}\big)_i\,\big(\hat\xi^{(t)}_{s,\mu}\big)_j$\;
    Broadcast $\hat{\bm J}_s^{(t)}$ to all clients\;
    \BlankLine

    \textbf{Client fusion (new batch at round $t{+}1$) and upload}\\
    \ForEach{$c \leftarrow 1$ \KwTo $L$ \textbf{in parallel}}{
        Compute operator from the \emph{new} local batch:
        $$\displaystyle \big(J^{\mathrm{loc}}_{c}\big)_{ij} \leftarrow \frac{1}{N\,M_c^{t+1}}
        \sum_{a=1}^{M_c^{t+1}} \big(\eta_c^{(t+1),a}\big)_i\,\big(\eta_c^{(t+1),a}\big)_j$$
        Fuse local evidence with server reconstruction:
        $$\displaystyle \bm J_c^{(t+1)} \leftarrow w_c(t)\,\bm J_c^{\mathrm{loc}} + \big(1-w_c(t)\big)\,\hat{\bm J}_s^{(t)}$$
        Upload $\bm J_c^{(t+1)}$ to server\;
    }
}
\Return $\{\hat{\bm\xi}_{s,\mu}^{(T)}\}_{\mu=1}^{\hat{K}}$ \;
\end{algorithm}

\section{Random Matrix theory digression}
\label{app:threshold}
The blended correlator $J^{(t)}_{\mathrm{rec}}(w_t)$ is informative but still entangles signal with bulk noise. We recall that the true archetypes $\{\bm{\xi}^{\mu}\}_{\mu=1}^K$ are eigenvectors (with eigenvalue equal to $1$) of the pseudo–inverse coupling matrix
\[
(\tilde J_{\mathrm{KS}})_{ij}\;=\;\frac{1}{N}\sum_{\mu,\nu=1}^K \xi_i^\mu\,(C^{-1})_{\mu\nu}\,\xi_j^\nu,
\]
where $C_{\mu\nu}=\frac{1}{N}\sum_{i=1}^N \xi_i^\mu\,\xi_i^\nu$ is the pattern correlation matrix.

\noindent We can obtain the latter coupling matrix as fixed point of the following algorithm \cite{fachechi2019dreaming}
\begin{equation}\label{eq:phi}
\Phi_{\varepsilon}(J)\;:=\;J\;+\;\varepsilon\big(J-J^2\big),\qquad \varepsilon\in(0,1],
\end{equation}
and we compose a small number of micro–steps with a decaying step–size,
\begin{equation}\label{eq:ks-stack}
J^{(t)}_{\mathrm{KS}}
\;:=\;
\Phi_{\varepsilon_{n_{\mathrm{prop}}-1}}\circ\cdots\circ \Phi_{\varepsilon_1}\circ \Phi_{\varepsilon_0}\big(J^{(t)}_{\mathrm{rec}}(w_t)\big),
\qquad
\varepsilon_k=\frac{\varepsilon_0}{1+k\varepsilon_0},\ \ k=0,\dots,n_{\mathrm{prop}}-1.
\end{equation}

In the eigenbasis of $J$, the map \eqref{eq:phi} updates each eigenvalue by a
forward--Euler step of a logistic flow,
\begin{equation}\label{eq:eig-logistic}
\lambda\ \longmapsto\ \lambda+\varepsilon\,\lambda(1-\lambda).
\end{equation}
Eigenvalues in $(0,1)$ are pushed upward toward $1$, eigenvalues larger than $1$
are pulled downward, and values very close to $0$ are almost unchanged. After a
small number of iterations (in practice $n_{\mathrm{prop}}\in[3,10]$ is
sufficient), the gap between the leading spikes and the bulk becomes more
pronounced, which makes the subsequent thresholding step more robust. Once per
round we also enforce symmetry and remove diagonal terms,
\begin{equation}\label{eq:sym-off}
J^{(t)}_{\mathrm{KS}}\ \leftarrow\ \tfrac12\left(J^{(t)}_{\mathrm{KS}}
+J^{(t)\top}_{\mathrm{KS}}\right),\qquad
J^{(t)}_{\mathrm{KS}}\ \leftarrow\ \mathrm{Off}\big(J^{(t)}_{\mathrm{KS}}\big),
\end{equation}
so that the operator remains symmetric and focused on cross--feature
interactions. The map \eqref{eq:phi} is a matrix polynomial and is therefore
cheap to evaluate; when an eigendecomposition is already available (for instance
for thresholding), one can equivalently apply \eqref{eq:eig-logistic} directly
to the eigenvalues.

\vspace{3mm}
\noindent
We then choose a data--dependent threshold $\tau$ according to one of the
following rules and define\footnote{For quick checks in simulations, one may use
a fixed threshold $\tau\in(0,1)$ on the \emph{sharpened} spectrum
$\lambda_i\big(J^{(t)}_{\mathrm{KS}}\big)$ (for example $\tau=0.5$), but all
reported effective ranks and principled counts are based on the data--driven
thresholds below, applied to $J^{(t)}_\mathrm{KS}$ in a way that is consistent
with MP/TW assumptions.}
\begin{equation}\label{eq:keff}
\hat{K}(t)\;:=\;\#\{\,i:\ \lambda_i(J^{(t)}_\mathrm{KS})>\tau\,\}.
\end{equation}

\noindent
\emph{Shuffle threshold.} We construct a null model by randomly reshuffling only
the off--diagonal entries of $J^{(t)}_\mathrm{KS}$ (keeping both symmetry and
the diagonal fixed), compute the empirical distribution of $\lambda_{\max}$ over
several reshuffles, and set $\tau$ to be the $(1-\alpha)$ quantile. This choice
is robust when the form of the noise is uncertain.

\noindent
\emph{Marchenko--Pastur (MP) edge.} We estimate the bulk variance $\sigma_n^2$
from the lower part of the spectrum of $J^{(t)}_\mathrm{KS}$ and the aspect
ratio $q:=N/M_{\mathrm{round}}$. We then use the upper MP edge
\begin{equation}\label{eq:mp-edge}
\lambda_+\;=\;\sigma_n^2\big(1+\sqrt{q}\big)^2,
\end{equation}
optionally with a small finite--sample correction $\delta_N>0$, and set
$\tau=\lambda_++\delta_N$. With this choice, the MP--based effective rank is
\begin{equation}\label{eq:keff-mp}
\hat{K}^{\mathrm{MP}}(t)\;=\;\#\Big\{\,i:\,
\lambda_i\!\big(J^{(t)}_\mathrm{KS}\big)>\lambda_+(q)+\delta_N\Big\},
\end{equation}
which, by Theorem~\ref{thm:detection-thresholds} (D1)--(D2), coincides (up to
finite--sample fluctuations) with the number of supercritical spikes with high
probability. For each detected outlier $\lambda>\tau$, we also estimate its
spike strength via the closed--form inversion of the outlier--location map
(see~\eqref{eq:outlier-loc}):
\begin{equation}\label{eq:kappa-inv}
\widehat\kappa(\lambda)\;=\;\frac{\big(\lambda/\sigma_n^2-1-q\big)
+\sqrt{\big(\lambda/\sigma_n^2-1-q\big)^2-4q}}{2},
\end{equation}
which provides a continuous diagnostic of exposure intensity consistent with
(D5).

\vspace{3mm}
\noindent
Initialization does \emph{not} run once per seed. Instead, we schedule a number of blocks according to a coverage heuristic. Specifically, for a given number of layers $P$, we set
\begin{equation}\label{eq:s-schedule}
s(t)\;=\;10\cdot\Big\lfloor \frac{\hat{K}(t)}{P}\,\log\!\Big(\frac{\hat{K}(t)}{0.01}\Big)\Big\rfloor,
\end{equation}
so that a total of $s(t)\times P$ candidate states are explored before pruning; cf. the balls–into–bins estimate $m_{\min}\!\approx\!(K/L)\log(K/\varepsilon)$ with $\varepsilon=0.01$.

\noindent
Each block $b=1,\dots,s(t)$ selects a (multi)set $\mathcal S_b$ of $P$ seeds from $\{\hat{\bm\xi}^{\mu}_{\mathrm{seed}}\}_{\mu=1}^{\hat{K}(t)}$ (e.g., uniformly or weighted by spectral gaps) and runs the $L$–layer refinement. The resulting raw candidate pool is then
\begin{equation}\label{eq:Xi-raw}
\Xi^{(t)}_{\mathrm{raw}}
\;:=\;
\bigcup_{b=1}^{s(t)} \mathrm{Refine}\!\left(\mathcal S_b;P\right),
\end{equation}
to be subsequently deduplicated and filtered by the acceptance tests.
\medskip

\noindent
The eigenvector–to–seed step is justified by BBP alignment: once exposure crosses the spectral threshold, an outlier splits from the MP bulk and its eigenvector aligns with the archetypal direction; sign–quantized leaders therefore provide consistent seeds for refinement, while the coverage schedule ensures sufficient exploration across classes.

\section{Detailed Proofs of Theorems}\label{appendix}

\subsection{Proof of Theorem~\ref{thm:concentration}}
\begin{proof}
Fix a communication round $t$. Throughout the proof, $t$ denotes this fixed round index, while we use $u>0$ (or $s>0$) for generic deviation levels in the matrix Bernstein bounds.
Recall that the round-wise Hebbian estimator is
\[
\bm{J}_s^{(t)}
:= \frac{1}{N M_{\mathrm{round}}}
   \sum_{\ell=1}^L \sum_{m=1}^{M_c}
   \eta_{\ell,t,m}\,\eta_{\ell,t,m}^\top,
\qquad
M_{\mathrm{round}} := L M_c,
\]
where $\eta_{\ell,t,m}\in\mathbb{R}^N$ denotes the $m$-th local sample at client $\ell$ and round $t$.
For each pair $(\ell,m)$ set
\[
B_{\ell,m} := \eta_{\ell,t,m}\,\eta_{\ell,t,m}^\top
\in\mathbb{R}^{N\times N},
\qquad
\bar B_{\ell} := \mathbb{E}[B_{\ell,m}],
\]
where the expectation is over the sampling of $\eta_{\ell,t,m}$ under the generative model of Sec.~\ref{sec:setting}.
By construction, $\bar B_{\ell}$ does not depend on $m$ but may depend on the client $\ell$, reflecting statistical non-i.i.d.\ behavior across the federation.

\medskip\noindent
Define the centered and rescaled increments
\[
Z_{\ell,m}
:= \frac{B_{\ell,m} - \bar B_{\ell}}{N M_{\mathrm{round}}},
\qquad
S := \sum_{\ell=1}^L \sum_{m=1}^{M_c} Z_{\ell,m}.
\]
Each $Z_{\ell,m}$ is self-adjoint, mean-zero, and the family $\{Z_{\ell,m}\}$ is independent over $(\ell,m)$.
Moreover,
\[
S
= \frac{1}{N M_{\mathrm{round}}} \sum_{\ell,m} B_{\ell,m}
  - \frac{1}{N M_{\mathrm{round}}} \sum_{\ell,m} \bar B_{\ell}
= \bm{J}_s^{(t)} - \mathbb{E}[\bm{J}_s^{(t)}],
\]
so that $\|S\|_{\mathrm{op}} = \|\bm{J}_s^{(t)} - \mathbb{E}[\bm{J}_s^{(t)}]\|_{\mathrm{op}}$.

\medskip\noindent
For each client $\ell$, the population covariance at round $t$ can be written as
\[
\Theta_{t,\ell}
:= \frac{1}{N}\,\bar B_{\ell}
= \sigma^2 I + r^2 \sum_{\mu=1}^K \pi_{t,\ell}(\mu)\,u_\mu u_\mu^\top,
\qquad u_\mu := \xi^\mu/\sqrt{N},
\]
where $\pi_{t,\ell}$ is the client-specific class mixture and $(u_\mu)$ are the normalized archetypes.
The round-level population operator is the average
\[
\Theta_t := \mathbb{E}[\bm{J}_s^{(t)}]
= \frac{1}{N M_{\mathrm{round}}} \sum_{\ell,m} \bar B_{\ell}
= \frac{1}{L} \sum_{\ell=1}^L \Theta_{t,\ell}
= \sigma^2 I + r^2 \sum_{\mu=1}^K \pi_t(\mu)\,u_\mu u_\mu^\top,
\]
with $\pi_t := L^{-1} \sum_{\ell} \pi_{t,\ell}$.
In particular, each $\Theta_{t,\ell}$ and $\Theta_t$ is positive semidefinite, with spectrum contained in some compact interval $[0,\Lambda_\star]$, and
\[
\Lambda_t := \max_{\ell} \lambda_{\max}(\Theta_{t,\ell}) < \infty.
\]

\medskip\noindent
We first control $\|Z_{\ell,m}\|_{\mathrm{op}}$.
Using $\|B_{\ell,m}\|_{\mathrm{op}} = \|\eta_{\ell,t,m}\|_2^2$ and
$\|\bar B_{\ell}\|_{\mathrm{op}} = N\,\lambda_{\max}(\Theta_{t,\ell})$, we obtain
\begin{align*}
\|Z_{\ell,m}\|_{\mathrm{op}}
&= \frac{\|B_{\ell,m} - \bar B_{\ell}\|_{\mathrm{op}}}
         {N M_{\mathrm{round}}}
 \;\le\;
   \frac{\|B_{\ell,m}\|_{\mathrm{op}} + \|\bar B_{\ell}\|_{\mathrm{op}}}
        {N M_{\mathrm{round}}} \\
&\le
   \frac{\|\eta_{\ell,t,m}\|_2^2/N + \lambda_{\max}(\Theta_{t,\ell})}
        {M_{\mathrm{round}}}.
\end{align*}
In the bounded/Rademacher channel (items (ii)–(iii) in the theorem), we have
$\|\eta_{\ell,t,m}\|_2^2 \equiv N$ almost surely, hence
\[
\|Z_{\ell,m}\|_{\mathrm{op}}
\le \frac{1 + \Lambda_t}{M_{\mathrm{round}}}
=: R_t.
\]
In the sub-Gaussian channel (item (i)), the same scaling holds up to a universal constant:
standard sub-Gaussian moment bounds imply
$\|\eta_{\ell,t,m}\|_2^2 \le C N$ with high probability, and a truncation argument (or a sub-exponential matrix Bernstein/Freedman inequality, see e.g.~\cite{tropp2012user, tropp2015introduction}) allows one to absorb $C$ into $R_t$.
This only affects absolute numerical constants and leaves the functional dependence on $M_{\mathrm{round}}$ unchanged.

\medskip\noindent
We now compute the variance parameter
\[
v_t := \Big\|\sum_{\ell,m} \mathbb{E}[Z_{\ell,m}^2]\Big\|_{\mathrm{op}}.
\]
\noindent
We first treat the bounded/Rademacher channel.
Since $B_{\ell,m} = \eta_{\ell,t,m}\eta_{\ell,t,m}^\top$ and
\[
B_{\ell,m}^2
= (\eta_{\ell,t,m}\eta_{\ell,t,m}^\top)^2
= (\eta_{\ell,t,m}^\top\eta_{\ell,t,m})\,\eta_{\ell,t,m}\eta_{\ell,t,m}^\top,
\]
the assumption $\|\eta_{\ell,t,m}\|_2^2 \equiv N$ implies
\[
B_{\ell,m}^2 = N B_{\ell,m}
\quad\Longrightarrow\quad
\mathbb{E}[B_{\ell,m}^2] = N\,\bar B_{\ell}.
\]
Therefore
\begin{align*}
\mathbb{E}\big[(B_{\ell,m} - \bar B_{\ell})^2\big]
&= \mathbb{E}[B_{\ell,m}^2] - \bar B_{\ell}^2
 = N\,\bar B_{\ell} - \bar B_{\ell}^2 \\
&= N^2\big(\Theta_{t,\ell} - \Theta_{t,\ell}^2\big)
 = N^2\,\Theta_{t,\ell}(I-\Theta_{t,\ell}).
\end{align*}
Using $Z_{\ell,m} = (B_{\ell,m} - \bar B_{\ell})/(N M_{\mathrm{round}})$, we get
\[
\mathbb{E}[Z_{\ell,m}^2]
= \frac{1}{N^2 M_{\mathrm{round}}^2}\,
  \mathbb{E}\big[(B_{\ell,m} - \bar B_{\ell})^2\big]
= \frac{1}{M_{\mathrm{round}}^2}\,
  \Theta_{t,\ell}(I-\Theta_{t,\ell}),
\]
and hence
\[
\sum_{\ell,m} \mathbb{E}[Z_{\ell,m}^2]
= \frac{1}{M_{\mathrm{round}}^2}
  \sum_{\ell,m} \Theta_{t,\ell}(I-\Theta_{t,\ell}).
\]
Taking operator norm and using that the spectrum of each $\Theta_{t,\ell}$ is contained in $[0,\Lambda_\star]$, we obtain
\begin{align*}
v_t
&:= \Big\|\sum_{\ell,m} \mathbb{E}[Z_{\ell,m}^2]\Big\|_{\mathrm{op}} \\
&\le \frac{1}{M_{\mathrm{round}}^2}
      \sum_{\ell,m}
      \|\Theta_{t,\ell}(I-\Theta_{t,\ell})\|_{\mathrm{op}} \\
&= \frac{1}{M_{\mathrm{round}}^2}
   \sum_{\ell,m}
   \max_{\lambda\in\operatorname{spec}(\Theta_{t,\ell})} \lambda(1-\lambda) \\
&\le \frac{M_{\mathrm{round}}}{M_{\mathrm{round}}^2}
     \sup_{\lambda\ge 0} \lambda(1-\lambda)
 = \frac{1}{4M_{\mathrm{round}}},
\end{align*}
since the function $\lambda \mapsto \lambda(1-\lambda)$ attains its maximum $1/4$ on $[0,1]$ at $\lambda=1/2$.

\medskip\noindent
In the homogeneous case $\Theta_{t,\ell}\equiv \Theta_t$ for all $\ell$, the computation simplifies and yields the exact identity
\[
\sum_{\ell,m} \mathbb{E}[Z_{\ell,m}^2]
= \frac{1}{M_{\mathrm{round}}}\,\Theta_t(I-\Theta_t),
\qquad
v_t = \frac{1}{M_{\mathrm{round}}}
      \big\|\Theta_t(I-\Theta_t)\big\|_{\mathrm{op}},
\]
so that the bound $v_t\le 1/(4M_{\mathrm{round}})$ follows as a special case.

In the sub-Gaussian channel, an analogous argument applies.
The identity $B_{\ell,m}^2 = (\eta_{\ell,t,m}^\top\eta_{\ell,t,m})\,B_{\ell,m}$ still holds, and standard sub-Gaussian moment bounds give
$\mathbb{E}[\|\eta_{\ell,t,m}\|_2^4] \le C N^2$ for some universal $C$.
This implies a uniform bound
\[
\|\mathbb{E}[B_{\ell,m}^2]\|_{\mathrm{op}}
\le C N^2,
\]
and hence
\[
\big\|\mathbb{E}[(B_{\ell,m}-\bar B_{\ell})^2]\big\|_{\mathrm{op}}
\le C N^2.
\]
It follows that $v_t \le C/M_{\mathrm{round}}$, which again has the same
$1/M_{\mathrm{round}}$ dependence and can be absorbed into a universal constant in the final bound.

\medskip\noindent
We now apply a standard matrix Bernstein inequality for sums of independent self-adjoint matrices (see, e.g., \cite{tropp2012user, tropp2015introduction}).
Let $d:=N$ be the dimension.
For all $u>0$,
\[
\mathbb{P}\big(\|S\|_{\mathrm{op}} > u\big)
\le 2d \exp\!\left(
  -\,\frac{u^2/2}{\,v_t + (R_t u)/3}
\right).
\]
Substituting the bounds on $v_t$ and $R_t$, and using the elementary inequality
\[
\frac{x^2}{a+bx}
\;\ge\; \min\Big\{\frac{x^2}{2a}, \frac{x}{2b}\Big\}
\qquad\text{for all }x,a,b>0,
\]
we obtain, for some universal numerical constant $c_1>0$,
\begin{equation}\label{eq:concentration-last}
    \mathbb{P}\big(\|\bm{J}_s^{(t)} - \mathbb{E}[\bm{J}_s^{(t)}]\|_{\mathrm{op}} > u\big)
\le 2N\,
\exp\Bigg(
  -\,c_1\,M_{\mathrm{round}}\cdot
  \min\Bigg\{
    \frac{u^2}{\displaystyle\sup_{\lambda\in[0,1]} \lambda(1-\lambda)},
    \ \frac{u}{1+\Lambda_t}
  \Bigg\}
\Bigg).
\end{equation}
Since the prefactor $2N$ only enters through a logarithm when the failure probability is inverted (as in the statement of Theorem~\ref{thm:concentration}), it can be absorbed into the numerical constants.
This yields the concentration inequality claimed in the main text, up to universal constant factors.
\medskip
\noindent\textbf{(i) Sub-Gaussian coordinates.}
In the sub-Gaussian channel, we first normalize to unit covariance scale.
Let $\tilde\eta_{\ell,t,m}:=\eta_{\ell,t,m}/\sigma$ and define
\[
\tilde{\boldsymbol{J}}_s^{(t)}
:= \frac{1}{N M_{\mathrm{round}}}
   \sum_{\ell,m}
   \tilde\eta_{\ell,t,m}
   \bigl(\tilde\eta_{\ell,t,m}\bigr)^{\top}
= \sigma^{-2} \boldsymbol{J}_s^{(t)},
\qquad
\tilde\Theta_t := \mathbb{E}\!\left[\tilde{\boldsymbol{J}}_s^{(t)}\right]
= \sigma^{-2}\Theta_t.
\]
By construction, the eigenvalues of $\tilde\Theta_t$ are contained in a compact interval $[0,\Lambda_\star]$ with $\Lambda_\star=\mathcal{O}(1)$, and
\[
\sup_{\lambda\in\operatorname{spec}(\tilde\Theta_t)} \lambda(1-\lambda)\le \tfrac14,
\qquad
1+\lambda_{\max}(\tilde\Theta_t)\le C
\]
for some universal constant $C$ (e.g., $C=2$ if $\lambda_{\max}(\tilde\Theta_t)\le 1$).
Applying~\eqref{eq:concentration-last} to $\tilde{\boldsymbol{J}}_s^{(t)}$ in place of $\boldsymbol{J}_s^{(t)}$ yields
\[
\mathbb{P}\!\left(\,\bigl\|\tilde{\boldsymbol{J}}_s^{(t)}-\mathbb{E}\!\left[\tilde{\boldsymbol{J}}_s^{(t)}\right]\bigr\|_{\mathrm{op}}>s\,\right)
\;\le\;2N\exp\Big(-\,c_2\,M_{\mathrm{round}}\cdot \min\{s^2,\,s\}\Big),
\]
for some constant $c_2>0$.
Since $\boldsymbol{J}_s^{(t)}-\mathbb{E}\!\left[\boldsymbol{J}_s^{(t)}\right]
=\sigma^2\bigl(\tilde{\boldsymbol{J}}_s^{(t)}-\mathbb{E}\!\left[\tilde{\boldsymbol{J}}_s^{(t)}\right]\bigr)$, we have
\[
\bigl\|\boldsymbol{J}_s^{(t)}-\mathbb{E}\!\left[\boldsymbol{J}_s^{(t)}\right]\bigr\|_{\mathrm{op}}>u
\quad\Longleftrightarrow\quad
\bigl\|\tilde{\boldsymbol{J}}_s^{(t)}-\mathbb{E}\!\left[\tilde{\boldsymbol{J}}_s^{(t)}\right]\bigr\|_{\mathrm{op}}>u/\sigma^2.
\]
Substituting $s=u/\sigma^2$ in the previous inequality gives
\[
\mathbb P\left(\,\|\bm{J}_s^{(t)}-\mathbb E[\bm{J}_s^{(t)}]\|_{\mathrm{op}}>u\,\right)
\;\le\;2N\exp\left(-\,c_2\,M_{\mathrm{round}}\cdot
\min\Big\{\tfrac{u^2}{\sigma^{4}},\ \tfrac{u}{\sigma^{2}}\Big\}\right),
\]
which is the claimed form in item~(i) (up to a readjustment of the numerical constant $c_1$).

\medskip
\noindent\textbf{(ii) Bounded / Rademacher channel.}
In the Rademacher channel, $\chi_j\in\{\pm1\}$ with $\mathbb{E}[\chi_j]=r$, and hence
$\|\eta_{\ell,t,m}\|_2^2\equiv N$ almost surely.
As computed above, this implies $B_{\ell,m}^2 = N B_{\ell,m}$ and
\[
\mathbb{E}\big[(B_{\ell,m}-\bar B_\ell)^2\big]
= N^2\,\Theta_{t,\ell}(I-\Theta_{t,\ell}),
\]
so that the variance proxy satisfies
\[
v_t
:= \Big\|\sum_{\ell,m}\mathbb{E}[Z_{\ell,m}^2]\Big\|_{\mathrm{op}}
\le \frac{1}{4M_{\mathrm{round}}},
\]
because $\lambda(1-\lambda)\le 1/4$ for all $\lambda\ge 0$.
Moreover, in this channel we have the uniform bound
$\|Z_{\ell,m}\|_{\mathrm{op}}\le R_t\le c_0/M_{\mathrm{round}}$ for a universal $c_0>0$.
A Hoeffding-type matrix Bernstein inequality for bounded self-adjoint increments then yields a universal constant $c_\ast>0$ such that, for all $u>0$,
\[
\mathbb P\left(\,\|\bm{J}_s^{(t)}-\mathbb E[\bm{J}_s^{(t)}]\|_{\mathrm{op}}>u\,\right)
\;\le\;2N\exp\left(-\,c_\ast\,M_{\mathrm{round}}\cdot \min\Big\{\tfrac{u^2}{\displaystyle\sup_{\lambda\in[0,1]}\lambda(1-\lambda)},\ u\Big\}\right).
\]
Using again $\sup_{\lambda\in[0,1]}\lambda(1-\lambda)=\tfrac14$, we obtain
\[
\mathbb P\left(\,\|\bm{J}_s^{(t)}-\mathbb E[\bm{J}_s^{(t)}]\|_{\mathrm{op}}>u\,\right)
\;\le\;2N\exp\left(-\,\tilde c\,M_{\mathrm{round}}\cdot \min\{u^2,\,u\}\right),
\]
for another universal constant $\tilde c>0$, which is the bound stated in item~(ii) up to harmless changes in numerical constants.

\medskip
\noindent\textbf{(iii) Finite-$N$ refined bound.}
Set $m:=M_{\mathrm{round}}$ and normalize $Y_k:=\eta_k/\sqrt N$, so that
\[
J:=\frac1m\sum_{k=1}^m Y_kY_k^{\top},\qquad
\Sigma:=\mathbb E[J].
\]
Define $X_k:=Y_kY_k^{\top}-\Sigma$, which are self-adjoint, mean-zero, and independent, and note that
\[
J-\Sigma=\frac1m\sum_{k=1}^m X_k.
\]
In the bounded/Rademacher model, $\|Y_k\|_2=1$ almost surely, hence
\[
\|X_k\|_{\mathrm{op}}
\le \|Y_kY_k^{\top}\|_{\mathrm{op}}+\|\Sigma\|_{\mathrm{op}}
\le 1+1=2=:R_\star.
\]
Writing $B:=Y_kY_k^{\top}$, we have $B^2=B$, so
\[
\mathbb {E}[X_k^2]
= \mathbb {E}[(B-\Sigma)^2]
= \mathbb {E}[B^2]-\Sigma^2
= \mathbb {E}[B]-\Sigma^2
= \Sigma-\Sigma^2.
\]
Thus the variance parameter is
\[
V_0:=\Big\|\sum_{k=1}^m \mathbb {E}[X_k^2]\Big\|_{\mathrm{op}}
= m\,v_\star,\qquad
v_\star:=\|\Sigma-\Sigma^2\|_{\mathrm{op}}
= \max_{\lambda\in\operatorname{spec}(\Sigma)}\lambda(1-\lambda)
\le \tfrac14.
\]
A dimension-explicit matrix Bernstein inequality for sums of self-adjoint matrices gives, for any sum deviation $U>0$,
\[
\mathbb P\left(\,\Big\|\sum_{k=1}^m X_k\Big\|_{\mathrm{op}}\ge U\,\right)
\ \le\ 2N\,\exp\left(-\,\frac{U^2}{2V_0+\tfrac{2}{3}R_\star U}\right).
\]
Setting $U=mu$ (where $u$ is the average deviation) and using the elementary inequality
$\tfrac{x^2}{a+bx}\ge \tfrac12\min\{\tfrac{x^2}{a},\,\tfrac{x}{b}\}$,
we obtain universal constants $c,C>0$ such that
\[
\mathbb P\left(\,\|J-\Sigma\|_{\mathrm{op}}\ge u\,\right)
\ \le\ 2\exp\left(-\,c\,m\,\min\Big\{\tfrac{u^2}{v_\star},\,\tfrac{u}{R_\star}\Big\}+C\log N\right),
\]
where $C\log N$ accounts for the dimensional prefactor $2N$.
In our model, $\Sigma=\sigma^2 I + r^2\sum_\mu \pi_t(\mu) u_\mu u_\mu^{\top}$, so $\lambda_{\max}(\Sigma)\le 1$ and
$v_\star\le \min\{\tfrac14,\,1-\sigma^2\}=\min\{\tfrac14,\,r^2\}$,
which yields the finite-$N$ bound stated in item~(iii).
\end{proof}

\begin{remark}\label{rem:concentration-discussion}
The derived bounds reveal that the concentration speed is governed by the spectral mapping $\lambda \mapsto \lambda(1-\lambda)$. Since the population covariance satisfies $\Sigma \succeq \sigma^2 I$, the variance proxy is bounded by $v_\star \le r^2 = 1-\sigma^2$. This highlights a favorable trade-off: in low-noise regimes (large $r$), the eigenvalues push towards the boundaries of the interval $[0,1]$, thereby reducing the effective variance and tightening the concentration.
From a scaling perspective, if the total sample size grows linearly with dimension ($M_{\mathrm{round}} = \Theta(N)$), the deviation probability decays exponentially in $N$, matching the accuracy predicted by local laws in RMT.
Finally, we note that the dimensional term $C \log N$ in the exponent accounts for the ambient dimension; for matrices with rapidly decaying spectra, standard arguments allow replacing $N$ with the intrinsic dimension (effective rank), yielding further refinement.
\end{remark}

\subsection{BBP threshold lemmata and proofs}\label{subapp:bbp-details}

\begin{theorem}[MP universality + isotropic global law]\label{thm:mp-global-law}

Let $Z \in \mathbb{R}^{N \times m}$ be a random matrix with independent entries.
We view $Z$ as a collection of $m$ independent sample vectors (columns) $x_1,\dots,x_m\in\mathbb{R}^N$.
Assume the entries are standardized and sub-Gaussian:
\begin{equation}\label{eq:moments}
\mathbb{E}[Z_{ij}] = 0,\qquad
\mathbb{E}[Z_{ij}^2] = 1,\qquad
\sup_{i,j}\|Z_{ij}\|_{\psi_2} \le B < \infty.
\end{equation}
To facilitate the application of standard universality results, we further assume that
the first four moments of $Z_{ij}$ match those of a standard Gaussian (or that the mismatch is negligible for the precision we target).
Let
\[
S_N := \frac{1}{m} Z Z^\top = \frac{1}{m}\sum_{\mu=1}^m x_\mu x_\mu^\top
\]
be the sample covariance matrix, and let $G(z) := (S_N - z I_N)^{-1}$ be its resolvent.
Set $q_N := N/m$ and assume $q_N \to q \in (0,\infty)$ as $N\to\infty$.
Denote by $\mu_{\mathrm{MP}}$ the Marchenko--Pastur law with aspect ratio $q$, and by $m_{\mathrm{MP}}(z)$ its Stieltjes transform.

Fix $\delta>0$ and a large constant $R > (1+\sqrt{q})^2 + \delta$. Define the bounded spectral domain strictly away from the asymptotic support of $\mu_{\mathrm{MP}}$:
\begin{equation}\label{eq:Ddelta}
\mathcal{D}_\delta
:=
\Big\{ z \in \mathbb{C}:\ \operatorname{dist}\big(z,\operatorname{supp}(\mu_{\mathrm{MP}})\big) \ge \delta, \ |z| \le R \Big\}.
\end{equation}

Then, for any finite collection of unit vectors $\mathcal{V} \subset \mathbb{S}^{N-1}$ with
$|\mathcal{V}| \le N^{C_0}$, there exists $C = C(\delta,q,B,C_0,R)$ such that, for every $D>0$ and all $N$ large enough,
\begin{equation}\label{eq:isotropic-global}
\mathbb{P}\left(
\sup_{z \in \mathcal{D}_\delta}
\max_{a,b \in \mathcal{V}}
\Big| a^\top G(z)\,b - m_{\mathrm{MP}}(z)\,\langle a, b \rangle \Big|
\le C \sqrt{\frac{\log N}{N}}
\right)
\ge 1 - N^{-D}.
\end{equation}
\end{theorem}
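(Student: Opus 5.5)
The plan is to derive the isotropic global law away from the spectrum in three steps. First we prove a pointwise isotropic bound for each fixed $z$ and fixed pair $(a,b)$ by concentration. Then we identify the deterministic limit through the Marchenko--Pastur self-consistent equation. Finally we extend to the whole domain $\mathcal{D}_\delta$ and all of $\mathcal{V}$ by a net argument and a union bound.

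The preliminary step is to localize the spectrum. By Bai--Yin type bounds for sub-Gaussian entries, $\|S_N\|_{\mathrm{op}}$ and $\lambda_{\min}(S_N)$ stay within $\delta/2$ of the MP edges $(1\pm\sqrt q)^2$ with probability $1-N^{-D}$; the usual exception $\lambda=0$ when $q>1$ lies in $\operatorname{supp}\mu_{\mathrm{MP}}$. We get this from a matrix Bernstein estimate in the spirit of Theorem~\ref{thm:concentration} for the operator norm, combined with the extreme-eigenvalue results of~\cite{bai1993limit}. On this event, for every $z\in\mathcal{D}_\delta$ we have $\|G(z)\|_{\mathrm{op}}\le 2/\delta$, and $z\mapsto G(z)$ is Lipschitz with constant $4/\delta^2$.

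The central step, concentration around the mean, fixes $z$ and $a,b$. We view $F(Z)=a^\top G(z)b$ as a function of the independent columns $x_\mu$. Changing a single column is a rank-one perturbation, and by the Sherman--Morrison formula it changes $F$ by $O(1/(m\delta^2))$ on the good event. Applying a martingale (Burkholder or Azuma) bound to the Doob decomposition over columns then gives $|F-\mathbb{E}F|\le C\sqrt{\log N/N}$ with probability $1-N^{-D-3}$. The bad event has to be handled by truncation: we replace $G$ by a smoothed version that is bounded everywhere, and the truncation error is negligible.

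To identify the mean, we use Schur-complement (leave-one-column-out) identities. With $G^{(\mu)}$ denoting the resolvent with column $\mu$ removed,
\[
G x_\mu = \frac{G^{(\mu)}x_\mu}{1+m^{-1}x_\mu^\top G^{(\mu)}x_\mu},
\]
and the resolvent identity $I+zG=S_NG$ then yields
\[
\mathbb{E}\,a^\top G b\,\Big(-z-\frac{1}{1+q_N\,\mathbb{E}\,N^{-1}\operatorname{tr}G}\Big)=\langle a,b\rangle+O(N^{-1/2}).
\]
Here we use sub-Gaussian quadratic-form concentration (Hanson--Wright) for $m^{-1}x_\mu^\top G^{(\mu)}x_\mu$ around $q_N N^{-1}\operatorname{tr}G^{(\mu)}$. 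Taking $a=b=e_i$ and averaging over $i$, the trace satisfies the MP self-consistent equation up to $O(N^{-1/2})$. The solution of that equation is stable on $\mathcal{D}_\delta$, since its derivative stays bounded away from the support, so $N^{-1}\mathbb{E}\operatorname{tr}G=m_{\mathrm{MP}}(z)+O(N^{-1/2})$. Substituting back gives $\mathbb{E}\,a^\top Gb=m_{\mathrm{MP}}\langle a,b\rangle+O(N^{-1/2})$; replacing $q_N$ by $q$ costs $|q_N-q|$, which we absorb. The four-moment matching assumption is not needed at this scale, though one could alternatively compare to the Gaussian case via Lindeberg replacement.

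For uniformity, we take an $N^{-2}$-net of $\mathcal{D}_\delta$, which has polynomially many points because $|z|\le R$. We union bound over the net and over the at most $N^{2C_0}$ pairs in $\mathcal{V}$, and extend to all of $\mathcal{D}_\delta$ using the Lipschitz bounds on $G$ and $m_{\mathrm{MP}}$. The main obstacle is the concentration step: the martingale bound must hold uniformly despite $\|G\|$ being controlled only on a high-probability event, and the leave-one-out error terms must be shown to be $O(\sqrt{\log N/N})$ rather than $O(1)$ for deterministic but arbitrary directions $a,b$. If this step is delicate, the fallback is to invoke the isotropic local law of~\cite{BloemendalErdosKnowlesYauYin2014,KnowlesYin2017} directly, which implies the statement on $\mathcal{D}_\delta$ with an even better error $N^{-1/2+\epsilon}$.
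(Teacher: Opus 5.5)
Your route is genuinely different from the paper's, and it is viable. The paper gets concentration from a Lipschitz/Herbst argument. It identifies the mean only for Gaussian entries, using rotational invariance ($\mathbb{E}\,G^{\mathrm G}=m_N I_N$) plus the Pastur--Shcherbina self-consistent equation, and then transfers it to general entries with the Green function comparison theorem; that is where four-moment matching enters. You instead run a Bai--Silverstein-type argument directly on the given ensemble: a martingale over columns for the fluctuations, and leave-one-column-out Schur complements with Hanson--Wright for the mean. Your route buys several things:
\begin{itemize}
\item You are right that moment matching is superfluous.
\item Your mean error $O(N^{-1/2})$ sits inside the target rate. The comparison theorem only gives $CN^{-c}$ with unspecified $c$, which the paper never reconciles with $\sqrt{\log N/N}$.
\item A column martingale needs only independence and tail bounds. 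Herbst's argument needs a log-Sobolev inequality, which sub-Gaussian tails do not supply. For Rademacher entries, the paper's own channel, Talagrand's inequality would need convexity, which $Z\mapsto a^\top G(z)b$ lacks.
\item You notice that $\mathbb{E}\,G(z)$ may not exist at real $z\in\mathcal D_\delta$, which the paper ignores.
\end{itemize}
The paper's route buys mainly brevity, by outsourcing to black-box theorems.

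Several steps need repair, however.

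(1) The bounded-difference claim is false. By Sherman--Morrison,
$a^\top(G-G^{(\mu)})b=-m^{-1}(a^\top G^{(\mu)}x_\mu)(x_\mu^\top G^{(\mu)}b)/(1+m^{-1}x_\mu^\top G^{(\mu)}x_\mu)$.
On the good event its only deterministic bound is of order $\|x_\mu\|^2\|G^{(\mu)}\|^2/m\approx 4q/\delta^2=O(1)$, not $O(1/m)$. The size $O(1/m)$ holds only in conditional moments: $x_\mu$ is independent of $G^{(\mu)}$ and $a,b$ are deterministic, so $a^\top G^{(\mu)}x_\mu$ is an $O(1)$ sub-Gaussian linear form. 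Azuma therefore fails, and even with high-probability increments $O(\log N/m)$ it loses logarithmic factors. You need a variance-sensitive martingale inequality, such as Freedman or Burkholder--Rosenthal with $p\asymp\log N$, built on $\sum_\mu\mathbb{E}_{\mu-1}|D_\mu|^2=O(1/m)$ and sub-exponential increments.

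(2) The fallback is not ``even better'': $N^{-1/2+\epsilon}$ is weaker than $C\sqrt{\log N/N}$. Citing the isotropic local law therefore does not give the stated rate, though it would suffice for Lemma~\ref{lem:decoupling}.

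(3) Absorbing $|q_N-q|$ is legitimate only if $|q_N-q|=O(\sqrt{\log N/N})$, because $m_{\mathrm{MP}}$ genuinely depends on $q$. With only $q_N\to q$ assumed, you must either center at $m_{\mathrm{MP}}$ with ratio $q_N$ or add a rate hypothesis. The paper has the same blind spot.

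(4) Matrix Bernstein cannot localize the edges: it only gives $\|S_N-I\|_{\mathrm{op}}\lesssim\sqrt{q\log N}+q\log N$. Bai--Yin is an almost-sure statement, so it does not give probability $1-N^{-D}$ either. Use the $N^{-D}$-probability versions \cite[Thms.~5.9--5.11]{BaiSilverstein2010} after truncating the entries, applied to $S_N$ and, by a union bound, to every $S^{(\mu)}$.

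(5) There is a sign slip in your self-consistent equation. With $\bar m:=\mathbb{E}\,N^{-1}\operatorname{tr}G$, the identity $S_NG=I+zG$ gives $\mathbb{E}\,a^\top Gb\,\big(\tfrac{1}{1+q_N\bar m}-z\big)=\langle a,b\rangle+\dots$. This is the MP equation $qzm^2+(z+q-1)m+1=0$.

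Finally, your argument, like the paper's, makes $C$ depend on $D$. This cannot be avoided: the fluctuations are Gaussian on scale $N^{-1/2}$, so with $C$ fixed the failure probability is genuinely of order $N^{-cC^2}$, not $N^{-D}$ for every $D$.
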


\begin{proof}
The argument combines three standard ingredients of modern random matrix theory:
(I) concentration of bilinear forms of the resolvent;
(II) identification of the mean in the Gaussian case via the Dyson equation;
(III) universality of the mean via the Green Function Comparison Theorem.
A final step (IV) upgrades pointwise bounds to uniform ones in $z$ and over $\mathcal{V}$ using rigidity and an $\varepsilon$--net.

Throughout, constants $C,c>0$ may change from line to line but depend only on $(\delta,q,B,C_0,R)$.

\medskip
\noindent
Fix deterministic unit vectors $a,b\in\mathbb{R}^N$ and a spectral parameter $z \in \mathcal{D}_\delta$.
We first control the random fluctuations of the bilinear form
\[
F(z) := a^\top G(z)\,b
\]
around its mean $\mathbb{E}F(z)$.

A prerequisite for concentration is the boundedness of the resolvent norm. By the Bai--Yin theorems \cite[Thms. 1 \& 2]{bai1993limit} (or the stronger rigidity estimates in Knowles--Yin~\cite[Lemma 10.1]{KnowlesYin2017}), with high probability the spectrum of $S_N$ is contained in a small neighborhood of the asymptotic support. Consequently, for $z \in \mathcal{D}_\delta$, the map $Z \mapsto a^\top G(z) b$ is Lipschitz continuous with constant depending on $\delta^{-1}$.

Conditional on this high-probability event, we apply standard concentration inequalities for functions of independent sub-Gaussian variables (see Anderson--Guionnet--Zeitouni~\cite[Sec.~2.3 \& 4.4]{AndersonGuionnetZeitouni2010} or the general Herbst's argument). This yields a sub-Gaussian bound of the form:
\begin{equation}\label{eq:conc-step}
\mathbb{P}\Big( \big| a^\top G(z) b - \mathbb{E}[a^\top G(z) b] \big| > t \Big)
\;\le\; 2 \exp\!\big( - c\,N\,\min(t,t^2)\big)
\end{equation}
for all $t>0$, with $c>0$ independent of $N$.
Choosing
\[
t = C \sqrt{\frac{\log N}{N}}
\]
and enlarging $C$ if necessary yields
\[
\big| a^\top G(z) b - \mathbb{E}[a^\top G(z) b] \big|
\;\le\; C \sqrt{\frac{\log N}{N}}
\]
with probability at least $1-N^{-D}$, for any prescribed $D>0$, provided $N$ is large enough.
This step relies on the independence and tail properties of the entries, without requiring gaussianity.

\medskip
\noindent
We now specialize to the case where $Z$ has i.i.d.\ $\mathcal{N}(0,1)$ entries, and denote by $G^{\mathrm{G}}(z)$ the corresponding resolvent.
By rotational invariance of the Gaussian measure,
\[
\mathbb{E}\,G^{\mathrm{G}}(z) = m_N(z)\,I_N,
\qquad
m_N(z):=\frac{1}{N}\,\mathbb{E}\operatorname{Tr}G^{\mathrm{G}}(z).
\]

It is a standard consequence of the resolvent method for sample covariance matrices (see, e.g., Pastur--Shcherbina~\cite[Sec.~7.2 and 7.6]{PasturShcherbina2011}) that $m_N(z)$ satisfies the Marchenko--Pastur fixed-point equation up to an error of order $O(1/N)$, uniformly on domains at positive distance from the MP support.
More precisely, for $z\in\mathcal{D}_\delta$,
\[
1 + z\,m_N(z) - \frac{q}{1+m_N(z)} = O\!\left(\frac{1}{N}\right),
\]
where the implicit constant depends on $(\delta,q)$ but not on $N$.
Since the exact solution $m_{\mathrm{MP}}(z)$ of
\(
1+z m_{\mathrm{MP}}(z) = \frac{q}{1+m_{\mathrm{MP}}(z)}
\)
is stable under small perturbations of the equation on $\mathcal{D}_\delta$ (the derivative of the defining RHS does not vanish away from the spectral edges), we obtain
\[
|m_N(z) - m_{\mathrm{MP}}(z)| \le \frac{C}{N}
\qquad \text{for all } z \in \mathcal{D}_\delta.
\]
Hence, for any unit vectors $a,b$,
\begin{equation}\label{eq:gauss-mean}
\big|\mathbb{E}[a^\top G^{\mathrm{G}}(z)b] - m_{\mathrm{MP}}(z)\,\langle a,b\rangle\big|
= |\langle a,b\rangle|\cdot|m_N(z)-m_{\mathrm{MP}}(z)|
\le \frac{C}{N}.
\end{equation}
This deterministic error is asymptotically negligible compared to the fluctuation scale
$N^{-1/2}\sqrt{\log N}$ from Step~I.

\medskip
\noindent
We now remove the Gaussian assumption and return to a general matrix $Z$ satisfying
\eqref{eq:moments} and the moment-matching condition up to order~4.
Let $G(z)$ and $G^{\mathrm{G}}(z)$ denote the resolvents of $S_N$ in the non-Gaussian and Gaussian cases respectively.

We appeal to the Green Function Comparison Theorem (GFT). While originally formulated for Wigner matrices, the result extends to sample covariance matrices, as detailed in Erd\H{o}s--Yau~\cite[Thm~16.1 and Remark~16.2]{Erdos2017Dynamical}. 
(See also Knowles--Yin~\cite{KnowlesYin2017} for an alternative approach via continuous interpolation and self-consistent comparison).
The GFT asserts that if two ensembles have independent entries with matching first four moments and suitable tail bounds,
then expectations of smooth functionals of their resolvents differ asymptotically by a small power of $N^{-1}$.

In particular, applying this to the bilinear form functional (which is smooth away from the real axis), for each fixed $z$ with $\Im z>0$ and unit vectors $a,b$, we have:
\begin{equation}\label{eq:gft}
\big| \mathbb{E}[a^\top G(z) b] - \mathbb{E}[a^\top G^{\mathrm{G}}(z) b] \big|
\;\le\; C N^{-c}
\end{equation}
for some $c=c(q,B)>0$.
The precise value of $c$ is immaterial for our purposes; the key point is that the error vanishes as $N\to\infty$, uniformly on bounded $z$–domains away from the spectrum.

Combining the Gaussian mean computation \eqref{eq:gauss-mean} with the comparison estimate \eqref{eq:gft}, we deduce that, for any fixed $z\in\mathcal{D}_\delta$ and unit $a,b$,
\begin{equation}\label{eq:mean-total}
\big|\mathbb{E}[a^\top G(z)b] - m_{\mathrm{MP}}(z)\,\langle a,b\rangle\big|
\;\le\; C N^{-c'}
\end{equation}
for some $c'>0$ (e.g.\ $c'=\min\{1,c\}$).

\medskip
\noindent
The bounds in Steps~I–III hold for each fixed $z$ and each fixed pair $(a,b)$.
We now show that they can be made uniform over all $z\in\mathcal{D}_\delta$ and all $(a,b)\in\mathcal{V}\times\mathcal{V}$.

\emph{(a) Rigidity and bounds on the resolvent.}
By the global Marchenko--Pastur law and sharp bounds on extreme eigenvalues
(see, e.g., Bai--Silverstein~\cite[Thms.~5.9--5.11]{BaiSilverstein2010}), for any fixed $\delta>0$ we have, with probability at least $1-N^{-D}$ for each $D>0$ and $N$ large enough, that all eigenvalues of $S_N$ lie within a $\delta/4$–neighborhood of $\operatorname{supp}(\mu_{\mathrm{MP}})$.
On this high-probability event we therefore have
\[
\operatorname{dist}(z,\sigma(S_N)) \;\ge\; \frac{\delta}{2}
\qquad\text{for all } z \in \mathcal{D}_\delta,
\]
and hence
\[
\|G(z)\|_{\mathrm{op}}
\;\le\; \frac{2}{\delta}
\qquad\text{for all } z\in\mathcal{D}_\delta.
\]

Differentiating the resolvent with respect to $z$ yields $\partial_z G(z) = (S_N - zI_N)^{-2} = G(z)^2$.
Thus, on the same event,
\[
\|\partial_z G(z)\|_{\mathrm{op}}
\le \|G(z)\|_{\mathrm{op}}^2
\le \frac{4}{\delta^2},
\qquad z\in\mathcal{D}_\delta.
\]
Consequently, for any unit vectors $a,b$, the map
\[
z \longmapsto a^\top G(z)b
\]
is Lipschitz continuous on $\mathcal{D}_\delta$ with Lipschitz constant bounded by $4/\delta^2$.

\emph{(b) $\varepsilon$–net in $z$ and union bound over $\mathcal{V}$.}
Let $\varepsilon>0$ be a small mesh size to be chosen later.
By construction, the domain is compact and has bounded area. Thus, we can construct a finite $\varepsilon$–net $\mathcal{Z}\subset\mathcal{D}_\delta$ with cardinality
\[
|\mathcal{Z}| \;\lesssim\; \varepsilon^{-2}.
\]
For each $z\in\mathcal{D}_\delta$ there exists a grid point $z_\star\in\mathcal{Z}$ with $|z-z_\star|\le\varepsilon$.
By the Lipschitz bound derived in Step~IV(a), we have
\[
\big|a^\top G(z)b - a^\top G(z_\star)b\big|
\;\le\; \frac{4}{\delta^2}\,\varepsilon
\qquad\text{for all unit }a,b.
\]
The limiting Stieltjes transform $m_{\mathrm{MP}}(z)$ is also Lipschitz on this domain.

Now fix $(a,b)\in\mathcal{V}\times\mathcal{V}$.
For each grid point $z_\star\in\mathcal{Z}$, the bounds from Steps~I–III imply
\[
\big|a^\top G(z_\star)b - m_{\mathrm{MP}}(z_\star)\,\langle a,b\rangle\big|
\;\le\; C\sqrt{\frac{\log N}{N}}
\]
with probability at least $1-N^{-K}$ (where $K$ can be made arbitrarily large by adjusting the constant in the concentration step).
Taking a union bound over all $z_\star\in\mathcal{Z}$ and all $(a,b)\in\mathcal{V}\times\mathcal{V}$ multiplies the failure probability by at most
\[
|\mathcal{Z}|\cdot|\mathcal{V}|^2
\;\lesssim\; \varepsilon^{-2}\,N^{2C_0}.
\]
Choosing
\[
\varepsilon := \sqrt{\frac{\log N}{N}},
\]
we have $|\mathcal{Z}|\lesssim N$, so the total number of events is polynomial in $N$. By choosing the concentration constant sufficiently large, we ensure the overall failure probability is bounded by $N^{-D}$.
On the intersection of these high-probability events and the rigidity event from part~(a), we combine the discrete bound with the Lipschitz approximation to obtain
\[
\sup_{z \in \mathcal{D}_\delta}
\max_{a,b\in\mathcal{V}}
\big|a^\top G(z)b - m_{\mathrm{MP}}(z)\,\langle a,b\rangle\big|
\;\le\; C\sqrt{\frac{\log N}{N}}
\]
for all $N$ large enough, which proves \eqref{eq:isotropic-global}.

\medskip
Collecting Steps~I–IV completes the proof.
\end{proof}

\begin{remark}
Theorem~\ref{thm:mp-global-law} establishes the universality of the isotropic Marchenko--Pastur law on a macroscopic scale.\footnote{The term ``macroscopic'' refers to the fact that the spectral parameter $z$ remains at a fixed distance $\delta$ from the spectrum. In contrast, ``local laws'' investigate the regime where $\Im z \asymp N^{-1}$, probing the spectrum at the scale of individual eigenvalue spacing.} While the proof relies on the Green Function Comparison Theorem, which necessitates the matching of the first four moments with a Gaussian ensemble to eliminate lower-order error terms, it is worth noting that the limiting measure $\mu_{\mathrm{MP}}$ depends only on the first two moments. The higher moments typically influence the subleading fluctuation terms (central limit theorems for linear statistics) rather than the first-order deterministic limit derived here.

Furthermore, the isotropic bound \eqref{eq:isotropic-global} carries significant implications for the geometry of the eigenvectors. The fact that the resolvent behaves like a scalar multiple of the identity, $G(z) \approx m_{\mathrm{MP}}(z) I_N$, implies a strong form of \emph{eigenvector delocalization}. Specifically, it suggests that the eigenvectors of $S_N$ are approximately uniformly distributed on the unit sphere $\mathbb{S}^{N-1}$, showing no preference for any specific deterministic direction $a$ or $b$.\footnote{Rigorous estimates on eigenvector delocalization (e.g., $\sup_i \|u_i\|_\infty \lesssim N^{-1/2}$ up to logarithmic factors) are usually derived by applying isotropic local laws similar to \eqref{eq:isotropic-global} but extended down to the microscopic spectral scale $\Im z \sim N^{-1}$.}
Finally, we observe that the error term $O(\sqrt{\log N/N})$ is largely dictated by the union bound over the $\varepsilon$--net; for a single fixed $z$ and fixed vectors, the fluctuations are of order $O(N^{-1/2})$.
\end{remark}

\begin{lemma}[Uniform quantitative decoupling]\label{lem:decoupling}
Let $Z\in\mathbb{R}^{N\times M_{\mathrm{round}}}$ satisfy the standing
assumptions under which the isotropic local law
Theorem~\ref{thm:mp-global-law} holds, with sample covariance
$S_0$ and resolvent $G(z)=(S_0-zI_N)^{-1}$.
Fix $\delta\in(0,1)$ and let $\mathcal D_\delta$ be the spectral domain
from \eqref{eq:Ddelta}.

Let $U=[u_1,\dots,u_K]\in\mathbb{R}^{N\times K}$ have unit--norm
columns $u_\mu\in\mathbb{S}^{N-1}$, and write
\[
\Gamma:=U^\top U\in\mathbb{R}^{K\times K},\qquad
\Gamma_{\mu\nu}=\langle u_\mu,u_\nu\rangle.
\]
Define
\begin{equation}\label{eq:geom-U-new}
\varepsilon_{\mathrm{orth}}
:=\max_{\mu\neq\nu}|\langle u_\mu,u_\nu\rangle|\in[0,1),
\qquad
M(z):=U^\top G(z)\,U\in\mathbb{R}^{K\times K}.
\end{equation}
Assume that $K=K(N)$ satisfies $K^2\le N^{C_0}$ for some fixed $C_0>0$
(for example, $K$ fixed or $K\lesssim\log N$).

Then there exists a constant $C_1=C_1(q,\delta,B,C_0,R)<\infty$ such that,
for every $D>0$, with probability at least $1-N^{-D}$ and for all $N$
large enough,
\begin{equation}\label{eq:decoupling-master-rig}
\sup_{z\in\mathcal D_\delta}
\bigl\|M(z)-m_{\mathrm{MP}}(z)\,\Gamma\bigr\|_{\mathrm{op}}
\;\le\;
C_1\,K\,\sqrt{\frac{\log N}{N}}.
\end{equation}
Consequently,
\begin{equation}\label{eq:decoupling-master-I-rig}
\sup_{z\in\mathcal D_\delta}
\bigl\|M(z)-m_{\mathrm{MP}}(z)I_K\bigr\|_{\mathrm{op}}
\;\le\;
|m_{\mathrm{MP}}(z)|\,\|\Gamma-I_K\|_{\mathrm{op}}
+ C_1\,K\,\sqrt{\frac{\log N}{N}}.
\end{equation}
Moreover, by Gershgorin’s theorem,
\begin{equation}\label{eq:decoupling-geom-rig}
\|\Gamma-I_K\|_{\mathrm{op}}\le (K-1)\,\varepsilon_{\mathrm{orth}}
\quad\Longrightarrow\quad
\sup_{z\in\mathcal D_\delta}
\bigl\|M(z)-m_{\mathrm{MP}}(z)I_K\bigr\|_{\mathrm{op}}
\;\le\; C_0'\,K\,\varepsilon_{\mathrm{orth}}
+ C_1\,K\,\sqrt{\frac{\log N}{N}},
\end{equation}
for some $C_0'=C_0'(q,\delta)$.

In particular:
\begin{itemize}
\item[(i)] If $K$ is fixed (independent of $N$), the right-hand side is
$O\big(\varepsilon_{\mathrm{orth}}+\sqrt{\log N/N}\big)$.
\item[(ii)] If $K=K(N)$ and $K\sqrt{\log N}/\sqrt{N}\to0$
(e.g.\ $K=o(\sqrt{N/\log N})$, in particular $K\lesssim\log N$), then
the right-hand side vanishes as $N\to\infty$.
\end{itemize}
\end{lemma}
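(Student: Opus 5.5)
The plan is to reduce the matrix statement to the entrywise isotropic law of Theorem~\ref{thm:mp-global-law}, applied to the finite family of unit vectors $\mathcal V:=\{u_1,\dots,u_K\}$. Its size satisfies $|\mathcal V|=K\le N^{C_0/2}$ because $K^2\le N^{C_0}$, so the theorem applies with exponent $C_0/2$. The entries of $M(z)$ are exactly the bilinear forms $M_{\mu\nu}(z)=u_\mu^\top G(z)u_\nu$. The target entries are $m_{\mathrm{MP}}(z)\Gamma_{\mu\nu}=m_{\mathrm{MP}}(z)\langle u_\mu,u_\nu\rangle$.

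Theorem~\ref{thm:mp-global-law} therefore gives one event, of probability at least $1-N^{-D}$, on which
\[
\sup_{z\in\mathcal D_\delta}\max_{\mu,\nu}\bigl|M_{\mu\nu}(z)-m_{\mathrm{MP}}(z)\Gamma_{\mu\nu}\bigr|\le C\sqrt{\log N/N}.
\]
No further union bound is needed: the supremum over $z$ and the maximum over pairs are already inside the theorem. On this event we pass from entries to operator norm with the crude bound $\|A\|_{\mathrm{op}}\le\|A\|_F\le K\max_{\mu\nu}|A_{\mu\nu}|$ for a $K\times K$ matrix $A$. Setting $C_1:=C$ gives \eqref{eq:decoupling-master-rig}, uniformly in $z$.

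Next, \eqref{eq:decoupling-master-I-rig} follows from the triangle inequality, using $M-m_{\mathrm{MP}}I_K=(M-m_{\mathrm{MP}}\Gamma)+m_{\mathrm{MP}}(\Gamma-I_K)$. For \eqref{eq:decoupling-geom-rig}, note that $\Gamma-I_K$ is symmetric with zero diagonal, since the $u_\mu$ have unit norm. Gershgorin's theorem, or equivalently the bound $\|A\|_{\mathrm{op}}\le\max_\mu\sum_\nu|A_{\mu\nu}|$ for symmetric $A$, then yields $\|\Gamma-I_K\|_{\mathrm{op}}\le(K-1)\varepsilon_{\mathrm{orth}}$.

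It remains to bound $|m_{\mathrm{MP}}(z)|$ uniformly on $\mathcal D_\delta$. Writing $m_{\mathrm{MP}}(z)=\int(x-z)^{-1}\,d\mu_{\mathrm{MP}}(x)$, the condition $\operatorname{dist}(z,\operatorname{supp}\mu_{\mathrm{MP}})\ge\delta$ gives $|m_{\mathrm{MP}}(z)|\le\delta^{-1}$. We then take $C_0'=\delta^{-1}$ and use $K-1\le K$.

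The corollaries are immediate. For fixed $K$, absorb $K$ into the constants, which gives (i). If $K\sqrt{\log N/N}\to0$, the stochastic term vanishes, which gives (ii); the geometric term then vanishes only when $K\varepsilon_{\mathrm{orth}}\to0$ as well, and I will say so explicitly.

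The only delicate step is checking that Theorem~\ref{thm:mp-global-law} really delivers a bound uniform over $z\in\mathcal D_\delta$ and over all pairs in $\mathcal V$ simultaneously, with a constant depending only on $(q,\delta,B,C_0,R)$. This holds as stated, since the theorem is formulated for arbitrary families of polynomial size. One point needs care: the pattern vectors $u_\mu$ must be deterministic, or independent of $Z$, for the theorem to apply. I would condition on the archetypes and apply the theorem conditionally.
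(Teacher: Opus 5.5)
Your proposal is correct and follows the paper's proof essentially step by step: you apply the isotropic law to $\mathcal V=\{u_1,\dots,u_K\}$, pass from entries to operator norm with a factor $K$ (you via the Frobenius norm, the paper via an $\ell_1$/Cauchy--Schwarz bilinear estimate), then use the triangle inequality and Gershgorin. Your explicit bound $|m_{\mathrm{MP}}(z)|\le\delta^{-1}$ from the Stieltjes representation, and your caveats that the $u_\mu$ must be independent of $Z$ and that item (ii) also needs $K\varepsilon_{\mathrm{orth}}\to0$, are small sharpenings of points the paper leaves implicit.
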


\begin{proof}
We first separate the “MP part’’ from the fluctuations.
For $\mu,\nu\in\{1,\dots,K\}$,
\begin{equation}\label{eq:M-entry-rig}
M_{\mu\nu}(z)
= u_\mu^\top G(z)\,u_\nu.
\end{equation}
Add and subtract $m_{\mathrm{MP}}(z)\langle u_\mu,u_\nu\rangle$:
\begin{equation}\label{eq:def-Rmunu-rig}
M_{\mu\nu}(z)
= m_{\mathrm{MP}}(z)\,\langle u_\mu,u_\nu\rangle
+ R_{\mu\nu}(z),
\qquad
R_{\mu\nu}(z)
:= u_\mu^\top\big(G(z)-m_{\mathrm{MP}}(z)I_N\big)u_\nu.
\end{equation}
In matrix form this is
\begin{equation}\label{eq:block-decomp-rig}
M(z) = m_{\mathrm{MP}}(z)\,\Gamma + R(z),\qquad
R(z):=\bigl(R_{\mu\nu}(z)\bigr)_{\mu,\nu=1}^K
= U^\top\big(G(z)-m_{\mathrm{MP}}(z)I_N\big)U.
\end{equation}
Thus, to prove \eqref{eq:decoupling-master-rig}, it suffices to bound
$\|R(z)\|_{\mathrm{op}}$ uniformly in $z\in\mathcal D_\delta$.

\medskip
\noindent
Consider the finite set of unit vectors
\[
\mathcal{V}:=\{u_1,\dots,u_K\}\subset\mathbb{S}^{N-1}.
\]
By assumption $K^2\le N^{C_0}$ for some fixed $C_0>0$, so in particular
$|\mathcal V|\le N^{C_0}$ for $N$ large enough.
We apply Theorem~\ref{thm:mp-global-law} to this set $\mathcal V$.
For any $D>0$ there exists $C=C(q,\delta,B,C_0,R)$ such that, with
probability at least $1-N^{-D}$ and for all $N$ sufficiently large,
\begin{equation}\label{eq:iso-on-Ucols}
\sup_{z\in\mathcal D_\delta}\ \max_{a,b\in\mathcal V}
\big|a^\top G(z)b - m_{\mathrm{MP}}(z)\langle a,b\rangle\big|
\;\le\; C\,\sqrt{\frac{\log N}{N}}.
\end{equation}
Since each $u_\mu$ belongs to $\mathcal{V}$, we may choose
$a=u_\mu$ and $b=u_\nu$ and rewrite \eqref{eq:iso-on-Ucols} as
\begin{equation}\label{eq:entry-bound-R}
\sup_{z\in\mathcal D_\delta}\ \max_{1\le\mu,\nu\le K}
\big|R_{\mu\nu}(z)\big|
\;\le\; C\,\sqrt{\frac{\log N}{N}}.
\end{equation}
Thus, on the high-probability event where \eqref{eq:entry-bound-R}
holds, we have a uniform entrywise bound on $R(z)$ for all
$z\in\mathcal D_\delta$.

\medskip
\noindent
We now bound the operator norm of $R(z)$ in terms of its entries.
Fix $z\in\mathcal D_\delta$ and let $(x,y)\in\mathbb{S}^{K-1}\times
\mathbb{S}^{K-1}$. Then
\[
x^\top R(z)y
= \sum_{\mu,\nu=1}^K x_\mu\,R_{\mu\nu}(z)\,y_\nu,
\]
so that
\[
|x^\top R(z)y|
\le \Big(\max_{1\le\mu,\nu\le K}|R_{\mu\nu}(z)|\Big)
    \sum_{\mu=1}^K|x_\mu|\sum_{\nu=1}^K|y_\nu|.
\]
By Cauchy–Schwarz,
\[
\sum_{\mu=1}^K|x_\mu|
\le \sqrt{K}\,\|x\|_2
= \sqrt{K},
\qquad
\sum_{\nu=1}^K|y_\nu|
\le \sqrt{K}\,\|y\|_2
= \sqrt{K},
\]
so
\begin{equation}\label{eq:bilin-bound-R}
|x^\top R(z)y|
\le K\,\max_{\mu,\nu}|R_{\mu\nu}(z)|.
\end{equation}
Taking the supremum over all unit vectors $x,y$ yields
\[
\|R(z)\|_{\mathrm{op}}
= \sup_{\|x\|=\|y\|=1} |x^\top R(z)y|
\le K\,\max_{\mu,\nu}|R_{\mu\nu}(z)|.
\]
Combining this with \eqref{eq:entry-bound-R} we find that, on the
high-probability event where \eqref{eq:entry-bound-R} holds,
\begin{equation}\label{eq:R-op-bound}
\sup_{z\in\mathcal D_\delta}\ \|R(z)\|_{\mathrm{op}}
\;\le\; C\,K\,\sqrt{\frac{\log N}{N}}.
\end{equation}
This proves \eqref{eq:decoupling-master-rig} with $C_1:=C$.

\medskip
\noindent
From the decomposition \eqref{eq:block-decomp-rig}, we have
\[
M(z)-m_{\mathrm{MP}}(z)I_K
= R(z) + m_{\mathrm{MP}}(z)\,\big(\Gamma-I_K\big).
\]
Taking operator norms and inserting \eqref{eq:R-op-bound} gives
\[
\sup_{z\in\mathcal D_\delta}
\bigl\|M(z)-m_{\mathrm{MP}}(z)I_K\bigr\|_{\mathrm{op}}
\;\le\;
\sup_{z\in\mathcal D_\delta}\|R(z)\|_{\mathrm{op}}
+ \sup_{z\in\mathcal D_\delta}|m_{\mathrm{MP}}(z)|\,
  \|\Gamma-I_K\|_{\mathrm{op}}.
\]
The first term is controlled by \eqref{eq:R-op-bound}.
The Marchenko–Pastur Stieltjes transform $m_{\mathrm{MP}}(z)$ satisfies
the usual bound $|m_{\mathrm{MP}}(z)|\le C'(q,\delta)$ on
$\mathcal D_\delta$ (this follows directly from the defining equation
and the fact that $\mathcal D_\delta$ stays a fixed distance from the
MP support).
Thus
\[
\sup_{z\in\mathcal D_\delta}
\bigl\|M(z)-m_{\mathrm{MP}}(z)I_K\bigr\|_{\mathrm{op}}
\;\le\;
C_1\,K\,\sqrt{\tfrac{\log N}{N}}
+ C_0'\,\|\Gamma-I_K\|_{\mathrm{op}},
\]
which is \eqref{eq:decoupling-master-I-rig}.

Finally, we bound $\|\Gamma-I_K\|_{\mathrm{op}}$ in terms of
$\varepsilon_{\mathrm{orth}}$.
By definition,
\[
\Gamma = U^\top U,\qquad\Gamma_{\mu\nu}=\langle u_\mu,u_\nu\rangle,
\]
so that $\Gamma_{\mu\mu}=1$ and $|\Gamma_{\mu\nu}|\le\varepsilon_{\mathrm{orth}}$
for $\mu\neq\nu$.
Set $B:=\Gamma-I_K$, which has zero diagonal and entries
$B_{\mu\nu}=\Gamma_{\mu\nu}$ for $\mu\neq\nu$.
Gershgorin’s circle theorem states that any eigenvalue $\lambda$ of $B$
satisfies
\[
|\lambda|\le R_\mu := \sum_{\nu\neq\mu}|B_{\mu\nu}|
\le \sum_{\nu\neq\mu}\varepsilon_{\mathrm{orth}}
= (K-1)\,\varepsilon_{\mathrm{orth}}
\quad\text{for some }\mu.
\]
Therefore
\[
\|\,\Gamma-I_K\,\|_{\mathrm{op}}
= \|B\|_{\mathrm{op}}
= \max_{\lambda\in\operatorname{spec}(B)}|\lambda|
\le (K-1)\,\varepsilon_{\mathrm{orth}},
\]
which implies \eqref{eq:decoupling-geom-rig} upon absorbing the factor
$K-1$ and the bound on $|m_{\mathrm{MP}}(z)|$ into $C_0'$.
This concludes the proof.
\end{proof}

\begin{remark}[Geometric interpretation and decoupling]
Lemma~\ref{lem:decoupling} provides a rigorous quantitative manifestation of the \emph{isotropic decoupling} phenomenon. In essence, it asserts that the resolvent $G(z)$ acts on any low-dimensional subspace spanned by $U$ as a scalar multiple of the identity, $m_{\mathrm{MP}}(z) I_N$, up to a geometric correction term encoded by the Gram matrix $\Gamma = U^\top U$. This implies that the correlations between distinct directions $u_\mu$ and $u_\nu$ induced by the resolvent are asymptotically negligible, provided the directions themselves are not too correlated.\footnote{This behavior is a hallmark of rotationally invariant ensembles. For non-invariant ensembles like the one considered here, the result hinges on the underlying eigenvector delocalization, which ensures that the fixed vectors $u_\mu$ do not align with the random eigenbasis of $S_N$.}

The error bound in \eqref{eq:decoupling-geom-rig} reveals a competition between two scales: the statistical fluctuation of the eigenvalues (of order $O(\sqrt{\log N/N})$) and the intrinsic geometry of the test vectors (represented by $\varepsilon_{\mathrm{orth}}$). This separation is crucial for applications where the test vectors $U$ may depend on the matrix $Z$ in a weak sense, or when performing a change of basis where approximate orthogonality must be preserved.\footnote{In many applications, such as the analysis of outliers or the stability of the resolvent expansion, $K$ represents the number of outliers or a finite rank perturbation. The condition $K \ll \sqrt{N}$ ensures that the cumulative error from the fluctuations does not overwhelm the deterministic signal.}
\end{remark}

\bigskip
\noindent
We now provide the proof of Theorem~\ref{thm:detection-thresholds}, whose statement appears in Section~\ref{subsec:bbp}. We only outline the argument, since each part is a direct consequence of standard results in the theory of spiked sample covariance matrices, together with the whitening construction and the decoupling lemma (Lemma~\ref{lem:decoupling}) proved earlier.

\begin{proof}
\textbf{(D1) Bulk confinement.}
Write $U:=\operatorname{span}\{u_\mu\}_{\mu=1}^K$ and $W:=U^\perp$, and let $P_W$ be the orthogonal projector onto $W$.
By construction,
\[
J = \Sigma^{1/2} S_0 \Sigma^{1/2},\qquad
\Sigma=\sigma^2 I_N + \sum_{\mu=1}^K\theta_\mu u_\mu u_\mu^\top.
\]
On $W$ we have $\Sigma|_W=\sigma^2 I_W$, so that
\begin{equation}\label{eq:JW-identity}
P_W J P_W = \sigma^2\,P_W S_0 P_W.
\end{equation}
Define the $(N-K)\times(N-K)$ “bulk” sample covariance
\[
S_W := P_W S_0 P_W.
\]
The rows $Z_k$ are isotropic and satisfy the tail/regularity assumptions under
which the anisotropic Marchenko--Pastur local law and eigenvalue rigidity hold
(cf.\ Theorem~\ref{thm:mp-global-law}; see also
Bai--Silverstein~\cite[Chs.~3 and~5]{BaiSilverstein2010}
and Knowles--Yin~\cite[Thms.~3.6 and~3.12]{KnowlesYin2017} for the sample--covariance case).
Since $K$ is fixed, the effective aspect ratio is
\[
q_W := \frac{N-K}{M_{\mathrm{round}}} = q+O\Big(\frac{1}{N}\Big),
\]
and the eigenvalues of $S_W$ are, with probability at least $1-\delta$,
contained in an interval of the form
\[
\big[(1-\sqrt{q_W})^2 - \tilde\varepsilon_N(\delta),\ (1+\sqrt{q_W})^2 + \tilde\varepsilon_N(\delta)\big],
\]
where $\tilde\varepsilon_N(\delta)\downarrow0$ as $N\to\infty$.

Combining this
with \eqref{eq:JW-identity} and the relation $q_W=q+O(1/N)$, and absorbing the
resulting $O(1/N)$ shift into the error, we obtain
\[
\operatorname{spec}(P_W J P_W) \subset
\big[\lambda_-(q)-\varepsilon_N(\delta),\ \lambda_+(q)+\varepsilon_N(\delta)\big]
\]
for some deterministic $\varepsilon_N(\delta)\downarrow0$, with probability at
least $1-\delta$. A choice of the form \eqref{eq:epsilonN} follows from the
quantitative rigidity bounds in
\cite{KnowlesYin2017,BaiSilverstein2010,ErdosYauYin2012}.

Since $P_W J P_W$ acts on an $(N-K)$--dimensional subspace, Cauchy interlacing
(or the min--max principle applied to the decomposition
$\mathbb R^N=W\oplus U$) implies that at most $K$ eigenvalues of $J$ can lie to the right of this interval (and at most $K$ to the left). This proves (D1).

\medskip
\noindent\textbf{(D2)--(D4) BBP threshold, outlier locations, and eigenvector alignment.}
We now identify the outliers and their eigenvectors. For clarity we proceed in
two steps: first the “decoupled” case where $\{u_\mu\}$ are orthonormal,
then the nearly-orthogonal case $\varepsilon_{\mathrm{orth}}>0$ as a small
perturbation.

\smallskip
\noindent\textit{Orthogonal spikes.}
Assume temporarily that $\langle u_\mu,u_\nu\rangle=\delta_{\mu\nu}$, so that
$U^\top U=I_K$ and $\Sigma=\sigma^2(I+UKU^\top)$ with
$K=\operatorname{diag}(\kappa_\mu)$. This is the classical spiked population
covariance model (introduced in Johnstone~\cite{Johnstone2001} and studied in
Baik--Silverstein~\cite{BaikSilverstein2006}, Paul~\cite{Paul2007}, and
Benaych-Georges--Nadakuditi~\cite{BenaychGeorges2011Eigenvalues}).
In this setting, the eigenvalues of $J$ exhibit the so-called BBP phase transition: each spike $\kappa_\mu$ produces (at most)
one outlier eigenvalue of $J$, which detaches from the upper MP edge
$\lambda_+(q)$ if and only if $\kappa_\mu>\sqrt q$, and converges to
\[
\lambda_{\mathrm{out}}(\kappa_\mu)
=\sigma^2(1+\kappa_\mu)\Big(1+\frac{q}{\kappa_\mu}\Big),
\]
see e.g.\ Baik--Silverstein~\cite[Thm~1.1]{BaikSilverstein2006} or
Benaych-Georges--Nadakuditi~\cite[Thm~2.7 and Rem.~2.11]{BenaychGeorges2011Eigenvalues}. This yields (D2) and (D3) in the
orthogonal case.

Moreover, for each spike with $\kappa_\mu>\sqrt q$, the associated sample
eigenvector $v_\mu$ aligns nontrivially with the population direction $u_\mu$.
In particular, Paul~\cite[Thm~4]{Paul2007} and Benaych-Georges--Nadakuditi~\cite[Thm~2.9 and Rem.~2.11]{BenaychGeorges2011Eigenvalues} show that
\[
|\langle v_\mu,u_\mu\rangle|^2 \xrightarrow{P}
\gamma(\kappa_\mu,q):=\frac{1-\frac{q}{\kappa_\mu^2}}{1+\frac{q}{\kappa_\mu}},
\]
while the projection of $v_\mu$ onto $u_\nu$ for $\nu\neq\mu$ vanishes in
probability. This proves (D4) in the orthogonal case.

\smallskip
\noindent\textit{Nearly-orthogonal spikes.}
We now return to the general case where the spike directions satisfy
$\varepsilon_{\mathrm{orth}}:=\max_{\mu\neq\nu}|\langle u_\mu,u_\nu\rangle|
\to 0$ as $N\to\infty$. Let $U=[u_1,\dots,u_K]$ and recall the block resolvent
$M(z):=U^\top G(z)U$ with $G(z)=(S_0-zI_N)^{-1}$.
Lemma~\ref{lem:decoupling} (uniform quantitative decoupling) shows that, for
$z$ in a fixed domain bounded away from the MP bulk,
\[
M(z) = m_{\mathrm{MP}}(z)\,\Gamma + R(z),\qquad
\|R(z)\|_{\mathrm{op}} \;\le\; C\,K\sqrt{\frac{\log N}{N}},
\]
with high probability, where $\Gamma=U^\top U$ is the Gram matrix. When
$\varepsilon_{\mathrm{orth}}$ is small, $\Gamma=I_K+E$ with
$\|E\|_{\mathrm{op}}=O(K\varepsilon_{\mathrm{orth}})$ by Gershgorin’s theorem,
and therefore
\[
M(z) = m_{\mathrm{MP}}(z)I_K + \widetilde R(z),\qquad
\|\widetilde R(z)\|_{\mathrm{op}} = O\big(\varepsilon_{\mathrm{orth}}
 + K\sqrt{\tfrac{\log N}{N}}\big).
\]

The secular equation controlling the outliers of $J$ can be derived using Woodbury's identity and the matrix determinant lemma (see the master equation for finite-rank perturbations in Benaych-Georges--Nadakuditi~\cite[Prop.~5.1]{BenaychGeorges2011Eigenvalues}). It takes the form
\[
\det\big(I_K - z C M(z)\big)=0,\qquad
C:=\operatorname{diag}\Big(\frac{\kappa_\mu}{1+\kappa_\mu}\Big),
\quad z:=\frac{\lambda}{\sigma^2}.
\]
Replacing $M(z)$ by its asymptotic approximation $m_{\mathrm{MP}}(z)I_K$ (valid for vanishing overlap $\varepsilon_{\mathrm{orth}}\to 0$ by Lemma~\ref{lem:decoupling}) yields the decoupled scalar equations
\[
1 - z c_\mu\, m_{\mathrm{MP}}(z)=0,\qquad
c_\mu:=\frac{\kappa_\mu}{1+\kappa_\mu}.
\]
The unique solutions $z=z_{\mathrm{out}}(\kappa_\mu)$ to these equations in the domain $z>(1+\sqrt q)^2$ correspond exactly to the outlier locations $\lambda_{\mathrm{out}}(\kappa_\mu)$ in \eqref{eq:outlier-loc}, subject to the BBP threshold condition $\kappa_\mu>\sqrt q$. For the derivation of these specific locations and thresholds in the spiked covariance model, see, e.g., Baik--Silverstein~\cite[Thm.~1.1]{BaikSilverstein2006} and the multiplicative perturbation example in Benaych-Georges--Nadakuditi~\cite[Thm.~2.7 and Rem.~2.11]{BenaychGeorges2011Eigenvalues}.

Since $M(z)$ differs from $m_{\mathrm{MP}}(z)I_K$ by an operator of norm
$O(\varepsilon_{\mathrm{orth}}+K\sqrt{\log N/N})$ uniformly on the spectral domain, and $K$ is fixed, a standard stability argument for roots of analytic matrix-valued equations implies that the corresponding outlier eigenvalues of $J$ differ from their orthogonal-case limits by at most $O(\varepsilon_{\mathrm{orth}}+N^{-1/2})$ in probability (see, e.g., Benaych-Georges--Nadakuditi~\cite[Lemma~6.1]{BenaychGeorges2011Eigenvalues}).
For the underlying isotropic local laws enabling this control, see Bloemendal--Erd\H{o}s--Knowles--Yau--Yin~\cite[Thm~2.12 and Intro]{BloemendalErdosKnowlesYauYin2014}.

Finally, the eigenvectors associated with these outliers vary smoothly under
such small perturbations. Since the outliers are separated from the bulk and from each other (for distinct $\kappa_\mu$), the Davis--Kahan sin$\Theta$ theorem \cite{Davis1970The} yields
\[
|\langle v_\mu,u_\mu\rangle|^2
= \gamma(\kappa_\mu,q) + O\big(\varepsilon_{\mathrm{orth}}+N^{-1/2}\big)
\]
in probability, while the overlaps with the other $u_\nu$ remain negligible.
This transfers the orthogonal-case alignment in (D4) to the nearly-orthogonal
setting, completing the proof of (D2)--(D4).
\end{proof}

\section{Technical Details}
\subsection{Additional Experimental Details and Hyperparameters}
\label{app:exp-details}

\begin{table}[H]
\centering
\caption{Main hyperparameters and typical values}
\label{tab:hparams}
\begin{tabular}{@{}llll@{}}
\toprule
Parameter & Meaning & Typical range & Example default \\
\midrule
$\beta$ & tanh gain (inverse temperature) & $[1,5]$ & $2.5$ \\
$\lambda$ & weight of $h^{(2)}$ correction & $[0.05,0.5]$ & $0.2$ \\
$h$ & coupling of external input $h^{(3)}$ & $[0,0.5]$ & $0.1$ \\
\texttt{updates} & TAM steps per round & $[40,200]$ & $50$ \\
$\tau$ & eigenvalue threshold on $J_{\mathrm{KS}}$ & $[0.3,0.7]$ & $0.5$ \\
$\rho$ & spectral-alignment threshold & $[0.5,0.8]$ & $0.6$ \\
$q_\mathrm{thr}$ & mutual-overlap pruning threshold & $[0.3,0.6]$ & $0.4$ \\
$w$ & unsup/Hebb blending & $[0,1]$ & App.~\ref{subsec:plasticity_w} \\
$\alpha_\mathrm{sharp}$ & logistic sharpening gain & $[0.3,0.7]$ & $0.5$ \\
\texttt{noise\_scale} & initial TAM noise amplitude & $[0.05,0.5]$ & $0.3$ \\
\texttt{min\_scale} & minimum TAM noise amplitude & $[0.0,0.1]$ & $0.02$\\
\texttt{prop\_iters} & pseudo-inverse propagation steps & $[50,300]$ & $200$ \\
\texttt{prop\_eps} & pseudo-inverse propagation $\varepsilon$ & $[10^{-3},10^{-2}]$ & $10^{-2}$ \\
\bottomrule
\end{tabular}
\end{table}

\subsection{Complexity}\label{subapp:complexity}
We write $L$ (clients/layers), $T$ (rounds), $K$ (archetypes), $N$ (pattern dimension),
$M_{\text{tot}}$ (total examples), $M_c=\lceil M_{\text{tot}}/(LT)\rceil$ (per-client per-round),
$S$ (seeds), $I_{\text{prop}}$ (pseudo-inverse propagation steps), $n_{\text{rand}}$ (reshuffles for the shuffle null),
$U$ (TAM updates), $s$ (blocks of candidate initializations; post-replication $\tilde K \approx sL$),
and $A\!\leq\!\texttt{max\_attempts}$ (worst-case retries in disentangling).

\begin{small}
\begin{center}
\begin{tabular}{l l}
\hline
\textbf{Block} & \textbf{Asymptotic cost} \\
\hline
True archetypes $\&$ $J^\star$ build & $O(KN)\;+\;O(K^2N + KN^2 + K^3)$ \\
Federated dataset generation & $O(M_{\text{tot}}\,N)$ \\
Round extraction (tensor slicing) & $O(L\,M_c\,N)$ \\
Unsupervised $J_{\text{unsup}}$ (client correlators) & $O(L\,M_c\,N^2)$ \\
Blend with memory (Hebbian of $\approx K$ rows) & $O(K\,N^2)$ \\
Pseudo-inverse propagation (polynomial map) & $O(I_{\text{prop}}\,N^3)$ \\
Spectral cut (eigendecomposition) & $O(N^3)$ \\
$\hat{K}$ estimation: MP vs.\ shuffle & $O(N^3)$ or $O((1{+}n_{\text{rand}})\,N^3)$ \\
Candidate init from eigenvectors & $O(s\,\hat{K}\,N)$ \\
TAM dynamics (multi-layer refinement) & $O(U\,s\,L\,N^2 + U\,s\,L^2\,N)$ \\
Pruning \& scoring (overlaps, Rayleigh) & $O(s\,L\,N^2 + (sL)^2\,N + s\,L\,K\,N)$ \\
Extra attempts (worst case) & $\times\,A$ (multiplicative) \\
Assignment (Hungarian on $\tilde K{\times}K$) & $O(\tilde K\,K\,N + \max\{\tilde K,K\}^3)$ \\
Coverage / metrics (per round) & $O(L\,M_c) \;+\; O(N^2)$ \\
Aggregating across seeds & $O(S\,T)$ (stats) $+\;O(S\,N^2)$ (saves) \\
\hline
\end{tabular}
\end{center}
\end{small}

\medskip
\noindent
Let $M_{\text{round}}=L\,M_c$ and $\tilde K\!\approx\! sL$. A single round costs
\begin{align}
\mathrm{round\_cost} \;=\;
O\Big(&
L M_c N^2 \;+\; K N^2 \;+\; I_{\text{prop}} N^3 \;+\; C_{\text{spec}} N^3 \nonumber\\
&\;+\; A\big[s \hat{K} N \;+\; U(s L N^2 + s L^2 N) \;+\; s L N^2 \;+\; (sL)^2 N \nonumber\\
&\qquad\;\;+\; \tilde K K N \;+\; \max\{\tilde K, K\}^3 \big]
\;+\; L M_c \;+\; N^2 \Big),
\end{align}
where $C_{\text{spec}}=1$ for MP-edge and $C_{\text{spec}}=1+n_{\text{rand}}$ for shuffle-based cuts.

\medskip
\noindent
Per seed:
\begin{equation}
    O\!\Big(KN^2 + K^3 + M_{\text{tot}}N + T\cdot \mathrm{round\_cost} + M_{\text{tot}}\log M_{\text{tot}}\Big).
\end{equation}

\medskip
\noindent
Across $S$ seeds:
\begin{equation}
    O\big(S\cdot \texttt{seed\_cost}\big)\quad \text{(serial)}\qquad
O\big(\lceil S/n_{\text{workers}}\rceil\cdot \texttt{seed\_cost}\big)\quad \text{(with concurrency)}.
\end{equation}

\medskip
\noindent
For the default scales ($N\!\sim\!300$, $I_{\text{prop}}\!\ll\!N$, moderate $s$, $L$), the cubic terms in $N$ dominate:
(i) the pseudo-inverse propagation $O(I_{\text{prop}}N^3)$ and (ii) the spectral step $O(C_{\text{spec}}N^3)$.
The TAM stage scales as $O(U\,s\,L\,N^2)$ and becomes competitive only when $s$ approaches its cap (pushing
$(sL)^3$ assignment to relevance). Using a top-$k$ eigensolver with warm starts reduces $O(N^3)$ to
\[
\mathrm{cost}_{\text{top-}k}\approx O\!\big(k\,N^2\,n_{\text{iter}}\big),\quad n_{\text{iter}}=O(1\text{--}5),
\]
leaving the overall pipeline effectively cubic in $N$ unless $k\!\ll\! N$ consistently. The choice of $\hat{K}$
estimator affects constants (shuffle multiplies the spectral cost by $1{+}n_{\text{rand}}$) but not the degree.

\medskip
\noindent
Dense operators $J$, $J_{\mathrm{KS}}$, $J^\star$ take $O(N^2)$ each; candidate pools store $O(s\,L\,N)$ binary states;
streaming buffers for data are $O(M_{\text{tot}}N)$ (which can be reduced by per-round streaming).

\end{document}